\documentclass[11pt]{article}

\usepackage[english]{babel}
\usepackage{graphicx}
\usepackage{enumerate}
\usepackage{amsmath}
\usepackage{color, colortbl}

\addtolength{\textwidth}{3cm}
\addtolength{\oddsidemargin}{-1.5cm}


\newcommand{\vsu}{\vspace{+.1cm} }
\newcommand{\vsd}{\vspace{+.2cm} }
\newcommand{\vst}{\vspace{+.3cm} }

\newcommand{\neto}{\noindent $\bullet$ \hspace{+.17cm}}

\newtheorem{theorem}{Théorème}[section]
\newtheorem{defin}[theorem]{Définition}
\newtheorem{lemma}[theorem]{Lemme}

\newtheorem{cor}[theorem]{Corollaire}

\newtheorem{remark}[theorem]{Remarque}
\newtheorem{prop}[theorem]{Proposition}
\newtheorem{proposition}[theorem]{Proposition}

\newenvironment{proof}{\par\noindent {\bf Preuve.} \rm}{\ ~~~$\fbox{}$}

\newcommand{\Z}{\mbox{\rm \lower0.3pt\hbox{$\angle\!\!\!$}Z}}

\newcommand{\sub}[2]{#1_{[#2]}}

\newcommand{\cad}{c'est-\`a-dire}
\newcommand{\Cad}{C'est-\`a-dire}

\newcommand{\vers}{\rightarrow}

\begin{document}

 
 

\newcounter{comptnivun}
\setcounter{comptnivun}{1}
\newcounter{comptnivdeux}
\setcounter{comptnivdeux}{1}
\newcounter{comptnivtrois}
\setcounter{comptnivtrois}{1}
\newcounter{comptnivquatre}
\setcounter{comptnivquatre}{1}

\bibliographystyle{plain}

\title{Etude des morphismes préservant les mots primitifs}

\author{Francis Wlazinski}
\date{\today}

\maketitle

\abstract{Vous trouverez dans cet article un rappel de quelques propriétés sur les mots primitifs et 
les morphismes qui les préservent que l'on peut lire dans \cite{Hsi2003,Mit1996,Mit1997}. 
Leurs démonstrations que j'ai plus ou moins remaniées sont fournies. 
Ce qui fait que cet article est presque ''self contain'' comme
disent les anglais.
J'apporte aussi ma pierre à l'édifice en donnant quelques propriétés sur les mots
primitifs mais surtout en montrant qu'un morphisme sans puissance $k (\geq 5)$
est primitif et qu'un morphisme uniforme sans puissance $k (\geq 2)$ est primitif.}

\parindent=0cm
\parskip=0.15cm

\section{\label{Preliminaires} 
         Préliminaires}

Dans la suite, $A$ et $B$ sont des alphabets \cad\ des ensembles finis non vides de symboles.

Un \textit{mot} est un élément de $A^*$ le mono\"ide libre engendré par $A$ dont l'élément neutre est le
\textit{mot vide} noté $\varepsilon$ et dont la loi de composition ".", usuellement non notée, est simplement 
la juxtaposition des symboles.
On note $A^+$ l'ensemble des mots non vides \cad\ $A^+=A^* \setminus\{\varepsilon\}$.
Dans la suite, nous ne préciserons pas toujours l'alphabet utilisé car ce sera souvent peu pertinent. 

Etant donné un mot non-vide $u = a_1\ldots a_n$ avec $a_i \in A$, la \textit{longueur}
de $u$, notée $|u|$,
est le nombre entier $n$. La longueur du mot vide est $| \varepsilon | = 0$.
L'image \textit{miroir} de $u$, notée $\tilde{u}$, est le mot
$a_{n} \ldots a_{1}$. Dans le cas particulier du mot vide,
on a $\tilde{\varepsilon}=\varepsilon$.

Si on peut écrire un mot $w$ sous la forme $pw's$ alors on dit que $p, w'$ et $s$ sont des \textit{facteurs}
de $w$, que $p$ est un \textit{préfixe} de $w$ et que $s$ est un \textit{suffixe} de $w$.
En outre, si l'un des deux mots $p$ ou $s$ est non vide, alors $w'$ est dit \textit{propre}.
Si les deux mots $p$ et $s$ sont non vides, alors $w'$ est dit \textit{interne}.
Si $\varepsilon \neq p=s \neq w$, on dit que $p$ est un \textit{bord} de $w$.
De façon plus générale, un bord d'un mot $w$ est un facteur non vide de $w$ qui est à la fois préfixe propre
et suffixe propre de $w$.

Soit $w$ un mot non vide et soient $i, j$ deux entiers tels que 
$0 \leq i-1 \leq j \leq |w|$.
On note $\sub{w}{i..j}$ le facteur de $w$ tel que $|\sub{w}{i..j}|=j-i+1$ 
et $w = p \sub{w}{i..j} s$ pour deux mots $s$ et $p$ qui vérifie $|p| = i-1$.
Remarquons que, quand $j = i - 1$, nous avons $\sub{w}{i..j} = \varepsilon$.
Lorsque $i=j$, nous notons $\sub{w}{i}$ le facteur $\sub{w}{i..i}$, \cad\ la 
$i$-{ème} lettre de $w$.
En particulier, $\sub{w}{1}$ et $\sub{w}{|w|}$ sont respectivement la première et la dernière lettre de $w$.
De façon générale, $\sub{w}{i..j}$ est le facteur de $w$ qui commence à la $i$-ème lettre de $w$ et qui finit à la
$j$-ème.

Un \textit{mot infini} (ou \textit{$\omega$-mot}) sur $A$ est une suite infinie d'éléments de $A$.
L'ensemble des mots infinis sur $A$ est noté $A^{\omega}$.

Deux mots $u$ et $v$ de $A^*$ sont dit \textit{conjugués} si on peut obtenir l'un à partir de l'autre par une permutation circulaire
\cad\  s'il existe des mots $x$ et $y$ de $A^*$ tels que $u=xy$ et $v=yx$.
On dit que $v$ est un conjugué \textit{propre} de $u$ si $x$ et $y$ sont non vides.
La relation de conjugaison est trivialement une relation d'équivalence.

Dans cet article, nous ne considérons que des puissances entières : 
Les puissances d'un mot $u$ sont définies, par récurrence, par $u^0 = \varepsilon$, and 
$u^n = u u^{n-1}$ pour tout entier $n \geq 1$. 
Pour tout entier $k \geq 2$, le cas $\varepsilon^k$ est de peu d'intérêt.
Une \textit{puissance $k$} est donc un mot $u^k$ où $u\neq \varepsilon$.
On dit qu'un mot contient une puissance $k$ si l'un de ses facteurs est sous la forme $u^k$ (avec $u\neq \varepsilon$).
Un mot est dit \textit{sans puissance $k$} si la plus grande puissance qu'il contient est strictement inférieure à $k$.

Un mot est dit \textit{primitif} ou \textit{apériodique} s'il est non vide et s'il n'est pas la puissance d'un autre
mot, i.e., $w$ est primitif si l'égalité $w=u^k$, pour un entier $k$ non nul, implique $k=1$ 
(évidemment, et/ou  $w=u$).

Pour un mot $w (\neq \varepsilon)$, le plus petit (unique) mot $t$ tel que $w=t^n$ pour un entier $n \geq 1$
est appel\'e la \textit{racine primitive} de $w$. Elle est not\'ee $\rho(w)$.



\vsu

Dans la suite de cet article, on utilisera maintes fois des propriétés élémentaires de la 
combinatoire de mots. La première étant l'incontournable théorème de Fine and Wilf~:

\begin{proposition}[Fine \& Wilf]{\rm \cite{Lot1983,Lot2002}}
\label{Fine}
Si une puissance d'un mot non vide $u$ et une puissance d'un mot non vide $v$
ont un préfixe commun de longueur sup\'erieure ou \'egale \`a $|u|+|v|-pgcd(|u|,|v|)$
alors $u$ et $v$ sont les puissances d'un m\^eme mot primitif.

Autrement dit, $u$ et $v$ ont même racine primitive.
De plus, si $|u|>|v|>0$ alors $u$ n'est pas primitif.

En outre, la borne  $|u|+|v|-pgcd(|u|,|v|)$ est optimale.

\end{proposition}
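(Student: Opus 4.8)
The plan is to reduce the statement to the classical periodicity form of the theorem and then read off the three conclusions.

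First I would set $d = pgcd(|u|,|v|)$ and let $w$ denote the common prefix, so $|w| \ge |u|+|v|-d$. Since $w$ is a prefix of a power of $u$, it is a prefix of the infinite word $u^\omega$, hence any two letters of $w$ at distance $|u|$ coincide; that is, $w$ has period $|u|$, and likewise $w$ has period $|v|$. The whole problem thus becomes the periodicity lemma: a word admitting two periods $p$ and $q$ and of length at least $p+q-pgcd(p,q)$ admits the period $pgcd(p,q)$. Granting this, $w$ has period $d$; writing $t$ for its prefix of length $d$, every letter of $w$ equals the corresponding letter of $t^\omega$. Because $|w|\ge |u|$ and $|w|\ge |v|$ (each of $|u|,|v|$ being at least $d$), the prefixes of $w$ of lengths $|u|$ and $|v|$ are exactly $u$ and $v$, so $u=t^{|u|/d}$ and $v=t^{|v|/d}$. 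Hence $\rho(u)=\rho(t)=\rho(v)$, which is the asserted common primitive root.

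The core is the periodicity lemma, which I would prove by induction on $p+q$, assuming $p\ge q$. If $p=q$ there is nothing to do. If $p>q$, note $pgcd(p-q,q)=d$ and consider the prefix $w'$ of $w$ of length $p-d=(p-q)+q-d$. It inherits the period $q$, and I claim it also has period $p-q$: for a position $i$ with $i+(p-q)\le |w'|$ one has $i+p\le p+q-d\le |w|$ (this is where the length bound is used), so applying the period $p$ and then the period $q$ shows that the two letters at distance $p-q$ are equal. By the induction hypothesis $w'$ has period $d$; finally I propagate this period to all of $w$ by stepping backwards in blocks of $q$, using that $q$ is a multiple of $d$. The delicate point, and the reason the optimal constant $p+q-d$ appears rather than $p+q$, is exactly this choice of the threshold length $p-d$ for $w'$: it is short enough that the period $p-q$ only has to be checked at the positions reachable by the two-step argument, yet long enough to trigger the induction. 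This seam bookkeeping is the main obstacle.

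For the remaining assertions: if $|u|>|v|>0$ then, since $d$ divides both, $|u|\ge |v|+d\ge 2d$, so the exponent $|u|/d$ is at least $2$ and $u=t^{|u|/d}$ is a proper power, hence not primitive. For optimality I would exhibit the standard extremal example: with $u=aba$ and $v=ab$ one has $pgcd(|u|,|v|)=1$, the words $u^\omega=abaaba\cdots$ and $v^\omega=ababab\cdots$ share the prefix $aba$ of length $3=|u|+|v|-pgcd(|u|,|v|)-1$, yet $u$ and $v$ are distinct primitive words; the analogous construction for arbitrary $p$ and $q$ then shows that the bound cannot be lowered by one.
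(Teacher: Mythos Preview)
The paper does not actually prove this proposition: it is stated with references to \cite{Lot1983,Lot2002} and immediately followed by a remark and corollaries, with no proof given. There is therefore nothing in the paper to compare your argument against.

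That said, your proposal is correct and is precisely the classical proof of the Fine--Wilf theorem as found in those references: reduce to the periodicity lemma, prove the latter by the Euclidean-style induction replacing the pair $(p,q)$ by $(p-q,q)$ on the prefix of length $p-d$, and then propagate the period $d$ back to the whole word using the period $q$ (your remark that the prefix of length $q$ is $t^{q/d}$, hence $w$ is a prefix of $t^\omega$, is the clean way to finish). The derivation of the non-primitivity of $u$ when $|u|>|v|>0$ is correct, since $d\mid |u|$ and $d\le |v|<|u|$ force $|u|/d\ge 2$. For optimality, your example $u=aba$, $v=ab$ is the standard one; the phrase ``the analogous construction for arbitrary $p$ and $q$'' is a little quick, but the well-known Sturmian/Fibonacci-type extremal words do the job, so this is not a gap so much as an omitted reference.
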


\begin{remark}

Puisque $pgcd(|u|,|v|) \leq \min(|u|,|v|)$, on a en particulier que
$|u|+|v|-pgcd(|u|,|v|)\geq \max(|u|,|v|)$.

\end{remark}

En utilisant l'image miroir, on obtient directement~:

\begin{cor}\label{CorFine1}
Si une puissance d'un mot non vide $u$ et une puissance d'un mot non vide $v$
ont un suffixe commun de longueur sup\'erieure ou \'egale \`a $|u|+|v|-pgcd(|u|,|v|)$
alors $u$ et $v$ sont les puissances d'un m\^eme mot primitif.

De plus, la borne  $|u|+|v|-pgcd(|u|,|v|)$ est optimale.

\end{cor}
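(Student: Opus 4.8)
Comme l'indique la phrase d'introduction, la preuve se ramène directement à la Proposition~\ref{Fine} via l'image miroir. Je commencerais par rappeler les propriétés élémentaires de l'opérateur $\tilde{\cdot}$ dont j'aurai besoin~: pour tout mot $w$, on a $|\tilde{w}| = |w|$~; pour tout entier $m \geq 1$, renverser $w^m = w \cdots w$ donne $\widetilde{w^m} = (\tilde{w})^m$~; un mot $s$ est un suffixe de $x$ si et seulement si $\tilde{s}$ est un préfixe de $\tilde{x}$~; enfin $w$ est primitif si et seulement si $\tilde{w}$ l'est, de sorte que $\rho(\tilde{w}) = \widetilde{\rho(w)}$.

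Munissons-nous alors de deux mots non vides $u$ et $v$ et supposons qu'une puissance $u^m$ et une puissance $v^n$ possèdent un suffixe commun de longueur $\ell \geq |u|+|v|-pgcd(|u|,|v|)$. En passant aux images miroir, $\widetilde{u^m} = (\tilde{u})^m$ et $\widetilde{v^n} = (\tilde{v})^n$ admettent un préfixe commun de même longueur $\ell$. Comme $|\tilde{u}| = |u|$ et $|\tilde{v}| = |v|$, ce nombre $\ell$ vérifie $\ell \geq |\tilde{u}|+|\tilde{v}|-pgcd(|\tilde{u}|,|\tilde{v}|)$. La Proposition~\ref{Fine} appliquée à $\tilde{u}$ et $\tilde{v}$ fournit alors $\rho(\tilde{u}) = \rho(\tilde{v})$, c'est-à-dire $\widetilde{\rho(u)} = \widetilde{\rho(v)}$, donc $\rho(u) = \rho(v)$~: $u$ et $v$ sont bien les puissances d'un même mot primitif.

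Pour l'optimalité de la borne, j'utiliserais la même symétrie~: un couple $(u,v)$ réalisant l'optimalité pour un préfixe commun dans la Proposition~\ref{Fine} fournit, par image miroir, le couple $(\tilde{u},\tilde{v})$ réalisant l'optimalité pour un suffixe commun, la longueur critique étant invariante par $\tilde{\cdot}$. Il n'y a pas ici d'obstacle véritable~; le seul point demandant un minimum de soin est la vérification que l'opérateur miroir commute avec l'élévation à une puissance et préserve la primitivité (donc la racine primitive), ce qui rend le résultat effectivement immédiat à partir de la Proposition~\ref{Fine}.
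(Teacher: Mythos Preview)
Your argument is correct and is exactly the approach the paper indicates: the paper simply writes ``En utilisant l'image miroir, on obtient directement'' before stating the corollary, and you have faithfully unfolded those details (miroir preserves length, powers, primitivity, and exchanges prefixes with suffixes) to reduce to Proposition~\ref{Fine}, including the optimality. There is nothing to add.
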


\begin{cor}\label{CorFine2}

Pour tous les entiers non nuls $n$ et $m$, si $u^n=v^m$ alors les mots $u$ et $v$ sont les puissances d'un même mot.

\end{cor}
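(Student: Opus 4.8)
The plan is to reduce the whole statement to a single application of the Fine and Wilf theorem (Proposition~\ref{Fine}). First I would dispose of the degenerate situations: if $u = \varepsilon$ or $v = \varepsilon$, then $u^n = v^m$ forces both words to be empty (since $n, m \geq 1$), and there is nothing to prove; so from now on I assume $u \neq \varepsilon$ and $v \neq \varepsilon$.

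The key observation is that the hypothesis $u^n = v^m$ does not merely give a common prefix of a power of $u$ and a power of $v$: it makes the two powers literally equal. Hence $u^n$ and $v^m$ share a common prefix whose length is the full length $L = n|u| = m|v|$. To invoke Proposition~\ref{Fine} I then only need to check $L \geq |u| + |v| - pgcd(|u|,|v|)$.

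The one genuine computation is this inequality, and I would handle it uniformly, avoiding any case split on $n = 1$ or $m = 1$. Writing $d = pgcd(|u|,|v|)$, $|u| = d\,a$ and $|v| = d\,b$ with $pgcd(a,b) = 1$, the number $L$ is a positive common multiple of $|u|$ and $|v|$, hence $L \geq \text{lcm}(|u|,|v|) = d\,a\,b$. It then suffices to verify $d\,a\,b \geq d\,a + d\,b - d$, i.e. $ab - a - b + 1 = (a-1)(b-1) \geq 0$, which holds because $a, b \geq 1$. Thus $L \geq |u| + |v| - pgcd(|u|,|v|)$, as required.

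With the length condition established, Proposition~\ref{Fine} applies directly and yields that $u$ and $v$ are powers of one and the same primitive word (equivalently $\rho(u) = \rho(v)$), which is exactly the assertion. I expect no real obstacle: the only point requiring care is the length inequality, where the naive route is to argue $L \geq 2\max(|u|,|v|) \geq |u|+|v|$ when $n,m \geq 2$ and to treat $n=1$ (resp. $m=1$) as a separate trivial case. The $\text{lcm}$ bound above is what makes all cases simultaneous and is therefore the cleaner path.
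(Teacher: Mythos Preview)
Your argument is correct and follows exactly the route the paper intends: the corollary is stated immediately after Proposition~\ref{Fine} without any separate proof, so the paper's ``proof'' is precisely the direct application of Fine and Wilf that you spell out. Your use of the $\mathrm{lcm}$ bound to verify the length condition $(a-1)(b-1)\geq 0$ in one stroke is a clean way to avoid the $n=1$ or $m=1$ case split, but there is no methodological difference from the paper.
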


\begin{remark}

Cela implique que, si $w=u^n$ avec $n \geq 1$, alors $u$ est une puissance
de $\rho(w)$ la racine primitive de $w$.

\end{remark}

\begin{cor}{\rm \cite{Ker1986}}
\label{Kera} 
Soient $x$ et $y$ deux mots.
Si une puissance de $x$ et une puissance de $y$ ont un facteur commun de longueur supérieure ou égale
à $|x|+|y|-gcd(|x|,|y|)$ alors il existe deux mots $t_1$ et $t_2$ tels que 
$x$ soit une puissance de $t_1t_2$ et $y$  soit une puissance de  $t_2t_1$ 
avec $t_1t_2$ et $t_2t_1$ des mots primitifs.
De plus, si $|x|>|y|$ alors $x$ n'est pas primitif.

\end{cor}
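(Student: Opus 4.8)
The plan is to reduce the statement about a common \emph{factor} to the Fine--Wilf statement about a common \emph{prefix}, by passing to suitable conjugates of $x$ and $y$. Write $w$ for the common factor, so that $w$ is a factor of some $x^p$ and of some $y^q$ and $|w| \ge |x|+|y|-\gcd(|x|,|y|)$; note that by the Remark following Proposition~\ref{Fine} this already forces $|w|\ge\max(|x|,|y|)$. Since $x^p$ is a prefix of the infinite word $x^\omega$, the word $w$ is a factor of $x^\omega$, and any factor of $x^\omega$ of length exactly $|x|$ is a conjugate of $x$. I would therefore let $X$ be the prefix of $w$ of length $|x|$: it is a conjugate of $x$, and since $w$ occurs in $x^\omega$ at the position that produces $X$, the whole of $w$ is a prefix of $X^\omega$. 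Defining $Y$ symmetrically as the length-$|y|$ prefix of $w$, I get that $Y$ is a conjugate of $y$ and that $w$ is a prefix of $Y^\omega$.

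At this point $w$ is a common prefix of a (large) power of $X$ and of a power of $Y$, and $|w|\ge |x|+|y|-\gcd(|x|,|y|)=|X|+|Y|-\gcd(|X|,|Y|)$ because conjugation preserves lengths. Proposition~\ref{Fine} then yields that $X$ and $Y$ have the same primitive root $t$, say $X=t^a$ and $Y=t^b$ with $t$ primitive and $a,b\ge 1$. Next I would transport this back from $X,Y$ to $x,y$ using the elementary fact that the conjugates of $t^a$ are exactly the words $s^a$ where $s$ ranges over the conjugates of $t$: writing $t=t_1t_2$, one checks directly that $t_2(t_1t_2)^{a-1}t_1=(t_2t_1)^a$, so a rotation of $t^a$ is a power of a rotation of $t$, and the latter is primitive since a conjugate of a primitive word is primitive.

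Applying this to $X=t^a$ (a conjugate of $x$) gives $x=\rho(x)^a$ with $\rho(x)$ a conjugate of $t$, and likewise $y=\rho(y)^b$ with $\rho(y)$ a conjugate of $t$. Since conjugacy is an equivalence relation, $\rho(x)$ and $\rho(y)$ are conjugate to each other, so by the very definition of conjugacy there exist words $t_1,t_2$ with $\rho(x)=t_1t_2$ and $\rho(y)=t_2t_1$; these are primitive, and $x=(t_1t_2)^a$, $y=(t_2t_1)^b$, which is exactly the asserted conclusion. Finally, for the last claim, suppose $|x|>|y|$. Then $|X|>|Y|>0$, so the second part of Proposition~\ref{Fine} gives that $X$ is not primitive, equivalently $a\ge 2$ (this also follows from $a|t|=|x|>|y|=b|t|\ge|t|$); hence $x=(t_1t_2)^a$ with $a\ge 2$ is not primitive.

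I expect the only delicate point to be the bookkeeping of the first step: carefully justifying that an occurrence of $w$ inside $x^\omega$ makes $w$ a genuine prefix of the power of the correct conjugate $X$, and that the Fine--Wilf bound is preserved verbatim under conjugation. Once the problem is reduced to common prefixes, everything else is a direct invocation of Proposition~\ref{Fine} together with routine conjugacy arithmetic.
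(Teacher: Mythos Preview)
The paper does not supply its own proof of this corollary: it is simply cited from \cite{Ker1986} and placed among the consequences of Fine and Wilf (Proposition~\ref{Fine}). Your reduction---passing to the conjugates $X$ and $Y$ of $x$ and $y$ given by the length-$|x|$ and length-$|y|$ prefixes of the common factor $w$, so that the common factor becomes a common prefix to which Proposition~\ref{Fine} applies, and then transporting the shared primitive root back via the conjugacy arithmetic of Proposition~\ref{propconjracprim} (or your direct computation $t_2(t_1t_2)^{a-1}t_1=(t_2t_1)^a$)---is exactly the standard derivation and is correct; this is presumably why the paper records the statement as a corollary without further argument.
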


Les solutions des équations élémentaires sur les mots rencontrées fréquemment sont données par la 
proposition suivante~:

\begin{proposition}{\rm \cite{Lot1983}}
\label{Lothaire} 

Soient $u,v,w$ trois mots sur un alphabet $A$.
\begin{enumerate}
\topsep0cm
\itemsep0cm
\item \label{Lotcase1}
Si $vu=uw$ et $v \neq \varepsilon$, 
alors il existe deux mots $r$ et 
$s$ sur $A$, et un entier $n$ tels que $u=r(sr)^{n}$, $v=rs$
et $w=sr$.

\item \label{Lotcase2}
Si $vu=uv$, on dit que les mots $u$ et $v$ \textit{commutent}, et alors il existe un mot (primitif) 
$w$ sur $A$, et deux entiers
$n$ et $p$ tels que $u=w^{n}$ et $v=w^{p}$.

\item \label{Lotcase3}
Si $uvw=wvu$ et $(u,v) \neq (\varepsilon,\varepsilon)$
alors il existe deux mots $t_{1}$,  $t_{2}$ sur $A$
et trois nombres entiers $n$,  $p$,  $q$ tels que $u=(t_{1}t_{2}) ^ {n}t_{1}$,
$v=(t_{2}t_{1})^{p}t_{2}$ et $w=(t_{1}t_{2})^{q}t_{1}$.

\end{enumerate}
\end{proposition}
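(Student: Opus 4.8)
The plan is to prove the three cases in order, bootstrapping the later ones from the earlier ones and from Fine \& Wilf (Proposition \ref{Fine}).

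\textbf{Case \ref{Lotcase1}.} First I would compare lengths in $vu=uw$ to get $|v|=|w|$, so $w\neq\varepsilon$. The key observation is that the identity iterates: an immediate induction gives $v^{k}u=uw^{k}$ for every $k\geq 0$, since $v^{k+1}u=v(uw^{k})=(vu)w^{k}=uw^{k+1}$. Taking $k$ large enough that $|v^{k}|\geq|u|$, the equality $v^{k}u=uw^{k}$ forces $u$ to be a prefix of $v^{\omega}=vvv\cdots$. I then write $|u|=n|v|+|r|$ with $0\le|r|<|v|$, so $u=v^{n}r$ where $r$ is the prefix of $v$ of length $|u|\bmod|v|$; setting $v=rs$ gives $u=(rs)^{n}r=r(sr)^{n}$ and $v=rs$. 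Finally $w$ is the suffix of length $|v|$ of $vu=(rs)^{n+1}r=rs\,r(sr)^{n}$, which ends in $sr$, so $w=sr$, as required.

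\textbf{Case \ref{Lotcase2}.} This is the specialisation $w=v$, but it is cleanest to deduce it directly from Proposition \ref{Fine}. From $uv=vu$ an easy induction gives $(uv)^{k}=u^{k}v^{k}$, whence $u^{k}v^{k}=(uv)^{k}=(vu)^{k}=v^{k}u^{k}$. For $k$ large the single word $u^{k}v^{k}=v^{k}u^{k}$ has both $u^{k}$ and $v^{k}$ as prefixes, so a power of $u$ and a power of $v$ share a common prefix of length $\geq|u|+|v|$, well above the bound $|u|+|v|-pgcd(|u|,|v|)$. Proposition \ref{Fine} then yields a common primitive root $t$ with $u=t^{n}$ and $v=t^{p}$; the degenerate cases $u=\varepsilon$ or $v=\varepsilon$ are trivial.

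\textbf{Case \ref{Lotcase3}.} Here I would exploit the symmetry of $uvw=wvu$ under exchanging $u$ and $w$ to assume $|u|\leq|w|$, and dispose of $u=\varepsilon$ separately (then $v\neq\varepsilon$ and the equation reads $vw=wv$, handled by Case \ref{Lotcase2}). For $u\neq\varepsilon$, since $u$ is a prefix of $w$ I write $w=uw_{1}$; substituting and cancelling the leading $u$ turns $uvw=wvu$ into $(vu)w_{1}=w_{1}(vu)$, so $w_{1}$ and $vu$ commute. By Case \ref{Lotcase2} there is a primitive word $t$ with $vu=t^{a}$ and $w_{1}=t^{b}$. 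Because $u$ is a suffix of $vu=t^{a}$, I split $t=t_{2}t_{1}$ according to $|u|\bmod|t|$ so that $u=(t_{1}t_{2})^{n}t_{1}$; feeding this into $vu=(t_{2}t_{1})^{a}$ gives $v=(t_{2}t_{1})^{p}t_{2}$ with $p=a-n-1$, and then $w=uw_{1}=(t_{1}t_{2})^{n}t_{1}(t_{2}t_{1})^{b}=(t_{1}t_{2})^{n+b}t_{1}$ has the announced shape with $q=n+b$.

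The genuinely delicate point is Case \ref{Lotcase3}: the reduction to a commutation is clean, but one must then read off the two-block decomposition $t=t_{2}t_{1}$ correctly from the position at which the suffix $u$ cuts the primitive period $t$, and verify that the leftover exponents $p=a-n-1$ and $q=n+b$ stay non-negative. I would check $p\geq 0$ from the length identity $|v|=(a-n)|t|-|t_{1}|$ together with $|t_{1}|<|t|$, and treat the boundary situations $v=\varepsilon$ or $w_{1}=\varepsilon$ (again by Case \ref{Lotcase2} or by inspection), so that every exponent lands in $\N$.
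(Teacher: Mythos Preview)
The paper does not prove Proposition~\ref{Lothaire}; it is quoted from \cite{Lot1983} as a background tool, so there is no in-paper proof to compare against. Your argument is correct and follows the classical lines: Case~\ref{Lotcase1} by iterating to $v^{k}u=uw^{k}$ and reading $u$ off as a prefix of a high power of $v$; Case~\ref{Lotcase2} via Fine~\&~Wilf after observing that $u^{k}$ and $v^{k}$ are both prefixes of $u^{k}v^{k}=v^{k}u^{k}$; Case~\ref{Lotcase3} by the reduction $w=uw_{1}$, $(vu)w_{1}=w_{1}(vu)$, and then slicing the common primitive root $t$ at the position determined by $|u|\bmod|t|$.

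One small point worth making explicit in Case~\ref{Lotcase3}: when $v=\varepsilon$ (so $vu=u=t^{a}$), your rule ``$t_{1}$ is the suffix of $t$ of length $|u|\bmod|t|$'' gives $t_{1}=\varepsilon$, and then no non-negative $p$ satisfies $v=(t_{2}t_{1})^{p}t_{2}$ since $(t_{2}t_{1})^{p}t_{2}=t^{p+1}\neq\varepsilon$. In that boundary case one must instead take $t_{2}=\varepsilon$ and $t_{1}$ equal to the primitive root, which your parenthetical ``by Case~\ref{Lotcase2} or by inspection'' is meant to cover; just make sure the written-out proof actually performs this switch rather than leaving it implicit.
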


\begin{remark}\label{RemLot}
Si deux mots $x$ et $y$ non vides commutent alors $xy$ n'est pas primitif.
\end{remark}

\begin{cor}\label{NbreConjPrim}

Un mot primitif $w$ poss\`edent $|w|$ conjugu\'es diff\'erents.

\end{cor}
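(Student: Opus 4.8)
Le plan est de décrire explicitement les conjugués de $w$ comme ses décalages circulaires, puis d'exploiter la primitivité pour montrer qu'ils sont tous distincts. Posons $n = |w|$ et $w = a_1 \cdots a_n$. D'après la définition, les conjugués de $w$ sont exactement les mots de la forme $yx$ où $w = xy$. En faisant parcourir à $|x|$ les valeurs $0, 1, \ldots, n-1$, on obtient les $n$ décalages circulaires $w^{(i)} = a_{i+1} \cdots a_n a_1 \cdots a_i$, avec $w^{(0)} = w$ (le choix $|x| = n$ redonnant également $w$). L'ensemble des conjugués de $w$ est donc $\{w^{(0)}, w^{(1)}, \ldots, w^{(n-1)}\}$, qui compte \emph{au plus} $n$ éléments ; il reste à prouver que ces décalages sont deux à deux distincts.

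Pour cela, je raisonnerais par l'absurde en supposant $w^{(i)} = w^{(j)}$ pour deux entiers $0 \le i < j \le n-1$. Le décalage circulaire d'une lettre définit une bijection $\sigma$ sur les mots de longueur $n$ telle que $\sigma^i(w) = w^{(i)}$ et $\sigma^n = \mathrm{id}$. De $\sigma^i(w) = \sigma^j(w)$ on tire alors, en appliquant $\sigma^{-i}$, l'égalité $\sigma^{k}(w) = w$ avec $k = j - i$ et $0 < k < n$. Autrement dit, en écrivant $w = xy$ avec $|x| = k$, l'égalité $w^{(k)} = w$ se traduit exactement par $yx = xy$. Comme $0 < k < n$, les deux mots $x$ et $y$ sont non vides.

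Il ne reste qu'à conclure : puisque $x$ et $y$ sont deux mots non vides qui commutent, la Remarque \ref{RemLot} affirme directement que le produit $xy = w$ n'est pas primitif, ce qui contredit l'hypothèse. On pourrait tout aussi bien invoquer le cas \ref{Lotcase2} de la Proposition \ref{Lothaire}, qui fournit un mot $t$ et des entiers $m, p \geq 1$ tels que $x = t^m$ et $y = t^p$, de sorte que $w = t^{m+p}$ serait une puissance non triviale. Cette contradiction établit que les $n$ décalages sont distincts, et donc que $w$ possède exactement $|w|$ conjugués.

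L'étape que je considère comme la plus délicate est la réduction combinatoire permettant de déduire, de la coïncidence de deux décalages, l'existence d'un décalage non trivial fixant $w$ : il faut justifier soigneusement que la composition des décalages se comporte comme l'addition des indices modulo $n$ (autrement dit que $\sigma$ engendre un groupe cyclique d'ordre divisant $n$), afin de ramener proprement l'égalité $w^{(i)} = w^{(j)}$ au cas $w^{(j-i)} = w$. Une fois ce passage acquis, la Remarque \ref{RemLot} clôt immédiatement l'argument.
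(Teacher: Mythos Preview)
Your proof is correct. The paper does not spell out a proof for this corollary: it is placed immediately after Remark~\ref{RemLot} and is meant to follow directly from it (equivalently, from case~\ref{Lotcase2} of Proposition~\ref{Lothaire}). Your argument makes this implicit reasoning explicit, and the key step---reducing a coincidence $w^{(i)}=w^{(j)}$ to a commutation $xy=yx$ with $x,y$ non-empty, then invoking Remark~\ref{RemLot}---is exactly the intended one. The passage through the shift operator $\sigma$ to go from $w^{(i)}=w^{(j)}$ to $w^{(j-i)}=w$ is sound; one could also argue directly that any conjugate of $w^{(i)}$ is a conjugate of $w$, so $w^{(j)}=w^{(i)}$ forces $w$ to equal one of its own proper conjugates, but your version is perfectly fine.
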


\begin{remark}\label{rem112}

Autrement dit, un mot est primitif si ses conjugués propres lui sont tous différents.

\end{remark}

\begin{cor}\label{Corprim112}

Tous les conjugués (propres) d'un mot primitif sont primitifs.

\end{cor}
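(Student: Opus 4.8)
The plan is to argue by contraposition: I will show that if some conjugate of $w$ fails to be primitive, then $w$ itself is not primitive. Since the trivial conjugate is $w$ itself (primitive by hypothesis), it suffices to treat proper conjugates. So I write $w = xy$ with $x, y$ non-empty, I let $v = yx$ be the corresponding proper conjugate, and I assume for contradiction that $v$ is not primitive, so that $v = t^k$ for some $k \geq 2$ and some non-empty $t = \rho(v)$, with $m := |t| = |v|/k = |w|/k$.

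The core of the argument is to locate the cut between $y$ and $x$ inside the periodic word $v = t^k$ and then to rewrite $w = xy$ as a $k$-th power. Writing $|y| = qm + p$ with $0 \leq p < m$, I split the period as $t = t_1 t_2$ with $|t_1| = p$. Reading off the prefix of $v = (t_1 t_2)^k$ of length $|y|$ gives $y = (t_1 t_2)^q t_1$, hence $x = t_2 (t_1 t_2)^{k-q-1}$. A short reassembly then yields
\[
w = xy = t_2 (t_1 t_2)^{k-q-1} (t_1 t_2)^q t_1 = t_2 (t_1 t_2)^{k-1} t_1 = (t_2 t_1)^k .
\]
Since $k \geq 2$ and $t_2 \neq \varepsilon$ (because $p < m$), the word $w$ is a proper power, which contradicts its primitivity and proves the claim.

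The routine but delicate part, and the main obstacle, is the index bookkeeping in the displayed identity, specifically the case where the cut between $y$ and $x$ does not fall on a boundary between two consecutive copies of $t$, i.e. $p \neq 0$, which is precisely what forces the splitting $t = t_1 t_2$. One should also check the boundary cases $p = 0$ (aligned cut, $t_1 = \varepsilon$) and $q = 0$, and verify the regrouping $t_2 (t_1 t_2)^{k-1} t_1 = (t_2 t_1)^k$, which is the elementary shift identity for powers and follows by a one-line induction on $k$. An essentially equivalent route would invoke Corollary \ref{NbreConjPrim}: since $v$ and $w$ lie in one conjugacy class, they have the same number of distinct conjugates; but the above computation shows that $t^k$ has at most $|t| = |w|/k < |w|$ of them, whereas the primitive word $w$ has exactly $|w|$, again a contradiction.
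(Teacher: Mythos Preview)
Your argument is correct. In the paper, this corollary is stated without proof, placed immediately after Corollary~\ref{NbreConjPrim} and Remark~\ref{rem112}; the intended justification is precisely the counting route you sketch at the end: the conjugacy class of a conjugate $v$ of $w$ coincides with that of $w$, hence contains $|w|=|v|$ distinct elements by Corollary~\ref{NbreConjPrim}, so $v$ is primitive by Remark~\ref{rem112}. Your main, more hands-on route---splitting $t=t_1t_2$ at the cut and regrouping to obtain $w=(t_2t_1)^k$---is exactly the computation the paper carries out a few lines later in the proof of Proposition~\ref{propconjracprim} (conjugate words have conjugate primitive roots), from which the present corollary is also immediate. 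So both of your approaches are in the paper; you have simply inverted their order, giving the explicit decomposition first and the counting argument second.
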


\begin{remark}

Un mot non primitif contient un bord.

\end{remark}

\begin{proposition}

Un mot est primitif si et seulement si l'un de ses conjugués est un mot sans bord. 

\end{proposition}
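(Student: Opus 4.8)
The plan is to prove the two implications separately. The reverse implication (a bordless conjugate forces primitivity) is essentially free from the results already collected, whereas exhibiting a bordless conjugate of a primitive word is the real content and the only delicate point.

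For the implication ``some conjugate of $w$ is bordless $\Rightarrow w$ is primitive'' I would argue by contraposition. Assume $w$ is not primitive. Read Corollaire~\ref{Corprim112} contrapositively: if some conjugate of $w$ were primitive, then $w$, being itself a conjugate of that word, would also be primitive; hence every conjugate of $w$ is again non-primitive. By the remark that a non-primitive word always contains a border, each conjugate of $w$ therefore has a border, so none of them is bordless. This settles that direction.

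For the converse, assume $w$ primitive, fix an arbitrary total order on the alphabet $A$, and extend it to the lexicographic order on $A^*$. By Corollaire~\ref{NbreConjPrim} the word $w$ has exactly $|w|$ pairwise distinct conjugates, so there is a unique smallest one for this order; call it $v$. Then $v$ is a conjugate of $w$, hence itself primitive by Corollaire~\ref{Corprim112}, and $v < c$ strictly for every other conjugate $c$. The claim is that this minimal conjugate $v$ is bordless, which provides the required conjugate.

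To prove the claim I would suppose, for contradiction, that $v$ admits a border $b$, and write $v = b s = p b$ with $b,s,p$ nonempty and $|s| = |p| = |v|-|b|$. Moving the prefix $p$ of $v = p b$ to the end produces the conjugate $b p$, and moving the prefix $b$ of $v = b s$ to the end produces the conjugate $s b$. Minimality gives $v = b s \le b p$; since these two words share the common prefix $b$ and have tails $s,p$ of equal length, this forces $s \le p$. Equality $s = p$ is impossible: it would give $b p = p b$, so $b$ and $p$ would commute and, by Remarque~\ref{RemLot}, $v = p b$ would fail to be primitive, contradicting the primitivity of $v$. Hence $s < p$ strictly, and as $|s| = |p|$ this yields $s b < p b = v$. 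But $s b$ is a conjugate of $v$, so $s b < v$ contradicts the minimality of $v$; therefore $v$ has no border. The crux of the whole argument is precisely this last step: from a hypothetical border one extracts two cyclic rotations whose lexicographic comparison simultaneously forces $s \le p$ and $s b < v$, the excluded equality case being exactly where primitivity is used. Everything else reduces to the facts on conjugates already available.
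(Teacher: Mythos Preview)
Your proof is correct and follows essentially the same route as the paper: for the forward implication you both take the lexicographically least conjugate and, from a hypothetical border, manufacture a strictly smaller conjugate (the equality case being ruled out by primitivity via Remarque~\ref{RemLot}). Your reverse implication is spelled out a bit more carefully than the paper's---which simply notes that a non-primitive $w=t^n$ has $t$ as a border---but the substance is identical.
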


\begin{proof}

Si $w$ n'est pas primitif, i.e., $w=t^n$ avec $t \neq \varepsilon$ et  $n\geq 2$, alors $t$ est un bord de $w$.

Si $w$ est primitif, soit $x$ le plus petit des conjugués de $w$ dans un ordre lexicographique fixé noté $\leq$. 
Si $x$ possède un bord alors on peut écrire $x = u v u$ pour deux mots $u \neq \varepsilon$ et $v$.
Le mot $z=uuv$ est aussi un conjugué de $w$.
Si $x = z$ alors $v u = u v$ et $x$ ne serait pas primitif~:~contraire aux hypothèses.
Donc $x < z$ et $v u < u v$. 
Mais alors $v u u < u v u=x$ avec $v u u$ un conjugué de $w$~:~c'est contraire à l'hypothèse du choix de $x$. 
\end{proof}

\begin{proposition}\label{propconjracprim}

Deux mots conjugués ont des racines primitives conjuguées.

\end{proposition}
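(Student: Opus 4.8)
Le plan est d'exprimer explicitement une racine primitive de $v$ en fonction de celle de $u$, de manière à lire directement sur la formule qu'elles sont conjuguées. Soient donc $u$ et $v$ deux mots conjugués, que j'écris $u=xy$ et $v=yx$. Je pose $t=\rho(u)$ et $p=|t|$, de sorte que $u=t^{n}$ pour un entier $n\geq 1$ avec $t$ primitif. Toute la difficulté est de comprendre comment la permutation circulaire agit sur l'écriture $u=t^{n}$.

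D'abord, je décompose le préfixe $x$ de $u=t^{n}$ selon la période $p$. En posant $|x|=qp+r$ avec $0\leq r<p$, le mot $x$ est le préfixe de $t^{n}$ de longueur $qp+r$, donc $x=t^{q}t'$ où $t'$ désigne le préfixe de $t$ de longueur $r$. J'écris alors $t=t't''$. Un calcul direct dans le mono\"ide libre donne $y=t''(t't'')^{n-q-1}$ (les cas limites $x=\varepsilon$ ou $x=u$ redonnant simplement $v=u$), puis
\[ v = yx = t''(t't'')^{n-q-1}(t't'')^{q}t' = t''(t't'')^{n-1}t'. \]

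L'étape clé, et le principal obstacle calculatoire, consiste à reconnaître l'identité $t''(t't'')^{n-1}t' = (t''t')^{n}$. En déroulant l'alternance des blocs, le membre de gauche s'écrit $t''\,t'\,t''\,t'\cdots t''\,t'$ avec exactement $n$ occurrences de $t''$ et $n$ occurrences de $t'$, c'est-\`a-dire $(t''t')^{n}$. On obtient ainsi $v=(t''t')^{n}$.

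Enfin, je conclus en observant que $t''t'$ est un conjugué de $t=t't''$. Comme $t$ est primitif, tous ses conjugués le sont aussi (Corollaire~\ref{Corprim112}), donc $t''t'$ est primitif. Puisque $v=(t''t')^{n}$ avec $t''t'$ primitif, l'unicité de la racine primitive (Corollaire~\ref{CorFine2} et la remarque qui le suit) force $\rho(v)=t''t'$. La racine primitive de $v$ est donc un conjugué de $\rho(u)=t$, ce qui établit la proposition.
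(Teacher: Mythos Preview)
Your proof is correct and follows essentially the same route as the paper's: decompose one conjugate as $u=t^{n}$ with $t=\rho(u)$, split the prefix $x$ along the period of $t$ to write $t=t't''$, and conclude that the other conjugate equals $(t''t')^{n}$ with $t''t'$ primitive by Corollary~\ref{Corprim112}. The paper's version is simply more terse (it does not spell out the identity $t''(t't'')^{n-1}t'=(t''t')^{n}$ nor invoke Corollary~\ref{CorFine2} for uniqueness), but the underlying argument is identical.
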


\begin{proof}

Soient $w=rs \neq \varepsilon$ et $\underline{w}=sr$ l'un de ses conjugués.
Soit $t=\rho(w)$ la racine primitive de $w$ et soit $n$ l'entier tel que $w=t^n$.
Il existe deux mots $t_1$ et $t_2$ et un entier $\alpha$ tels que
$t=t_1t_2$, $r=t^{\alpha}t_1$ et $s=t_2t^{n-\alpha-1}$.

Autrement dit, $\underline{w}=(t_2t_1)^n$. D'après le corollaire~\ref{Corprim112}, 
le mot $t_2t_1$ est primitif. On a donc que $t_2t_1$, conjugué de $t_1t_2$, est la racine primitive
de $\underline{w}$.
\end{proof}

%

\begin{lemma}{\rm \cite{Ker1986,Lec1985}}
\label{factint}

Si un mot non vide $v$ est un \textit{facteur interne} de $vv$,
\cad\ s'il existe deux mots non vides $x$ et $y$ tels que $vv=xvy$,
alors il existe un mot non vide $t$ et deux entiers $i,j \geq 1$
tels que $x=t^i$, $y=t^j$, et $v=t^{i+j}$.
\end{lemma}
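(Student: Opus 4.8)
The plan is to turn the single equation $vv=xvy$ into a commutation relation between $x$ and $y$, and then to read off the conclusion from Proposition~\ref{Lothaire}.

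First I would compare lengths. From $vv=xvy$ we get $2|v|=|x|+|v|+|y|$, hence $|x|+|y|=|v|$; since $x\neq\varepsilon$ and $y\neq\varepsilon$, this forces $1\le|x|\le|v|-1$, and in particular $|x|\le|v|$ and $|y|\le|v|$. Reading $vv=xvy$ from the left, the prefix of length $|x|$ of $vv$ is $x$; but as $|x|\le|v|$ this prefix is also the prefix of length $|x|$ of the first occurrence of $v$, so $x=\sub{v}{1..|x|}$. Symmetrically, reading from the right, $y$ is the suffix of length $|y|$ of the second occurrence of $v$, i.e.\ $y=\sub{v}{|v|-|y|+1..|v|}$.

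The key step is that these two factors tile $v$ without overlap or gap: because $|v|-|y|=|x|$, the suffix $y$ begins exactly at position $|x|+1$, whence $v=\sub{v}{1..|x|}\,\sub{v}{|x|+1..|v|}=xy$. Substituting $v=xy$ back into $vv=xvy$ gives $xyxy=x\,(xy)\,y=xxyy$; cancelling the leading $x$ and the trailing $y$ leaves $yx=xy$, so $x$ and $y$ commute. By item~\ref{Lotcase2} of Proposition~\ref{Lothaire} there is therefore a (primitive, hence non-empty) word $t$ and integers $i,j$ with $x=t^i$ and $y=t^j$; non-emptiness of $x,y$ gives $i,j\ge1$, and finally $v=xy=t^{i+j}$, as required.

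I do not expect a serious obstacle here; the only place demanding care is the tiling argument yielding $v=xy$, where the prefix description of $x$, the suffix description of $y$, and the exact count $|x|+|y|=|v|$ must be combined to be sure the two pieces meet precisely at position $|x|+1$. An alternative would be to observe that $x$ being a prefix of $v$ forces $v$ to have period $|x|$ and then to invoke Fine \& Wilf (Proposition~\ref{Fine}), but the commutation route above is shorter and entirely elementary.
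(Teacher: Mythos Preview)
Your proof is correct. The length count $|x|+|y|=|v|$ together with $x$ being a prefix and $y$ a suffix of $v$ does force $v=xy$, and the substitution $(xy)(xy)=x(xy)y$ yields $yx=xy$ after cancellation; the appeal to item~\ref{Lotcase2} of Proposition~\ref{Lothaire} finishes it cleanly.

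There is nothing to compare against: the paper does not supply its own proof of Lemma~\ref{factint} but simply cites \cite{Ker1986,Lec1985}. Your argument is a standard and entirely self-contained one, well suited to the paper's goal of being essentially ``self contained''. The alternative you mention via periods and Fine \& Wilf would also work, but the commutation route is indeed the shortest path here.
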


\begin{remark}
\label{rem110}

Autrement dit, si un mot non vide $v$ est un facteur interne de $vv$ alors $v$ n'est pas primitif.

\end{remark}

Un langage $X$ est un sous-ensemble de $A^*$.

Un langage $X$ est un \textit{code} si tout élément de $X^+$ admet une décomposition unique sur $X$.
Cela signifie que $X^*$ est un sous-mono\"ide libre de $A^*$ dont $X$ en est une base.

Un code $X \subset A^*$ est dit \textit{uniforme} si tous les mots non vides de $X$ ont la même longueur.

Un code est dit \textit{comma-free} s'il est uniforme et si aucun mot du code n'est facteur interne 
de la composition de deux mots du code.

Un code $X \subset A^*$ est dit \textit{pur} si la racine primitive de tout mot de $X^*$ appartient à $X^*$.

\vsu

Soient $A$ et $B$ deux alphabets.
Un \textit{morphisme} $f$ de $A^*$ vers $B^*$ est une application
de $A^*$ vers $B^*$ telle que $f(uv) = f(u)f(v)$ pour tous les mots $u, v$ de $A^*$.
Si l'alphabet $B$ n'a pas d'importance, on dira que $f$ est défini sur $A^*$. Notons qu'un morphisme
sur $A^*$ est entièrement déterminé par les images des lettres de $A$.
Le morphisme $f$ est \textit{uniforme} s'il existe un entier $L$ tel que $|f(a)|=L$ pour tout $a \in A$.

L'application $\epsilon : A^* \vers B^*; u \mapsto \varepsilon$ est un morphisme (uniforme de longueur $0$) 
qui n'a que peu d'intérêt pour nous. 
A partir de maintenant, nous supposerons donc que $f\neq \epsilon$ pour tout morphisme $f$ qui sera considéré.

\begin{remark}

Un morphisme $h$ de $A^*$ vers $B^*$ est injectif si et seulement si $h(A)$ est un code.

\end{remark}

Soit $k \geq 2$ un entier. 

Pour tout entier $n \geq 1$, un morphisme est dit  \textit{$n$-sans puissance $k$} ou 
 \textit{sans puissance $k$ jusqu'à $n$} si les images de tous les mots
sans puissance $k$ de $A^+$ de longueurs inférieures ou égales à $n$ par ce morphisme sont aussi sans puissance $k$.

Un morphisme $f$ est \textit{sans puissance $k$} s'il est $n$-sans puissance $k$ pour tout entier $n \geq 1$.
Autrement dit, un morphisme $f$ est \textit{sans puissance $k$} si
l'image de tout mot sans puissance $k$ est aussi sans puissance $k$.
On dit aussi que le morphisme $f$ préserve l'absence de puissance $k$.
Lorsque $k=2$ (resp. $k=3$), on parlera de morphismes sans carré (resp. sans cube).

De même, pour tout entier $n \geq 1$, un morphisme est dit \textit{$n$-primitif} 
ou \textit{primitif jusqu'à $n$} 
si les images de tous les mots primitifs 
de $A^+$ de longueurs inférieures 
ou égales à $n$ par ce morphisme sont aussi primitifs 

Un morphisme est dit \textit{primitif} 
s'il est $n$-primitif 
pour tout entier $n \geq 1$, i.e., 
un morphisme $f$ sur $A^*$ est primitif 
si l'image de tout mot primitif 
est elle aussi primitif 

Un morphisme $f$ sur $A^*$ est \textit{préfixe} (resp. \textit{suffixe})
si, pour toutes les lettres $a$ et $b$ différentes dans $A$, 
le mot $f(a)$ n'est pas un préfixe (resp. pas un suffixe) de $f(b)$.
Un morphisme $f$ sur $A^*$ est non-effaçant si $f(a)\neq \varepsilon$ pour toute lettre $a$ de $A$.
Un morphisme est \textit{bifixe} s'il est préfixe et suffixe.

Notons, qu'un morphisme préfixe (resp. suffixe) est injectif et non-effaçant.

Un morphisme $f$ de $A^*$ vers $B^*$ est un \textit{ps-morphisme} 
si les égalités
$f(a) = ps$, $f(b) = ps'$, et $f(c) = p's$ 
avec $a,b,c \in A$ (éventuellement $c = b$) et $p$, $s$, $p'$, et
$s' \in B^*$ implique $b = a$ ou $c = a$.
Notons qu'un ps-morphisme est bifixe.

Etant donné un morphisme $f$ sur $A$, le morphisme $\tilde{f}$
\textit{miroir}\index{morphisme!miroir}
de $f$ est défini par $\tilde{f}(a)=\widetilde{f(a)}$
pour toutes les lettres $a$ de $A$.
En particulier, on a $\tilde{f}(w)=\widetilde{f(\tilde{w})}$
pour tous les mots $w$ sur $A$.

Par exemple, si l'on considère le morphisme $f$ défini par
$f :\{a,b\}^* \rightarrow \{x,y,z\}^*$; 
$a \mapsto x y$ et $b \mapsto y z x$
Le morphisme miroir de $f$ est
donc défini par $\tilde{f}(a)= y x$ et $\tilde{f}(b)= x z y$. On
a, par exemple, $f(abb)=x y \, y z x \, y z x$. On peut vérifier que
$\tilde{f}(bba)= \tilde{f}(b) \tilde{f}(b)\tilde{f}(a) = x z y \, x z y \, y x$
$= \widetilde{f(abb)}=\widetilde{f(\widetilde{bba})}$.

Notons qu'un mot $w$ est primitif si et seulement si $\tilde{w}$ est primitif.
Comme conséquence directe, un morphisme $f$ est primitif si et seulement si
$\tilde{f}$ est primitif.

On étend naturellement la notion de morphisme aux mots infinis.

\vsu

Rappelons enfin deux résultats sur les morphismes sans puissance $k$.

\begin{proposition} \cite{Wla2015a,Wla2017} \label{puissancekunif}

Un morphisme uniforme sans puissance $k$ est sans puissance $k+1$ pour tout entier $k \geq 3$.

\end{proposition}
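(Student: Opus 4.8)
The plan is to argue by contradiction, reducing to a minimal counterexample and then transferring the offending power down from the image word to its preimage; the whole difficulty will be the ``synchronization'' between the period of the bad power and the block length of the uniform morphism.

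First I would record that every $k$-power-free morphism (for $k\ge 2$) is injective, so that the uniform family $f(A)$ is a code and equal blocks force equal letters. Indeed, if $f(a)=f(b)$ with $a\neq b$, then $(ab)^{k-1}$ is $k$-power-free (its largest power is the global one $(ab)^{k-1}$, and a short check rules out any other $k$-power), whereas $f\big((ab)^{k-1}\big)=f(a)^{2k-2}$ contains $f(a)^{k}$, contradicting that $f$ preserves the absence of $k$-powers. Write $L=|f(a)|$ for the common block length. Now suppose $f$ is \emph{not} $(k+1)$-power-free and choose a $(k+1)$-power-free word $w=w_1\cdots w_n$ of minimal length whose image contains a $(k+1)$-power; after replacing $u$ by $\rho(u)$ I may take $f(w)=s\,u^{k+1}\,t$ with $u$ primitive and the occurrence \emph{maximal}. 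Minimality forces $0\le|s|<L$ and $0\le|t|<L$: if $|s|\ge L$ then $f(w_1)$ lies entirely inside $s$, so $f(w_2\cdots w_n)$ already contains $u^{k+1}$ while $w_2\cdots w_n$ is still $(k+1)$-power-free, contradicting minimality, and symmetrically for $t$. Setting $P=|u|$ this gives $(k+1)P=Ln-|s|-|t|$, whence the crucial length estimate $L(n-2)<(k+1)P\le Ln$.

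The clean case is $L\mid P$ with the power starting at a block boundary ($|s|=0$): writing $P=qL$, the period-$P$ relation $F[x]=F[x+P]$ shifts by exactly $q$ whole blocks, so every pair of fully covered blocks satisfies $f(w_i)=f(w_{i+q})$, hence $w_i=w_{i+q}$ by injectivity. Since there are $k+1\ge 4$ full copies of $u$, this periodicity of the letters of $w$ runs over enough blocks to yield a factor $(w_1\cdots w_q)^{k+1}$ of $w$ — a $(k+1)$-power in $w$, contradicting the choice of $w$. So the whole proof comes down to reducing the general situation to this aligned case.

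The main obstacle is exactly that reduction: a priori $P$ need be neither a multiple of $L$ nor in phase with the block boundaries ($|s|$ may be nonzero), and one could even have $P<L$. I would resolve this with Fine and Wilf (Proposition~\ref{Fine}) together with its factor form (Corollaire~\ref{Kera}). The maximal occurrence of the \emph{primitive} word $u^{k+1}$ provides a long region of period $P$; comparing it, across a block boundary, with the conjugate copies produced by the block decomposition gives two powers sharing a common factor of length at least $|u|+L-\gcd(|u|,L)$, thanks to the estimate $L(n-2)<(k+1)P$ which guarantees the required overlap once $k+1\ge 4$. Corollaire~\ref{Kera} then forces $u$ and each block $f(w_i)$ inside the power to be powers of conjugate primitive words of a common length dividing $\gcd$, which pins $P$ to a multiple of $L$ and the power to a block-aligned position; the subcase $P<L$ is absorbed the same way, the length bound then making $n$ small and each interior block $f(w_i)$ inherit the period $P<L$, so that $f(w_i)$ is an internal factor of $f(w_i)f(w_i)$ and is non-primitive by Lemme~\ref{factint}, again contradicting injectivity. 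It is precisely this overlap requirement that forces $k\ge 3$: one needs at least four consecutive copies of $u$ so that the common factor between the two periodic patterns is long enough for Fine--Wilf to apply, after which the clean case above produces the final $(k+1)$-power in $w$ and closes the contradiction.
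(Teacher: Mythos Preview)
A preliminary remark: the present paper does not give a proof of Proposition~\ref{puissancekunif}; the result is only recalled here with a citation to \cite{Wla2015a,Wla2017}, so there is no in-paper argument to compare yours against. On its own merits, however, your sketch has two concrete gaps.

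The main one is the reduction to the ``aligned'' case. You invoke Corollaire~\ref{Kera} to compare the period-$P$ region $u^{k+1}$ with ``the conjugate copies produced by the block decomposition'', but the decomposition $f(w_1)f(w_2)\cdots f(w_n)$ is \emph{not} a power of any single word: the letters $w_i$ vary, so there is no second periodic pattern of period~$L$ to feed into Fine--Wilf. Corollaire~\ref{Kera} requires a common factor of a power of some $x$ and a power of some $y$; here one side is a power of $u$, but the other side is a concatenation of generally distinct length-$L$ blocks, not a power of anything. The conclusion that $P$ must be a multiple of $L$ (let alone that the phase $|s|$ vanishes) is therefore unsupported as written. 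The second gap is in the subcase $P<L$: you assert that an interior block $f(w_i)$, having period $P$, is an internal factor of $f(w_i)^2$ and hence non-primitive, and that this ``contradicts injectivity''. Both steps fail. A word of length $L$ with a period $P<L$ need not be an internal factor of its own square: take $u=ab$, $P=2$, and $f(w_i)=bab$, which has period~$2$, is primitive, and is not an internal factor of $babbab$. And even if some $f(w_i)$ were non-primitive, injectivity of $f$ says only that $f(a)\neq f(b)$ for $a\neq b$; it places no primitivity constraint on the images, so no contradiction arises. These two reduction steps are exactly where the substantial work of the cited papers lies, and they cannot be dispatched in a sentence.
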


\begin{proposition}\cite{Wla23:hal}\label{puissancek}

Un morphisme sans puissance $k$ est sans puissance $k+1$ pour tout entier $k \geq 5$.

\end{proposition}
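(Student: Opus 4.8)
The plan is to argue by contraposition through a minimal counterexample. Assume $f$ preserves $k$-power-freeness with $k \ge 5$ but not $(k+1)$-power-freeness, and let $m = a_1 \cdots a_n$ be a shortest $(k+1)$-power-free word such that $f(m)$ contains a $(k+1)$-power; replacing the base of that power by its primitive root (Corollary \ref{CorFine2}), write $f(m) = p\,e^{k+1}\,s$ with $e$ primitive. I would first record the elementary structural facts on which everything rests: a $k$-power-free morphism distinct from $\epsilon$ is non-erasing and injective, since erasing a letter $a$ would send the $k$-power-free word $b^{k-1}ab$ to the $k$-power $f(b)^k$; and minimality forces the occurrence of $e^{k+1}$ to begin strictly inside $f(a_1)$ and to end strictly inside $f(a_n)$, for otherwise a proper factor of $m$ would already be a counterexample. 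In particular $n \ge 2$, and all internal cut-points $|f(a_1\cdots a_i)|$ with $1 \le i \le n-1$ lie inside the $|e|$-periodic block.

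The first genuine step is the observation that $m$ itself cannot be $k$-power-free: since $f(m)$ contains $e^{k+1}$ it contains the $k$-power $e^k$, so $f(m)$ is not $k$-power-free, and as $f$ preserves $k$-power-freeness the word $m$ is not $k$-power-free either. Being $(k+1)$-power-free, $m$ therefore contains a $k$-power but no $(k+1)$-power. This yields two simultaneous sources of periodicity to exploit, one in the image and one in the source word: a long $|e|$-periodic block in $f(m)$, and a $k$-power inside $m$.

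The heart of the argument is a dichotomy on the size of $|e|$ relative to $M = \max_{a}|f(a)|$. When $|e| \ge M$, each period of $e$ is at least as long as any letter image, so consecutive cut-points inside the block are never more than one period apart; with at least six full periods available ($k+1 \ge 6$), I would run a synchronization argument, upgrading the local coincidences produced by the period $|e|$ into a single global period through Fine--Wilf (Proposition \ref{Fine}) and discarding the degenerate internal-factor configurations via Lemma \ref{factint} and Remark \ref{rem110}. The outcome is that the letter sequence $a_1 a_2 \cdots$ is forced to repeat a fixed block often enough to exhibit a $(k+1)$-power in $m$, contradicting the $(k+1)$-power-freeness of $m$. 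When $|e| < M$, the period is shorter than some letter image, so two consecutive images overlap by more than $|e|$; here I would use the word-equation solutions of Proposition \ref{Lothaire} together with the common-factor Corollary \ref{Kera} to pin down a short $k$-power-free factor $w'$ of $m$ whose image already contains a $k$-power, contradicting that $f$ preserves $k$-power-freeness.

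The main obstacle is the large-period case. Unlike the uniform setting of Proposition \ref{puissancekunif}, where equal image lengths make the cut-points automatically periodic and permit the bound $k \ge 3$, here the synchronization must be established by hand, and the careful accounting of how many full periods survive after removing the at most two boundary periods disturbed by $p$ and $s$ is precisely where the hypothesis $k \ge 5$, i.e. $k+1 \ge 6$ repetitions, is consumed. I expect the small-period case to be more routine, reducing to conjugacy relations via Proposition \ref{Lothaire}, but it still requires verifying that the extracted factor $w'$ is genuinely $k$-power-free, which uses that $m$ contains no $(k+1)$-power.
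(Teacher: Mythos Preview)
The paper does not prove this proposition at all: it is stated with the citation \cite{Wla23:hal} and used later as a black box (in the proof of Proposition~\ref{PuisskPrim}), so there is no ``paper's own proof'' to compare your attempt against. Any genuine comparison would have to be made with the cited preprint, not with the present article.

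That said, as a standalone plan your outline is only a sketch and not yet a proof. The overall architecture (minimal counterexample, the observation that $m$ must contain a $k$-power but no $(k+1)$-power, a dichotomy on $|e|$ versus $M=\max_a|f(a)|$) is a reasonable starting point, and the remark that the hypothesis $k\ge 5$ should be consumed in the large-period synchronization step matches the known phenomenon that $k=4$ fails. But the two branches are left as intentions rather than arguments: in the large-period case you assert that the cut-point pattern is forced to repeat enough to produce a $(k+1)$-power in $m$, yet you give no mechanism for why the cut-points themselves, and not merely their residues modulo $|e|$, must align; Fine--Wilf and Lemma~\ref{factint} control periodicity of \emph{words}, not of the sequence of integer cut positions, and bridging that gap is the actual work. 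In the small-period case, saying you will ``pin down a short $k$-power-free factor $w'$ whose image already contains a $k$-power'' is a restatement of the goal, not a construction. If you intend this as more than a roadmap, both branches need explicit combinatorics.
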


\section{Quelques équations}

\begin{lemma}\label{LemEqua}  {\rm{(Lemme 9 dans~\cite{Mit1997})}}

Soient $y$, $y'$, $z$ et $z'$ quatre mots (les hypothèses $|y|=|y'|$ et $|z|=|z'|$ ne sont que des conséquences des égalités).

Si $zy=y'z$ et $yz=z'y'$ alors $y=y'$.

\end{lemma}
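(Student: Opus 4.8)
The plan is to argue by induction on $|y|$, first recording that comparing lengths in the two hypotheses forces $|y| = |y'|$ and $|z| = |z'|$ (these are exactly the parenthetical remarks in the statement). Two situations are then immediate base cases: if $y = \varepsilon$ then $y' = \varepsilon$ as well, and if $z = \varepsilon$ then $zy = y'z$ reads $y = y'$ outright. So I may assume $y \neq \varepsilon$ and $z \neq \varepsilon$ and split on the relative sizes of $z$ and $y$.

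The easy case is $|z| > |y|$, which I would settle without any induction by a pure suffix comparison. Since $|y| = |y'| < |z|$, the length-$|y|$ suffix of $zy$ is $y$, while the length-$|y|$ suffix of $y'z$ is the length-$|y|$ suffix of $z$; the first equation then gives that $y$ equals this suffix of $z$. Reading the second equation $yz = z'y'$ on its length-$|y'|$ suffix the same way, $y'$ equals that same length-$|y|$ suffix of $z$. Hence $y = y'$.

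The substantial case is $1 \le |z| \le |y|$, and this is where the main work lies. Here $z$ is a prefix of $zy = y'z$, and since $|z| \le |y'|$ it is in fact a prefix of $y'$, so I can write $y' = z y_2$; symmetrically, from $z'y' = yz$ the word $z'$ is a prefix of $y$, so $y = z' y_1$, with $|y_1| = |y_2| = |y| - |z|$. Substituting these into the two hypotheses and left-cancelling $z$ (respectively $z'$) turns them into $z' y_1 = y_2 z$ and $y_1 z = z y_2$. The decisive observation is that this pair is again an instance of the lemma's system, now with $(y,y',z,z')$ replaced by $(y_2,y_1,z,z')$ and with strictly smaller first length $|y_2| = |y| - |z| < |y|$. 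The induction hypothesis therefore yields $y_1 = y_2$; feeding this back into the two cancelled equations gives $z y_2 = y_2 z$ and then $z' = z$ (cancelling $y_2$ on the right in $z' y_2 = y_2 z = z y_2$), whence $y = z' y_1 = z y_2 = y'$.

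The only delicate point to verify carefully, and the place I expect to be the main obstacle, is the self-similarity claim in this last case: that after the substitutions and cancellations the reduced equations genuinely have the exact shape of the original hypotheses (so that the induction applies), and that recovering $z = z'$ is both necessary and available before concluding $y = y'$. Everything else is elementary prefix/suffix bookkeeping. As an alternative to the length induction one could instead apply the conjugacy normal form of Proposition~\ref{Lothaire}(\ref{Lotcase1}) to the equation $y_1 z = z y_2$ and then analyse $z' y_1 = y_2 z$, but this essentially re-runs the same Euclidean descent, so the induction on $|y|$ keeps the argument shortest and self-contained.
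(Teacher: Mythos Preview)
Your proof is correct and follows essentially the same Euclidean descent as the paper: the paper also disposes of $z=\varepsilon$ and of $|y|\le|z|$ by the suffix argument, and in the remaining case peels off $z$ (resp.\ $z'$) as a prefix of $y'$ (resp.\ $y$) to obtain the same reduced system with strictly shorter $y$-words, iterating until the short case applies and then climbing back. The only cosmetic difference is that the paper records both decompositions $y'=y_1z=zy_1'$ and $y=y_1'z=z'y_1$, so that once $y_1=y_1'$ it concludes $y=y_1'z=y_1z=y'$ directly, whereas you take the small detour through $z=z'$; this is harmless.
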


\begin{proof}

Si $|z|=|z'|=0$, l'égalité est triviale. On suppose donc les mots $z$ et $z'$ non vides.

Si $|y| \leq |z|$ alors $y$ et $y'$ sont deux suffixes de $z$ et l'égalité est évidente.

Si $|y|>|z|$ alors il existe deux mots $y_1$ et $y_1'$ tels que 
$y'=y_1z=zy_1'$ et $y=y_1'z=z'y_1$ avec $|y_1|=|y_1'| < |y|$. 

On peut appliquer le même raisonnement avec les mots $y_1$ et $y_1'$ à la place de $y$ et $y'$.
De la même façon, on obtient soit $y_1'=y_1$ soit deux nouveaux mots $y_2$ et $y_2'$.
On continue ainsi de suite. On obtient deux suites strictement décroissantes (en longueur) de mots. 
Il existe nécessairement un entier $p \geq 1$ tel que 
$|y_p|<|z|$ et on finit par avoir obligatoirement $y_p=y'_p$.

De $y_{p-1}'=y_pz$ et $y_{p-1}=y_p'z=y_pz$, on tire que $y_{p-1}=y'_{p-1}$. Et on "remonte".
\end{proof}

\begin{remark}

Le lemme~\ref{LemEqua} peut s'obtenir aussi comme une conséquence de propriétés de la proposition~\ref{Lothaire}.
En effet, de $y'z=zy$, on obtient l'existence de deux mots $r$ et
$s$ sur $A$, et un entier $n$ tels que $z=r(sr)^{n}$, $y'=rs$
et $y=sr$.

On obtient alors que $yz(=z'y')$ finit par $y'(=rs)$ et  soit par 
$sr$ si $n \geq 1$ soit par $rr$ si $n=0$. Dans le premier cas, on obtient directement $y=y'$.

Dans le deuxième cas,  $yz=s(rr)=(z'r)s =z'y'$. Toujours en utilisant la propriété~\ref{Lotcase1} de la proposition~\ref{Lothaire},  il existe de deux mots $R$ et
$S$ sur $A$, et un entier $N$ tels que $s=R(SR)^{N}$, $z'r=RS$
et $rr=SR$. De plus, on a $|z'|=|r|$.

Si $|R|=|S|$ alors $r=R=S=z'$.

Si $|R|<|S|$ alors $|r|=\frac{1}{2}|SR|<|S|$ et $r$, suffixe de $S$, est facteur interne de $rr$. 
D'après le lemme~\ref{factint}, il existe
donc un mot non vide $t$ et des entiers non nuls $\alpha$ et $\beta$ tels que $r=t^{\alpha+\beta}$, 
$R=t^\beta$ et $S=t^{2\alpha+\beta}$.
Ce qui implique que $z'=r$.


Si $|R|>|S|$ alors $|r|>|S|$ et il existe deux mots non vides $r_1$ et $r_2$ tels que 
$r=r_1S=Sr_2$. En particulier, on a $|r_1|=|r_2|$.
Puisque $R$ finit par $r$ et par $r_1$, on en déduit que $r_1=r_2$ et que $z'=r_2r_1=r$.

Maintenant, puisque $yz=srr=rrs =z'y'$,  d'après la propriété~\ref{Lotcase3} de la proposition~\ref{Lothaire},
il existe deux mots $t_{1}$,  $t_{2}$ sur $A$
et trois nombres entiers $i$,  $j$,  $k$ tels que $s=(t_{1}t_{2}) ^ {i}t_{1}$,
$r=(t_{2}t_{1}) ^ {j}t_{2}$ et $r=(t_{1}t_{2}) ^ {k}t_{1}$.
On obtient soit que $r=t_{2}={t_1}$ si $j=k=0$ soit, puisque $r$ commence par $t_{2}t_{1}$ et par $t_{1}t_{2}$,
que ces deux derniers mots sont égaux sinon. Dans les deux cas,
$y=sr=(t_{1}t_{2}) ^ {i+j+1}=(t_{2}t_{1}) ^ {i+j+1}=rs=y'$.

\end{remark}

Par image miroir, on obtient directement~:

\begin{lemma}\label{LemEqua2} 

Soient $y$, $y'$, $z$ et $z'$ quatre mots.

Si $yz=zy'$ et $zy=y'z'$ alors $y=y'$.

\end{lemma}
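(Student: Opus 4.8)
Le plan est d'exploiter directement la remarque « Par image miroir » qui précède l'énoncé, en se ramenant au lemme~\ref{LemEqua} déjà démontré. Je rappellerais d'abord l'identité fondamentale $\widetilde{uv}=\tilde{v}\tilde{u}$ pour tous les mots $u$ et $v$, ainsi que le fait que l'application miroir $w \mapsto \tilde{w}$ est une involution, donc une bijection de $A^*$ sur lui-même.

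D'abord, j'appliquerais l'image miroir aux deux hypothèses. De $yz=zy'$, on tire $\widetilde{yz}=\widetilde{zy'}$, c'est-à-dire $\tilde{z}\tilde{y}=\tilde{y'}\tilde{z}$. De même, de $zy=y'z'$, on obtient $\tilde{y}\tilde{z}=\tilde{z'}\tilde{y'}$.

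Ensuite, je poserais $a=\tilde{y}$, $a'=\tilde{y'}$, $b=\tilde{z}$ et $b'=\tilde{z'}$. Les deux égalités précédentes s'écrivent alors $ba=a'b$ et $ab=b'a'$, qui sont exactement les hypothèses du lemme~\ref{LemEqua} (avec $a,a',b,b'$ jouant respectivement le rôle de $y,y',z,z'$). Le lemme~\ref{LemEqua} fournit donc $a=a'$, soit $\tilde{y}=\tilde{y'}$. Comme l'application miroir est injective, $\tilde{y}=\tilde{y'}$ entraîne $y=y'$, ce qui conclut.

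La seule difficulté, purement comptable, réside dans la vérification que la correspondance des variables est correcte~: il faut s'assurer que le couple d'équations obtenu après passage au miroir s'apparie bien avec l'ordre précis des facteurs dans les hypothèses du lemme~\ref{LemEqua}, et non avec celles du lemme~\ref{LemEqua2} lui-même. Une fois cet appariement établi, aucune combinatoire supplémentaire n'est nécessaire.
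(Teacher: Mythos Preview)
Your proof is correct and is exactly the argument the paper intends: the statement is introduced with ``Par image miroir, on obtient directement'', and your proposal simply spells out this reduction to the lemme~\ref{LemEqua}. The variable matching you carry out is accurate.
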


\begin{lemma}\label{Equa3dontdemi}

Soient $x$, $x_1$, $x_2$ et $y$ quatre mots tels que $x$, $x_1$ et $x_2$ soient non vides et satisfaisant 
aux équations $x=x_1x_2$ et $x_2yx=yxx_1$ alors
il existe un mot $t$ et deux entiers $\alpha \geq 2$ et $\beta \geq 0$ tels que $x=t^\alpha$ et $y=t^\beta$.

\end{lemma}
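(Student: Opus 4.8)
The plan is to rewrite the second equation so that Proposition~\ref{Lothaire} applies directly. Since $x_2 y x = y x x_1$ reads as $x_2\,(yx) = (yx)\,x_1$ with $x_2 \neq \varepsilon$, comparing lengths already forces $|x_1| = |x_2|$, and the shape $vu = uw$ of case~\ref{Lotcase1} of Proposition~\ref{Lothaire} (with $v = x_2$, $u = yx$, $w = x_1$) yields words $r,s$ and an integer $n \geq 0$ such that
\[
x_2 = rs, \qquad x_1 = sr, \qquad yx = r(sr)^n = (rs)^n r .
\]
In particular $x_1$ and $x_2$ are conjugate, and from $x = x_1 x_2$ I obtain the explicit form $x = sr\,rs$.

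The crux, which I would tackle next, is to pin down $n$ and extract the commutation of $r$ and $s$. Because $x$ is a suffix of $yx$ and $|x| = |x_1| + |x_2| = 2(|r|+|s|)$, the case $n = 0$ (where $yx = r$) is impossible on length grounds, so $n \geq 1$. For $n \geq 1$ the word $yx = r(sr)^n$ ends with the block $sr$, hence its suffix of length $|r|+|s|$ is exactly $sr$. On the other hand, since $x = x_1 x_2$ is itself a suffix of $yx$ and ends with $x_2 = rs$, that same suffix of $yx$ of length $|x_2| = |r|+|s|$ equals $rs$. Comparing the two descriptions gives $sr = rs$, i.e.\ $r$ and $s$ commute. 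I expect this positional matching of suffixes to be the main point of the argument; it is what turns the two a priori unrelated relations into a single commutation statement, and it is where one must be careful with lengths (the degenerate subcases where $r$ or $s$ is empty make the identity $sr=rs$ trivially true, so they cause no trouble here).

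Once $r$ and $s$ commute, case~\ref{Lotcase2} of Proposition~\ref{Lothaire} provides a primitive word $t$ and integers $a,b \geq 0$ with $r = t^a$ and $s = t^b$ (if one of $r,s$ is empty I would simply take $t$ to be the primitive root of the other, which is non-empty because $x \neq \varepsilon$). Substituting back, $x = sr\,rs = t^{2(a+b)}$, so setting $\alpha = 2(a+b)$ gives $x = t^\alpha$, with $\alpha \geq 2$ since $a+b \geq 1$ (as $x \neq \varepsilon$). Finally $yx = r(sr)^n = t^{\,a + n(a+b)}$ together with $x = t^{2(a+b)}$ forces $y = t^{\beta}$ with $\beta = a + (n-2)(a+b)$; since $x$ is a suffix of $yx$ we have $|yx| \geq |x|$, whence $\beta \geq 0$, exactly as required. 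The only bookkeeping to watch at this last step is that $y$ really is an integer power of $t$ and not merely of length a multiple of $|t|$, which is immediate because $y$ is the prefix of the $t$-power $yx$ left after deleting the $t$-power $x$.
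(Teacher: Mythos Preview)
Your proof is correct and follows essentially the same route as the paper's: apply case~\ref{Lotcase1} of Proposition~\ref{Lothaire} to $x_2(yx)=(yx)x_1$, observe that $yx$ must end both in $sr$ and in $x_2=rs$, deduce $rs=sr$, and conclude via case~\ref{Lotcase2}. The only cosmetic difference is notation (the paper uses $u,v,q$ where you use $r,s,n$), and your handling of the degenerate cases and of the sign of $\beta$ is slightly more explicit than the paper's.
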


\begin{remark}

On remarquera que les hypothèses du lemme~\ref{Equa3dontdemi} impliquent simplement que $yx$ 
est facteur interne de $(yx)^2$.

\end{remark}

\begin{proof}

D'après la propriété~\ref{Lotcase1} de la proposition~\ref{Lothaire},  il existe deux mots $u$ et
$v$ et un entier $q$ tels que $x_2=uv$, $yx=(uv)^qu$ et $x_1=vu$.
Puisque $x=x_1x_2=vuuv$, on a $q \geq 1$ et $yx$ finit par $vu$.
On en déduit que $vu=uv$.
D'après la propriété~\ref{Lotcase2} de la proposition~\ref{Lothaire},  il existe un mot $t$
et deux entiers $n$ et $p$ tels que $u=t^n$ et $v=t^p$.
On en déduit que $x=t^{2n+2p}$ et $y=t^{(q-2)(n+p)+n}$.
De plus, puisque $x_2=uv \neq \varepsilon$, on en déduit que $x$ n'est pas primitif.
\end{proof}

\begin{remark}

La composition de deux mots primitifs n'est pas nécessairement un mot primitif. Par exemple, si $u=aba$
et $v=bab$ alors $uv=(ab)^3$.

\end{remark}

\begin{proposition}\label{CompConjPrim}

Soit $w$ un mot primitif, soit $\underline{w}$ l'un de ses $|w|-1$ conjugués propres et 
soient $i \geq 1$ et $j \geq 1$ deux entiers alors $w^i\underline{w}^j$ est un mot primitif.

\end{proposition}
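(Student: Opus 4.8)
The plan is to argue by contradiction, assuming $W:=w^i\underline w^j$ is not primitive. Write $w=xy$ and $\underline w=yx$ with $x,y\neq\varepsilon$ (possible since $\underline w$ is a \emph{proper} conjugate), put $n=|w|=|\underline w|$, and recall that $\underline w$ is primitive by Corollaire~\ref{Corprim112}. From $wx=xyx=x\underline w$ and $\underline w y=yxy=yw$ one gets by immediate induction the conjugacy relations $w^a x=x\underline w^a$ (equivalently $x\underline w^a=w^a x$) and $\underline w^b y=yw^b$. If $W$ is not primitive, let $z=\rho(W)$ be its primitive root, so $W=z^k$ with $k\geq 2$; set $q=|z|$, whence $kq=(i+j)n$ and in particular $q\leq (i+j)n/2$.

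First I would exploit the two ends of $W$ through Fine \& Wilf (Proposition~\ref{Fine}). The word $w^i$ is simultaneously a prefix of the periodic word $w^\omega$ and of $z^\omega$ (because $w^i$ is a prefix of $W=z^k$), so $w^\omega$ and $z^\omega$ share a common prefix of length at least $in$. If $in\geq n+q-\gcd(n,q)$, Proposition~\ref{Fine} forces $w$ and $z$ to have the same primitive root; both being primitive this gives $z=w$, whence $W=w^k$ and $\underline w^{\,j}=w^{\,k-i}$, so $\underline w$ is a power of the primitive word $w$, i.e. $\underline w=w$, contradicting that $\underline w$ is a proper conjugate. Symmetrically, $\underline w^{\,j}$ is a common suffix of $\underline w^\omega$ and of $z^\omega$, and $jn\geq n+q-\gcd(n,q)$ would give $z=\underline w$ and the same contradiction. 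Since $\tilde W=(\widetilde{\underline w})^{\,j}(\tilde w)^{\,i}$ is of the very same form as $W$ but with the exponents $(i,j)$ exchanged (with $\widetilde{\underline w}=\tilde x\tilde y$ primitive and $\tilde w=\tilde y\tilde x$ its proper conjugate), and $W$ is primitive iff $\tilde W$ is, I may assume $i\geq j$. The only surviving configuration is then the narrow band $(i-1)n<q<in$.

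In that band, $z=\sub{W}{1..q}$ is a prefix of $w^i$, so $z=w^{\,i-1}\pi$ with $\pi=\sub{w}{1..c}$ and $c=q-(i-1)n\in\{1,\dots,n-1\}$. Transporting the period $q$ of $W$ onto the prefix $w^i$ yields $\sub{w}{t}=\sub{w}{t+c}$ for $1\leq t\leq n-c$, i.e. $w$ admits the nontrivial period $c$. Reading the second copy of $z$ across the junction between $w^i$ and $\underline w^{\,j}$ gives moreover the alignment $\sub{w}{n-c+1..n}=\sub{\underline w}{1..c}$ together with the reduced identity $z^{\,k-1}=\sub{w}{1..n-c}\,\underline w^{\,j}$ (with $z=w^{\,i-1}\pi$), which encodes all remaining information. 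When $i=j$ the symmetric (suffix) computation also endows $\underline w$ with the period $c$, and the two reduced identities can be played against each other.

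The main obstacle is the last step: turning the period $c$ of $w$, which by itself is perfectly compatible with primitivity (e.g. $abaab$ has period $3$), into a genuine contradiction. The plan is to use the reduced identity and the alignment equation to produce a \emph{second} period of $w$ incompatible with $c$: once $w$ is shown to carry two periods $p_1,p_2$ with $p_1+p_2\leq n+\gcd(p_1,p_2)$, Proposition~\ref{Fine} applied to $w$ itself makes $w$ a proper power, contradicting its primitivity. Equivalently, I expect the alignment $\sub{w}{n-c+1..n}=\sub{\underline w}{1..c}$ combined with $w=xy$, $\underline w=yx$ to force $w$ (or $\underline w$) to be an internal factor of its own square, which is forbidden for primitive words by Lemme~\ref{factint} and Remarque~\ref{rem110}. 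The word equations arising here are exactly of the type solved by Lemme~\ref{Equa3dontdemi} and by the case~\ref{Lotcase3} of Proposition~\ref{Lothaire} ($uvw=wvu$), which are the tools I would deploy to close the imbalanced case $i>j$; the balanced case $i=j$ is the easier one, the $i>j$ case being the real crux.
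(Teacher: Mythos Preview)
Your Fine \& Wilf reduction is correct and is exactly what the paper does in its Cases~2 and~3: comparing $w^i$ (resp.\ $\underline w^{\,j}$) with a power of the primitive root forces $z=w$ (resp.\ $z=\underline w$) and hence $w=\underline w$, a contradiction, \emph{unless} one lands in the narrow band $(i-1)n<q<in$. But your proof stops precisely where the work begins: the narrow band is not proved, only planned. Sentences such as ``The plan is to\ldots'', ``I expect\ldots'', ``the tools I would deploy'' are not a proof, and the word-equation analysis required here is genuinely delicate --- the paper devotes most of its argument to it.

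Moreover, your assessment of the difficulty is reversed. In the narrow band with $i>j$, a two-line length count already forces $j=i-1$ and $k=2$: from $q>(i-1)n$ one gets $|\underline w^{\,j}|\leq (i-1)n<q=|z|$, hence $|w^i|=kq-|\underline w^{\,j}|>(k-1)q$, and together with $|w^i|<q+n$ this gives $k=2$; then $(i+j)n=2q$ with $(i-1)n<q<in$ forces $i=j+1$. Writing $z=w^{\,i-1}w_1=w_2\underline w^{\,j}$ with $w=w_1w_2$ now yields $|w_1|=|w_2|$ and, reading the first $|w_1|$ letters of $z$ both ways, $w_1=w_2$, so $w=w_1^2$ is not primitive. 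This is the paper's Case~2, closed in a few lines. The hard part is the \emph{balanced} case $i=j$ (the paper's Cases~1 and~4), where one cannot force $k=2$: there one must treat $k$ odd, split $z=v_1v_2$ with $|v_1|=|v_2|$, and run a long case distinction using Lemme~\ref{factint}, Lemme~\ref{Equa3dontdemi} and Proposition~\ref{Lothaire}. Your observation that $w$ acquires period~$c$ is correct but, as you note yourself, a single nontrivial period is perfectly compatible with primitivity; manufacturing a second small period, or an internal-factor situation, requires exactly the detailed analysis you have not supplied.
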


\begin{proof}

Soient $r$ et $s$ les mots tels que $w=rs$ et $\underline{w} =sr$.
Par définition de $\underline{w}$, on a $s \neq \varepsilon$ et $r \neq \varepsilon$.

Par l'absurde, supposons que $w^i\underline{w}^i=v^n$ pour un entier $n\geq 2$ et un mot primitif $v$.
Nous allons montrer que la majorité des cas amènent au fait que l'un des mots $v$ ou $w$ n'est pas 
primitif~:~ce qui nous conduit dans chacun de ces cas à une contradiction avec les hypothèses.

\quad \textbf{Cas 1 : $i = j = 1$}

\vsu

\neto Si $n=2$ alors $rs = sr$ et, d'après la remarque~\ref{RemLot}, $w$ n'est pas primitif.
Et, si $n \geq 4$ est pair, alors $w=v^{n/2}$ \cad\ à nouveau $w$ non primitif.

\vsu

\neto Si $n=2k+1$ est impair (avec $k\geq 1$), alors il existe deux mots $v_1$ et $v_2$ de même longueur 
($>0$) tels que $v=v_1v_2$ (en particulier, $|v|$ est paire) et $w=rs = v^kv_1$ et $\underline{w}=sr=v_2v^k$.

Soit $ 0 \leq \ell \leq k$ l'entier tel que $|v^\ell| \leq |r| <  |v^{\ell+1}|$.
Il existe un préfixe $v_1'$ de $v$ et un suffixe $v_2''$ de $v$ tels que $r=v^\ell v_1'=v_2''v^\ell$.
Soient $v_2'$ et $v_1''$ les mots non vides tels que $v=v_1'v_2'=v_1''v_2''$.

$\diamond$ Si $v_1' = \varepsilon$ alors $s$ commence par $v_1$ et par $v_2$. 
Cela implique que $v_1=v_2$ et donc que $v$ n'est pas primitif.

$\diamond$ Si $v_1' \neq \varepsilon$ et $\ell \geq 1$ alors, de $v^\ell v_1'=v_2''v^\ell$, on tire que
 $v^{\ell}$ est facteur interne de $(v^{\ell})^2$. 
Ce qui implique que $v$ est facteur interne de $v^2$. D'après le lemme~\ref{factint} et la
remarque~\ref{rem110}, $v$ n'est pas primitif.
 
$\diamond$ Si $v_1' \neq \varepsilon$ et $\ell =0$ alors $s=v_2'v^{k-1}v_1=v_2v^{k-1}v_1''$.

- Si $|v_2'|=|v_2|$ alors $|v_1|=|v_1''|$. Ce qui implique que $v_1=v_1'=v_1''$ et $v_2=v_2'=v_2''$.
Et puisque $r=v_1'=v_2''$, on tire que $v_1=v_2$ \cad\ $v$ non primitif.

- Si $|v_2'| \neq |v_2|$ et $k \geq 2$ alors $v$ est facteur interne de $v^2$. D'après le lemme~\ref{factint}
 et la remarque~\ref{rem110}, $v$ n'est pas primitif.

- Si  $|v_2'| > |v_2|=\frac{1}{2}|v|$ et $k =1$ (\cad\ $n=3)$ alors on a aussi $|v_1''|>|v_1|$. Ce qui signifie que 
 $v_2$ est un suffixe propre de $v_2'$ et que $v_1$ est un préfixe propre de $v_1''$. 
Puisque $rs=v_1v_2v_1$ avec $|r|=|v_1'|<|v_1|$ et puisque $s$ commence par $v_2v_1$,
on en déduit que $v_2v_1$ est facteur interne de $(v_2v_1)^2$.
D'après le lemme~\ref{factint}, $v_2v_1$ n'est pas primitif. Et,
d'après le corollaire~\ref{Corprim112}, $v=v_1v_2$ n'est pas primitif.

- Si  $|v_2'| < |v_2|=\frac{1}{2}|v|$ et $k =1$ alors $v=rv_2'=v_1''r$ avec $|r|=|v_1'|>|v_2'|$.
D'après la propriété~\ref{Lotcase1} de la proposition~\ref{Lothaire},  il existe de deux mots $R$ et
$S$ et un entier $q \geq 1$ tels que $r=R(SR)^{q}$, $v_1''=RS$
et $v_2'=SR$.  D'où $v=R(SR)^{q+1}$.
En particulier, si $R=\varepsilon$ ou si $S=\varepsilon$, alors $v$ n'est pas primitif.

Si $q$ est impair, soient $\rho_1$ et $\rho_2$ les mots de même longueur tels que $R=\rho_1\rho_2$,
$v_1=(RS)^{(q+1)/2} \rho_1$ et $v_2=\rho_2 (SR)^{(q+1)/2}$.
Puisque $s$ commence par $v_2' \rho_1=SR\rho_1$ préfixe de $v_2'v_1$ et par $\rho_2SR$ préfixe de $v_2$, 
on en déduit que $SR\rho_1= \rho_2SR$ . D'après le lemme~\ref{Equa3dontdemi},
$R$ et $S$ sont des puissances du même mot. 
Ce qui signifie que $v$ n'est pas primitf.

Si $q$ est pair, soient $\sigma_1$ et $\sigma_2$ les mots de même longueur tels que $S=\sigma_1\sigma_2$,
$v_1=(RS)^{q/2}R \sigma_1$ et $v_2=\sigma_2 (RS)^{q/2}R$.
Puisque $s$ finit par $v_1''=R \sigma_1\sigma_2$ et par $\sigma_2 R \sigma_1$
suffixe de $v_1$, on en déduit que $\sigma_1=\sigma_2$ et que $\sigma_1 R = R \sigma_1$.
D'après la propriété~\ref{Lotcase2} de la proposition~\ref{Lothaire},  $R$ et $\sigma_1$,  et donc 
$S=\sigma_1^2$, sont puissances du même mot.
Ce qui signifie à nouveau que $v$ n'est pas primitif.

\quad \textbf{Cas 2 : $i > j  \geq 1$}

Le mot $w^i$ est donc un préfixe commun d'une puissance de $w$ et d'une puissance de $v$.

Si $|w^i| \geq |v|+|w|$, d'après la proposition~\ref{Fine}, $w$ et $v$ sont des puissances du même mot primitif. 
On obtient donc $w=v$. Puisque $v^n$ finit par $\underline{w}^j$, il s'en suit que $w=\underline{w}$~:~une contradiction.

Si $|w^i| < |v|+|w|$ alors $|\underline{w}^j|=|w^j| < |v|$ et 
$|w^i|=|v|^{n}-|\underline{w}^j|>|v|^{n-1}>|v|^{n-2}+|w|$.
On a donc nécessairement $n=2$.
Il existe alors deux mots $w_1$ et $w_2$ non vides tels que
$w=w_1w_2$ et $v=w^{i-1}w_1=w_2 \underline{w}^j$.
Cela implique que $i-1=j$ et que $|w_2|=|w_1|$.
Il s'en suit que $w_2=w_1$ et que $w$ n'est pas primitif.

\quad \textbf{Cas 3 : $j > i  \geq 1$}

Ce cas se résout de la même façon que le précédent.

\quad \textbf{Cas 4 : $i = j \geq 2$}

Si $n$ est pair, on obtient $rs =sr$ et $w$ non primitif.

On a donc $n=2k+1$ impair avec $k \geq 1$.
Soient $v_1$ et $v_2$ les mots de même longueur tels que $v=v_1v_2$,
$w^i=v^{k}v_1$ et $\underline{w}^i=v_2v^{k}=v_2(v_1v_2)^{k}$.

Si $k \geq 2$ alors $|\underline{w}^i|=|w^i|=(k+\frac{1}{2})|v|$.
On a donc $|w^i| \geq 2|w|$ et $|w^i| \geq 2|v|$ \cad\ 
$|\underline{w}^{i}|=|w^{i}| \geq |v|+|w|=|v|+|\underline{w}|$. Inéquation que l'on obtient aussi
si $k=1$ et $i \geq 3$ puisque $|w^i|=|w|+\frac{i-1}{i}(k+\frac{1}{2})|v|$. 
Dans ces deux cas, cela implique, d'après la proposition~\ref{Fine} et le 
corollaire~\ref{CorFine1},  que $\underline{w}$, $w$ et $v$ sont des puissances du même mot primitif. 
On obtient donc $w=v=\underline{w}$. D'après le corollaire~\ref{NbreConjPrim} et la remarque~\ref{rem112},
on aurait $w$ non primitif~:~une contradiction.

Si $k=1$ et $i=2$, alors $w^2=rsrs=v_1v_2v_1$ commence par $v_1\underline{w}=v_1sr$ puisque $|v_1|=\frac{1}{2}|v|<|rs|$.
Si $v_1 \neq r$ alors  $sr$ est facteur interne de $(sr)^2$.
D'après le lemme~\ref{factint} et la remarque~\ref{rem110}, 
cela implique que $\underline{w}$ n'est pas primitif~:~une contradiction.
Si $v_1=r$ alors, puisque $\underline{w}^2=srsr$ finit par $v_2$ avec $|v_2|=|v_1|$,
on obtient que $v_2=r=v_1$ et donc que $v$ n'est pas primitif~:~une dernière contradiction.
\end{proof}

\begin{cor} \label{CorPrimPuis}

Soient $r$, $s$ et $z$ des mots non vides et soient $i \geq 1$, $j \geq 1$ et $n \geq 2$ trois entiers.
Si $(rs)^i(sr)^j = z^n$ alors $r$, $s$ et $z$ ont la même racine primitive.

\end{cor}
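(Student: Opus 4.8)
The plan is to reduce everything to the already-established Proposition~\ref{CompConjPrim} by exhibiting $sr$ as a power of a conjugate of the primitive root of $rs$. First I would set $t = \rho(rs)$ and let $m \geq 1$ be the integer with $rs = t^m$, so that $t$ is primitive and $|rs| = m|t|$. Since $rs$ and $sr$ are conjugate, Proposition~\ref{propconjracprim} tells me that $\rho(sr)$ is a conjugate of $t$; in particular $|\rho(sr)| = |t|$, and because $|sr| = |rs| = m|t|$ I may write $sr = \rho(sr)^m$. Hence the hypothesis becomes $t^{mi}\,\rho(sr)^{mj} = z^n$.

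Now comes the dichotomy. Suppose first $\rho(sr) \neq t$. Then $\rho(sr)$ is one of the $|t|-1$ conjugués propres of the primitive word $t$, and with exponents $mi \geq 1$ and $mj \geq 1$ Proposition~\ref{CompConjPrim} forces $t^{mi}\rho(sr)^{mj}$ to be primitive. This contradicts $t^{mi}\rho(sr)^{mj} = z^n$ with $z \neq \varepsilon$ and $n \geq 2$. Therefore necessarily $\rho(sr) = t$, that is $sr = t^m = rs$, so $r$ and $s$ commute.

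From commutation I would finish routinely. By la propriété~\ref{Lotcase2} de la proposition~\ref{Lothaire}, $r$ and $s$ are powers of a common primitive word $u$; then $rs$ is a power of $u$, and uniqueness of the primitive root gives $u = t$, so $r = t^p$ and $s = t^q$ with $p,q \geq 1$. Consequently $z^n = (rs)^i(sr)^j = t^{m(i+j)}$, and le corollaire~\ref{CorFine2} shows that $z$ and $t$ are powers of a common word; since $t$ is primitive, this common word is $t$ and $\rho(z) = t$. Thus $r$, $s$ and $z$ share the primitive root $t$.

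The only delicate point is the identity $sr = \rho(sr)^m$: it rests on the conjugacy of the primitive roots (Proposition~\ref{propconjracprim}) together with the length bookkeeping $|sr| = |rs| = m|t|$. Once that is in place, the heart of the argument is a single, clean invocation of Proposition~\ref{CompConjPrim}, and the remaining steps are the standard facts that commuting words share a primitive root and that $u^a = v^b$ forces a common root (Corollaire~\ref{CorFine2}). I expect no genuine obstacle beyond checking that the exponents fed to Proposition~\ref{CompConjPrim} are truly positive, which holds since $m \geq 1$ and $i,j \geq 1$.
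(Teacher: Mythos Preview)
Your proposal is correct and follows essentially the same route as the paper's proof: use Proposition~\ref{propconjracprim} to see that $\rho(rs)$ and $\rho(sr)$ are conjugate, invoke Proposition~\ref{CompConjPrim} to rule out the proper-conjugate case (yielding $rs=sr$), then use Proposition~\ref{Lothaire} to get $r$ and $s$ as powers of $t$ and Corollaire~\ref{CorFine2} for $z$. The only cosmetic difference is that the paper cites la propriété~\ref{Lotcase1} whereas you (more naturally) appeal to la propriété~\ref{Lotcase2}; your version also spells out the length bookkeeping and the dichotomy more explicitly, but the substance is identical.
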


\begin{proof}

D'après la proposition~\ref{propconjracprim}, les mots $rs$ et $sr$ ont des racines primitives conjuguées.
D'après la proposition~\ref{CompConjPrim}, celles-ci sont égales.
Il existe donc un mot primitif $t$ et un entier $\alpha \geq 2$ tel que $rs=sr=t^{\alpha}$.
De plus, d'après la propriété~\ref{Lotcase1} de la proposition~\ref{Lothaire}, 
on obtient que $r$ et $s$ sont puissances du même mot qui ne peut être que $t$.
Et, d'après le corollaire~\ref{CorFine2}, $z$ est aussi une puissance de $t$.
\end{proof}

\begin{proposition} {\rm \cite{Lyn1962,Lot1983,HN2004,DH2006,Sha2008}} \label{PrimPuisXYZ}

Soient $x$, $y$ et $z$ des mots non vides et soient $m$, $n$ et $q $ des entiers supérieurs ou égaux à $2$.
Si $x^my^n = z^q$ alors $x$, $y$ et $z$ ont même racine primitive.

\end{proposition}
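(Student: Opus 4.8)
This statement is the Lyndon--Schützenberger theorem, and the plan is to derive it from Fine \& Wilf together with the internal-factor lemmas already proved.

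\emph{Reduction to primitive bases.} First I would write $x=\rho(x)^a$, $y=\rho(y)^b$ and $z=\rho(z)^c$ with $a,b,c\geq 1$, so that the hypothesis becomes $\rho(x)^{am}\rho(y)^{bn}=\rho(z)^{cq}$ with all three exponents still $\geq 2$. It therefore suffices to treat the case where $x$, $y$, $z$ are themselves primitive and $m,n,q\geq 2$, and in that case to prove the sharper conclusion $x=y=z$; equality of the primitive roots then yields the claim for the original words. Throughout I would also use the mirror symmetry: applying the mirror image to $x^my^n=z^q$ gives $\tilde y^n\tilde x^m=\tilde z^q$, which exchanges the roles of $x$ and $y$. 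Since primitivity and equality are preserved by mirroring, I may normalise which end of the equality I work from.

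\emph{The regime handled directly by Fine \& Wilf.} A length count gives $q|z|=m|x|+n|y|\geq 2|z|$, so $|z|>m|x|$ and $|z|>n|y|$ cannot both hold; after a possible mirroring I may assume $z$ is a prefix of $x^m$, i.e.\ $|z|\leq m|x|$. Then $x^m$, which is a power of $x$, is a common prefix of the power $x^m$ of $x$ and the power $z^q$ of $z$, of length $m|x|$. As soon as $|z|\leq(m-1)|x|$ this reaches the Fine \& Wilf threshold $|x|+|z|-\gcd(|x|,|z|)$, so Proposition~\ref{Fine} makes $x$ and $z$ share a primitive root; being primitive they are equal, $z=x$. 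Substituting back into $z^q=x^my^n$ and using $q|x|=m|x|+n|y|$ (hence $q>m$) gives $y^n=x^{\,q-m}$, so Corollary~\ref{CorFine2} yields $y=x=z$ and the proof is finished. Thus I may assume $(m-1)|x|<|z|$; and $|z|=m|x|$ is impossible, since it would make $z=x^m$ non-primitive. This leaves only the band $(m-1)|x|<|z|<m|x|$ (in particular $|z|>|x|$).

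\emph{The hard case and the main obstacle.} In this band the prefix $x^m$ falls just short of the Fine \& Wilf bound, so I would write $z=x^{m-1}x'$ with $x=x'x''$ and $x',x''$ nonempty, and then cancel the common prefix $x^{m-1}x'$ from the two expressions $z^q=x^{m-1}x'\,z^{q-1}$ and $z^q=x^{m-1}x'\,x''y^n$ to obtain the reduced equation $z^{q-1}=x''y^n$, in which $x''$ is a proper suffix of $x$. This is the crux: the single period $|z|$ of the prefix is no longer sufficient, and one must play the period $|z|$ against the period $|y|$ of the suffix $y^n$ at the same time. I expect to resolve it by a descent, either an induction on $|z|$ using the reduced equation or a direct overlap analysis in the style of Proposition~\ref{CompConjPrim}, feeding the resulting overlaps into Lemma~\ref{factint} and Remark~\ref{rem110} (to detect a word occurring as an internal factor of its square) and into Lemma~\ref{Equa3dontdemi} and Proposition~\ref{Lothaire} (to force two words to commute), each branch concluding that one of $x$, $y$, $z$ cannot be primitive unless the three coincide. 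The low-exponent boundaries $m=2$ and $q=2$ will need separate bookkeeping, and pushing this overlap analysis through cleanly is the main difficulty. As a useful model, note that the special case where $x$ and $y$ are conjugate is already settled by Corollary~\ref{CorPrimPuis}.
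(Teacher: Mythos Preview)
Your opening moves---reducing to $x,y,z$ primitive and applying Fine \& Wilf to the common prefix $x^m$ to dispose of the range $|z|\leq (m-1)|x|$---coincide with the paper's. The paper, however, does \emph{not} then single out one end by mirroring: it applies Corollaire~\ref{CorFine1} symmetrically to the suffix $y^n$ as well, obtaining $(n-1)|y|<|z|$ in addition to $(m-1)|x|<|z|$. Adding these two inequalities to $|x|+|y|<2|z|$ gives $q|z|=m|x|+n|y|<4|z|$, hence $q\in\{2,3\}$. This bound is the decisive reduction you are missing: once $q$ is pinned to two values, the paper finishes by a direct (if intricate) combinatorial analysis of the cases $q=3$ and $q=2$, using Proposition~\ref{Lothaire}, Corollaire~\ref{Kera}, Lemme~\ref{Equa3dontdemi} and Corollaire~\ref{CorPrimPuis} exactly as you anticipated, and then normalising $|x|\geq |y|$ by the mirror only at that final stage.

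Your alternative plan---``descent by induction on $|z|$ using the reduced equation $z^{q-1}=x''y^n$''---does not go through as stated: in that equation $x''$ carries exponent~$1$, so it is outside the inductive hypothesis ($m\geq 2$), and there is no obvious way to restore a second-power structure on the left block. The ``direct overlap analysis'' you mention as the other option is precisely what the paper does, but it only becomes finite (and hence feasible) \emph{after} the bound $q\leq 3$; without it you would be facing infinitely many configurations. So the genuine gap is not in the toolbox you propose to use, which is the right one, but in the missing step that collapses the problem to $q=2$ or $q=3$.
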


La proposition~\ref{PrimPuisXYZ} n'est qu'un corollaire de la proposition suivante :

\begin{proposition}

Soient $x$, $y$ et $z$ des mots primitifs et soient $m$, $n$ et $q $ des entiers supérieurs ou égaux à $2$.
Si $x^my^n = z^q$ alors $x=y=z$.

\end{proposition}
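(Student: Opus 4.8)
The plan is to argue by comparing lengths, writing $a=|x|$, $b=|y|$, $c=|z|$, so that the hypothesis reads $ma+nb=qc$ with $m,n,q\ge 2$. Applying the mirror operation to $x^my^n=z^q$ yields $\tilde y^{\,n}\tilde x^{\,m}=\tilde z^{\,q}$, which interchanges the roles of the two blocks while preserving both primitivity and the desired conclusion; hence I may assume without loss of generality that $ma\ge nb$. Then $2ma\ge ma+nb=qc\ge 2c$, so $ma\ge c$, i.e. $x^m$ is at least as long as $z$. Since $x^m$ is a prefix of $z^q$, the words $x^m$ and $z^q$ share a common prefix of length $ma$, which is the quantity I will feed into Fine \& Wilf.

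First I would treat the main case $(m-1)a\ge c$. Here $ma=a+(m-1)a\ge a+c-\gcd(a,c)$, so the common prefix $x^m$ of the powers $x^m$ and $z^q$ is long enough to invoke Proposition~\ref{Fine}: $x$ and $z$ have the same primitive root, and being primitive they are equal, $x=z$. Substituting gives $x^my^n=x^q$, whence $y^n=x^{q-m}$ with $q-m\ge 1$ (as $nb>0$ forces $q>m$); Corollaire~\ref{CorFine2} then gives $\rho(y)=\rho(x)$, and primitivity yields $y=x$. Thus $x=y=z$ in this case.

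It remains to rule out the residual case $(m-1)a<c<ma$, the value $c=ma$ being excluded because $z$ would then equal the imprimitive word $x^m$. Here $ma<a+c$, and together with $nb\le ma<a+c$ the relation $qc=ma+nb$ gives $(q-2)c<2a<2c$, forcing $q\le 3$; a short computation then confines $m$ and $n$ to a handful of small values. In this range I would exploit the decomposition coming from the seam: since $c<ma$ the word $z$ is a prefix of $x^m$, so $z=x^{m-1}p$ with $x=pp'$ and $p,p'$ nonempty, and reading the tail of $z^q$ yields a second expression for the relevant copy of $z$ involving $p'$ and $y^n$. Comparing these two descriptions of the single primitive word $z$, and applying Lemme~\ref{factint} together with Proposition~\ref{Fine} to the resulting overlaps, forces either $pp'=p'p$ (so that $x$ is imprimitive) or an internal occurrence of a primitive factor inside its own square (again contradicting primitivity through Remarque~\ref{rem110} and Corollaire~\ref{Corprim112}). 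Every branch contradicts the standing primitivity assumptions, so the residual case is empty.

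The main obstacle is precisely this residual case: when $|x|\ne|y|$ the period $|z|$ straddles the boundary between $x^m$ and $y^n$, so no single application of Fine \& Wilf closes the argument and one is forced into the finite but delicate overlap analysis sketched above. Organizing it so that each sub-configuration terminates in a genuine contradiction, rather than merely a weaker periodicity statement, is the part that requires care.
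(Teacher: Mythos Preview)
Your overall plan coincides with the paper's: use Fine \& Wilf to dispose of the case where one block is long enough, then analyse the residual case $q\in\{2,3\}$ by hand. Your treatment of the main case $(m-1)|x|\ge |z|$ is correct and is exactly the paper's opening paragraph (the paper also runs the symmetric suffix argument on $y^{n-1}$, which your normalisation $m|x|\ge n|y|$ does not automatically subsume, but that is easily patched).

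The gap is the residual case, which you correctly flag as the obstacle but do not actually prove. Two concrete points. First, the claim that ``a short computation then confines $m$ and $n$ to a handful of small values'' is not right: under your normalisation one gets $m=2$ when $q=3$, but for $q=2$ no bound on $m$ follows from length arithmetic alone, and $n$ is never bounded a priori. In the paper, the bound $m\in\{2,3\}$ for $q=2$ comes only \emph{after} an extra Fine \& Wilf step on the suffix $y^n$ (and requires the paper's normalisation $|x|\ge|y|$, not yours), while the sub-case $n\ge 3$ is dispatched separately via Corollaire~\ref{Kera}. Second, your one-sentence endgame (``forces either $pp'=p'p$ \ldots\ or an internal occurrence of a primitive factor inside its own square'') hides roughly a page of work: the paper repeatedly applies Proposition~\ref{Lothaire}(\ref{Lotcase1}),(\ref{Lotcase3}) to write the pieces in the form $(rs)^\ell r$, then uses the ad hoc Lemme~\ref{Equa3dontdemi} and Corollaire~\ref{CorPrimPuis} to force $r$ and $s$ to commute in each sub-configuration; several branches do \emph{not} reduce to Lemme~\ref{factint}. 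So your strategy is the paper's strategy, but the residual case is essentially the entire proof, and your sketch does not yet supply it.
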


\begin{proof}

Puisque $x^m$ est préfixe de $z^q$, si $|x^{m-1}| \geq |z|$ alors, d'après la proposition~\ref{Fine},  $x$ et $z$ sont puissances du même mot primitif.
Par voie de conséquence, il en est de même de $y$. Et on obtient donc $x=y=z$.
De façon identique, si $|y^{n-1}| \geq |z|$, d'après le corollaire~\ref{CorFine1}, $x$, $y$ et $z$ sont puissances 
du même mot primitif et encore une fois $x=y=z$.

On a donc $(|x| \leq) \, |x^{m-1}| < |z|$ et $(|y| \leq) \, |y^{n-1}| < |z|$ \cad\ $|x^{m-1}| +|y^{n-1}| < 2|z|$ soit 
$|z^q|=|x^{m}| +|y^{n}| < 2|z|+|x|+|y|<4|z|$. Ce qui implique $q=2$ ou $3$.

Nous allons montrer que ces deux cas amènent à des contradictions avec les hypothèses
sur la primitivité de $x$, $y$ ou $z$.

Quitte à utiliser l'image miroir, sans perte de généralité, on peut supposer $|x| \geq |y|$.

\textbf{Cas 1 : $q=3$}

On a $|y^n| \leq |x^m|<2|z|$ donc il existe quatre mots non vides $x_1$, $x_2$, $y_1$ et $y_2$ tels que 
$z=x^{m-1} x_1=x_2y_1=y_2y^{n-1}$ avec $x=x_1x_2$ et $y=y_1y_2$.
Puisque $|z|=|x_2|+|y_1|<|x|+|y| \leq 2|x|$, on a nécessairement $m=2$.
Soit $y_1'$ le préfixe de $y_1$ tel que $x_1x_2=x_2y_1'$.
D'après la propriété~\ref{Lotcase1} de la proposition~\ref{Lothaire}, il existe deux mots $r$ et
$s$ et un entier $\ell$ tels que $x_2=(rs)^{\ell}r$, $x_1=rs$
et $y_1'=sr$. 

Il s'en suit que $x=(rs)^{\ell+1}r$ et $z=xx_1=(rs)^{\ell+1}rrs=x_2y_1=(rs)^{\ell}ry_1$
et que $y_1=srrs$. 

Si $r=\varepsilon$ ou si $s=\varepsilon$, on obtient que 
$x,y$ et $z$ ne sont pas primitifs~:~contraire aux hypothèses.

Puisque $|x| \geq |y|>|y_1|$, on a $\ell \geq 1$.
De plus, puisque $|x_2|+|y_1|=|z|= |y_2y^{n-1} |\geq |y_1|+2 |y_2|$, on obtient donc que
$2|y_2| \leq |x_2|$ et que $y_2$ est un préfixe de $x_2$.
Il existe donc un entier $0 \leq q \leq \ell$ et 
deux mots $t_1 \neq rs$ et $t_2 \neq \varepsilon$ tels que
$t_1t_2=rs$ et $y_2=(rs)^{q}t_1$ \cad\ $x_2=y_2t_2(rs)^{\ell -q -1}r$.
En outre, $z$ commence par $y_2y_1=(rs)^{q}t_1srrs$ 
et par $x=(rs)^{\ell+1}r$. 


Puisque $2|y_2| \leq |x_2|$,
on a $q< \ell$ et 
$z$ commence par $(rs)^{q}t_1srrs$ et par $(rs)^{q}rsrsr$.
Cela implique que $t_1sr=rst_1$.
D'après la propriété~\ref{Lotcase3} de la proposition~\ref{Lothaire}, 
il existe deux mots $u_{1}$,  $u_{2}$ sur $A$
et trois nombres entiers $\alpha$,  $\beta$,  $\gamma$ tels que $t_1=(u_{1}u_{2}) ^ {\alpha}u_{1}$,
$s=(u_{2}u_{1})^{\beta}u_{2}$ et $r=(u_{1}u_{2})^{\gamma}u_{1}$.
Puisque $z$ finit par $rs$ donc par $u_1u_2$  et par $y_1y_2$ donc par $u_2u_1$,
on en déduit que $u_1u_2=u_2u_1$.
D'après la propriété~\ref{Lotcase2} de la proposition~\ref{Lothaire}, 
les mots $u_1$ et $u_2$ sont des puissances du même mot.
Il en est donc de même de $r$ et $s$.
Il s'en suit que $x,y$ et $z$ ne sont pas primitifs~:~contraire aux hypothèses.

\textbf{Cas 2 : $q=2$}

Puisque $z$ est primitif, pour tous les entiers $2 \leq m' \leq m$ et $2 \leq n' \leq n$,
on a $|x^{m'}| \neq |z|$, $|y^{n'}| \neq |z|$.

\neto Si $|x^m|<|z|$, il existe deux mots non vides $y_1$ et $y_2$ tels que 
$z=x^m y_1=(y_2y_1)^{n-1}y_2$ avec
$y=y_1y_2$. Puisque $|x^m| \geq |x^{m-1}| + |y| \geq |x| + |y_2y_1|$, d'après la proposition~\ref{Fine},  
$x$ et $y_2y_1$ sont puissances du même mot. Puisque $x$ et $y$ sont primitifs et d'après le
corollaire~\ref{Corprim112}, on obtient $x=y_2y_1$ et $z=(y_2y_1)^my_1=(y_2y_1)^{n-1}y_2$.
Ce qui implique $m=n-1$, $y_1=y_2$ et $y$ non primitif~:~une contradiction.

\neto Si $|x^m|>|z|$, il existe deux mots non vides $x_1$ et $x_2$ tels que $z=(x_1x_2)^{m-1}x_1=x_2y^{n}$ avec 
$x=x_1x_2$. 

Si $|y^n| \geq |x| + |y|$ alors, comme dans le cas précédent, on obtient $y=x_2x_1$ puis $x_1=x_2$ et $x$ 
non primitif~:~une contradiction. 

On a donc $|y^n| < |x| + |y|$.
Puisque $|z|=|x_2y^n|\leq |x_2|+2|x|$, on obtient que $m=2$ ou $m=3$.
Soit $Y$ le préfixe de $y^{n}$ tel que $x_1x_2=x_2Y$.
D'après la propriété~\ref{Lotcase1} de la proposition~\ref{Lothaire}, il existe deux mots $r$ et
$s$ et un entier $\ell$ tels que $x_2=(rs)^{\ell}r$, $x_1=rs$
et $Y=sr$. En particulier, on obtient $x=(rs)^{\ell+1}r$.

On a  $r \neq \varepsilon$ et $s \neq \varepsilon$ car sinon $x$ ne serait pas primitif.

$\diamond$ Si $m=2$ alors, puisque $z=xx_1=(rs)^{\ell}rsrrs=x_2y^n$, 
on obtient que $y^n=srrs$. D'après le corollaire~\ref{CorPrimPuis},
cela implique que $x$ n'est pas primitif~:~une contradiction.

$\diamond$ Si $m=3$ alors, puisque $z=x^2x_1=x_2y^n$, 
on obtient que $y^n=sr \, (rs)^{\ell+1}r \, rs =sr \, x \, rs$. 
Puisque $|y^n| < |x| + |y|$, il s'en suit que $|y| > 2|r|+2|s|$.
Mais $|(rs)^{\ell+1}r|=|x|=|y^n|+2|r|+2|s| \geq |y|$.
Cela implique que $\ell \geq 1$.
%

- Si $n \geq 3$ alors $x$ est un facteur commun d'une puissance de $rs$ et d'une puissance de $y$
avec $|x|=|(rs)^{\ell+1}r|\geq 2|rs|$ et $|x|>|y^{n-1}| \geq 2|y|$ \cad\ $|x| \geq |rs| +|y|$. 
D'après le corollaire~\ref{Kera},  $y$ n'est pas primitif car $|y|>|rs|$.

%
%

- Si $n=2$ et $\ell$ est impair, soient $\rho_1$ et $\rho_2$ les mots de même longueur tels que 
$r=\rho_1\rho_2$ et $y=sr (rs)^{(\ell+1)/2}\rho_1=\rho_2(sr)^{(\ell+1)/2} rs$.
On en déduit que $sr \rho_1= \rho_2 sr$.
D'après le lemme~\ref{Equa3dontdemi}, $r$ et $s$ sont puissances du même mot.
Cela signifie que $x$ n'est pas primitif~:~une contradiction.

- Si $n=2$ et $\ell$ est pair, comme dans le cas précédent, avec $\sigma_1$ et $\sigma_2$ les mots de 
même longueur tels que $s=\sigma_1\sigma_2$, on trouve de même que 
$y=sr (rs)^{\ell/2}r\sigma_1=\sigma_2r(sr)^{\ell/2} rs$.
On en déduit que $\sigma_1=\sigma_2$ et $\sigma_1r=r\sigma_1$.
D'après la propriété~\ref{Lotcase2} de la proposition~\ref{Lothaire},  $r$ et $\sigma_1$,  et donc 
$s=\sigma_1^2$, sont puissances du même mot.
Ce qui signifie que $y$ n'est pas primitif~:~une dernière contradiction.
\end{proof}

\begin{cor} \label{CorLot4}

Si $x$ et $y$ sont deux mots primitifs différents alors $x^my^n$ est un mot primitif pour tous les entiers $m$ et $n$ plus grands que $2$.

\end{cor}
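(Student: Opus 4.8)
Le plan est de raisonner par l'absurde en se ramenant immédiatement à la proposition précédente. Supposons que $x^m y^n$ ne soit pas primitif. Par définition de la racine primitive, le mot $z=\rho(x^m y^n)$ est alors lui-même primitif et vérifie $x^m y^n = z^q$ pour un certain entier $q \geq 2$ (l'exposant est au moins $2$ précisément parce que le mot n'est pas primitif).

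On dispose ainsi de trois mots primitifs $x$, $y$ et $z$ et de trois entiers $m$, $n$ et $q$ tous supérieurs ou égaux à $2$ satisfaisant l'égalité $x^m y^n = z^q$. La proposition précédente s'applique donc telle quelle et fournit $x = y = z$. En particulier $x = y$, ce qui contredit l'hypothèse $x \neq y$. Le mot $x^m y^n$ est par conséquent primitif.

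Il n'y a à proprement parler aucun obstacle dans cette démonstration : tout le contenu combinatoire a déjà été établi dans la proposition précédente, et la seule chose à constater — ce qui est immédiat — est que les exposants $m$, $n$ et $q$ relèvent bien du régime $\geq 2$ requis, garanti par les hypothèses sur $m$ et $n$ et par le fait que la non-primitivité force $q \geq 2$. On pourrait tout aussi bien invoquer la proposition~\ref{PrimPuisXYZ} : de $x^m y^n = z^q$ on déduirait que $x$, $y$ et $z$ ont même racine primitive, puis, $x$ et $y$ étant primitifs et donc égaux à leur propre racine primitive, on retrouverait $x = y$ et la même contradiction.
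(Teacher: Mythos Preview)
Your proof is correct and is exactly the intended one: the corollary is stated in the paper without proof because it follows immediately from the preceding proposition (primitive $x,y,z$ with $x^m y^n=z^q$ and $m,n,q\geq 2$ force $x=y=z$), and your argument makes this deduction explicit.
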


\section{Caractérisation des morphismes primitifs}

\begin{proposition}\label{PrimInj}{\rm{(Theorem 5 dans~\cite{Mit1997})}}

Un morphisme primitif est injectif.

\end{proposition}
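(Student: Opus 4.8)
The plan is to prove the contrapositive: if $f$ is not injective, I will produce a primitive word whose image is not primitive, contradicting that $f$ is primitive. First I would record a preliminary fact used throughout, namely that a primitive morphism is non-erasing. Indeed every single letter $a\in A$ is primitive, so $f(a)$ must be primitive and therefore nonempty; hence $f(w)\neq\varepsilon$ for every nonempty word $w$.

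Assume now $f$ is not injective and fix distinct words $u\neq v$ with $f(u)=f(v)=:z$. Both are nonempty (if $u=\varepsilon$ then $z=\varepsilon$, which forces $v=\varepsilon$ by the non-erasing property), so $z\neq\varepsilon$. Write $u=p^{a}$ and $v=r^{b}$, where $p=\rho(u)$ and $r=\rho(v)$ are the primitive roots and $a,b\geq 1$. The first real step is to rule out $p=r$: in that case $f(p)^{a}=f(u)=f(v)=f(p)^{b}$ with $f(p)\neq\varepsilon$, which forces $a=b$ and hence $u=v$, a contradiction. Thus $p$ and $r$ are two \emph{distinct} primitive words.

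This is precisely the hypothesis of Corollaire~\ref{CorLot4}. Applying it with the exponents $2a\geq 2$ and $2b\geq 2$ yields that
\[
u^{2}v^{2}=p^{2a}r^{2b}
\]
is primitive. On the other hand $f(u^{2}v^{2})=f(u)^{2}f(v)^{2}=z^{4}$ is a fourth power of the nonempty word $z$, hence not primitive. Since $f$ is primitive and $u^{2}v^{2}$ is primitive, its image must be primitive, which is the contradiction that completes the argument.

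The conceptual core is the reduction to distinct primitive roots; once that is in hand, Corollaire~\ref{CorLot4} does all the work, simultaneously guaranteeing a primitive preimage and a manifestly imprimitive image. Accordingly I expect the only delicate points to be the non-erasing observation and the root-equality case $p=r$ — this is the single place where the equality $f(u)=f(v)$ is genuinely exploited to force $u=v$ — while the remainder is a direct appeal to the combinatorial results already established.
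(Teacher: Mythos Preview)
Your argument is correct. The non-erasing step is even cleaner than the paper's: you observe directly that each letter is primitive, so $f(a)$ must be primitive and in particular nonempty, whereas the paper builds the auxiliary word $\beta\alpha\beta$ to reach the same conclusion.

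The route you take, however, is not the paper's main one. The paper argues as follows: given $h(x)=h(y)$ with $|x|\leq|y|$, neither $xy$ nor $xyy$ can be primitive (their images are $h(x)^2$ and $h(x)^3$), so $xy=u^n$ and $xyy=v^m$ with $n,m\geq 2$; after disposing of the cases $m=2$ and $n=2$ by elementary length considerations, Fine and Wilf (Proposition~\ref{Fine}) applied to the common prefix $xy$ of $u^n$ and $v^m$ forces $u=v$, from which $x=y$ follows. Your proof instead reduces to distinct primitive roots $p\neq r$ and invokes Corollaire~\ref{CorLot4} (equivalently Proposition~\ref{PrimPuisXYZ}, the Lyndon--Sch\"utzenberger theorem) to certify that $p^{2a}r^{2b}$ is primitive while its image $z^4$ is not. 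This is precisely the alternative proof the paper sketches in Remarque~\ref{RemPrimInj}'s preceding remark, attributed to~\cite{Hsi2003}. Your version is shorter and conceptually cleaner, but it leans on the heavier Proposition~\ref{PrimPuisXYZ}; the paper's main proof is more self-contained, needing only Fine and Wilf.
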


\begin{proof}

Soit $h$ un morphisme  primitif de $A^*$ vers $B^*$

Remarquons dans un premier temps que $h$ ne peut pas être un morphisme effaçant. En effet, si c'était le cas, 
on pourrait trouver $\alpha,\beta\in A$ tels que $h(\alpha)=\varepsilon$ et $h(\beta)\neq\varepsilon$. 
On aurait bien que $\beta \alpha \beta$ est un mot primitif mais ce n'est pas le cas de son image par $h$ 
car $h(\beta \alpha \beta)=(h(\beta))^2$.

Soient $x$ et $y$ deux mots de $A^+$ tels que $h(x)=h(y)$. Sans perte de généralité, on suppose que $|x| \leq |y|$.

Puisque $h$ est primitif et puisque $h(xy)=h(x)^2$ et $h(xyy)=h(x)^3$,
les deux mots $xy$ et $xyy$ ne sont pas primitifs. 

Autrement dit, on peut écrire $xy=u^n$ et $xyy=v^m$ pour des mots primitifs non vides $u$ et $v$ et deux entiers $n,m \geq 2$.

Si $m=2$ alors, puisque $|x| \leq |y|$, on peut trouver deux mots non vides
$y_1$ et $y_2$ tels que $y=y_1y_2$ et $v=xy_1=y_2y$. Cela signifie que $x$ est un facteur propre ($\neq y$) de $y$.
Puisque $h$ est non effaçant, on obtient $|h(x)| <|h(y)|$. Cette situation est donc impossible.

Si $n=2$ alors, à nouveau puisque $x$ ne peut pas être facteur propre de $y$, on obtient $x=y$ et l'injection est montrée.

On a donc $n,m \geq 3$. En particulier, on obtient que $|u|\leq \dfrac{1}{3}(|x|+|y|)$ et $|v|\leq \dfrac{1}{3}(|x|+2|y|)$.
Et donc $|u| + |v| \leq \dfrac{2}{3}|x|+|y|  <|x|+|y|$.
Cela signifie que $u^n$ et $v^m$ ont un préfixe commun  de longueur supérieure ou égale à $|u| + |v|$.
D'après la proposition~\ref{Fine}, puisque $u$ et $v$ sont primitifs, on obtient $u=v$.

Il s'en suit que $y=u^{m-n}$ et $x=u^{2n-m}$. De l'égalité,
$h(x)=h(y)$, on tire $2n-m=m-n$ et $x=y$.
\end{proof}

\begin{remark}
On trouve une autre démonstration de la proposition~\ref{PrimInj} dans~\cite{Hsi2003} (Proposition 5.2).
Elle utilise le corollaire~\ref{CorLot4} lorsque $h(x)=h(y)$ et  lorsque $x$ et $y$ n'ont pas la même racine primitive.
Dans ce cas, $x^2y^2$ est un mot primitif dont l'image n'est pas primitive.

\end{remark}

\begin{remark}\label{RemPrimInj}

Un morphisme uniforme 2-primitif est injectif.

En effet, si $h$ est uniforme, avoir $h(x)=h(y)$ avec $x\neq y$, signifie qu'il existe une lettre $x_i$ de $x$ 
et un lettre $y_i$ de $y$ telles que $x_i \neq y_i$ et $h(x_i)=h(y_i)$. On a donc $x_iy_i$ primitif et pas $h(x_iy_i)$.

\end{remark}

\begin{proposition} {\rm{(Theorem 5 dans~\cite{Mit1997})}}

Un morphisme $h$ de $A^*$ vers $B^*$ est primitif si et seulement si $h(A)$ est un code pur.

\end{proposition}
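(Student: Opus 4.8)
The plan is to prove both implications by moving back and forth between a word $u$ over $A$, its primitive root $\rho(u)$, and the image $h(u)$, using only the injectivity of primitive morphisms and the definition of a pure code. The guiding idea is that purity is exactly the device that lets one transport the primitive root across $h$.

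For the direct implication, assume $h$ is primitive. By Proposition~\ref{PrimInj} the morphism $h$ is injective, and since an injective morphism sends $A$ to a code, $h(A)$ is a code. It remains to verify that this code is pure, i.e. that $\rho(w) \in h(A)^*$ for every nonempty $w \in h(A)^*$. Given such a $w$, write $w = h(u)$ with $u \in A^+$ and let $u = \rho(u)^k$ with $k \geq 1$. Since $\rho(u)$ is primitive and $h$ is primitive, the word $h(\rho(u))$ is primitive; and $w = h(u) = h(\rho(u))^k$ exhibits $w$ as a power of this primitive word. By uniqueness of the primitive root, $\rho(w) = h(\rho(u))$, which lies in $h(A)^*$. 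Hence $h(A)$ is a pure code. I note that injectivity is used only to obtain the code property, while purity falls out of primitivity alone.

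For the converse, assume $h(A)$ is a pure code; then $h$ is injective. Let $u \in A^+$ be primitive and suppose, for contradiction, that $h(u)$ is not primitive. Setting $t = \rho(h(u))$, we have $h(u) = t^n$ with $n \geq 2$. Since $h(u) \in h(A)^*$ and the code is pure, $t = \rho(h(u)) \in h(A)^*$, so $t = h(v)$ for some $v \in A^+$ (here $v \neq \varepsilon$ because $t \neq \varepsilon$). Then $h(u) = t^n = h(v)^n = h(v^n)$, and injectivity of $h$ forces $u = v^n$. As $v \neq \varepsilon$ and $n \geq 2$, the word $u$ is a proper power, contradicting its primitivity. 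Therefore $h(u)$ is primitive, and $h$ is a primitive morphism.

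The argument is short and essentially symmetric; the only genuine subtlety, and the step I expect to be the crux, is recognizing that the purity hypothesis is precisely what is needed to rewrite the target-side primitive root $\rho(h(u))$ as an image $h(v)$, which can then be cancelled against $u$ through injectivity. Once this dictionary between $\rho(u)$ on the source side and $\rho(h(u))$ on the target side is set up, both implications are immediate, and no appeal to the finer equation lemmas of the previous sections is required.
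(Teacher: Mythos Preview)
Your proof is correct and follows essentially the same route as the paper's: in both directions one transports the primitive root across $h$ using injectivity on one side and purity on the other, with the key identity $\rho(h(u)) = h(\rho(u))$ doing the work. The only cosmetic differences are that you invoke Proposition~\ref{PrimInj} explicitly to get the code property (the paper leaves this implicit) and you phrase the converse by contradiction rather than directly; the paper cites Corollaire~\ref{CorFine2} where you say ``uniqueness of the primitive root'', which amounts to the same thing.
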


\begin{proof}

\begin{description}

\item[$(\Leftarrow)$] Supposons que $h(A)$ est un code pur et soit $x$ un mot primitif de $A^+$.

Si $h(x)=u^k$ pour un mot primitif $u$ de $A^+$, on a $u^k \in h(A)^*$ et, puisque $h(A)$ est pur, on a  $u \in h(A)$.
Il existe donc un mot $y \in A^+$ tel que $u=h(y)$. Cela implique que $h(x)=h(y^k)$. Puisque $h$ est injective
($h(A)$ est un code), on a donc $x=y^k$.
Enfin, $x$ étant primitif, on obtient $k=1$ et $h(x)$ primitif.

\item[$(\Rightarrow)$] Supposons que $h$ est primitif.

Soit $w=h(v)$ un mot de $h(A)^*$ avec $r=\rho(v)$  $s=\rho(w)$. 

Soient $n,m \geq 1$ les entiers tels que $v=r^n$ et $w=s^m$.

Puisque $s$ et $h(r)$ sont primitifs, d'après le corollaire~\ref{CorFine2}, de l'égalité $s^m=h(r)^n$, 
on obtient bien que $s=h(r) \in h(A)$.

\end{description}

\end{proof}

\begin{lemma}\label{LemEqua3}

Soit $f$ un morphisme uniforme de $\{a,b\}^*$ dans $B^*$.

Soient $\alpha,\beta,\gamma$ et $\delta$ des lettres de $\{a,b\}$ avec $\alpha \neq \beta$.

Si $f(\alpha)f(\beta)=Xf(a)Y$ ou si $f(\alpha)f(\beta)=Xf(b)Y$ avec $X$ un suffixe non vide de $f(\gamma)$
et $Y$ un préfixe non vide de $f(\delta)$ alors $f$ n'est pas primitif.

Plus précisemment, il existe un mot primitif de longueur inférieure ou égale à deux dont l'image n'est pas
primitive.

\end{lemma}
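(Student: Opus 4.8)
The plan is to normalize the configuration, reduce to a single essential case by the mirror symmetry, and then force a commutation (or an internal occurrence) that makes $f(\alpha)$, $f(\beta)$ or $f(\alpha)f(\beta)$ non primitive. First I would record the length information coming from uniformity: writing $L=|f(a)|=|f(b)|$, the equality $f(\alpha)f(\beta)=Xf(c)Y$ gives $|X|+|Y|=L$, so that $X$ is in fact a \emph{proper} non-empty suffix of $f(\gamma)$ and $Y$ a \emph{proper} non-empty prefix of $f(\delta)$. Cutting $f(c)$ at the seam between $f(\alpha)$ and $f(\beta)$, I would write $f(c)=PQ$ with $|P|=|Y|$ and $|Q|=|X|$; reading off the two halves then yields $f(\alpha)=XP$ and $f(\beta)=QY$, with $P,Q\neq\varepsilon$. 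Since $\alpha\neq\beta$ on a two-letter alphabet, each of $c,\gamma,\delta$ equals $\alpha$ or $\beta$.

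Next I would use the mirror morphism $\tilde f$. Taking the mirror image of $f(\alpha)f(\beta)=Xf(c)Y$ turns the hypothesis into one of the same shape for $\tilde f$, with the roles $(\alpha,\beta,\gamma,\delta)$ replaced by $(\beta,\alpha,\delta,\gamma)$ and $c$ unchanged; since $f$ is primitif if and only if $\tilde f$ is, and since a length-$\le 2$ primitive witness stays primitive under mirroring, relabelling lets me assume without loss of generality that $c=\alpha$. Then $f(c)=f(\alpha)$ gives the border equation $XP=PQ$. By Proposition~\ref{Lothaire}(\ref{Lotcase1}) there are words $r,s$ and an integer $m\geq 0$ with $P=r(sr)^m$, $X=rs$, $Q=sr$, so $f(\alpha)=r(sr)^{m+1}$.

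Now I would branch on $\gamma$. If $\gamma=\alpha$, then $X$ is the length-$|Q|$ suffix of $f(\alpha)=PQ$, i.e.\ $X=Q$; combined with $XP=PQ$ this gives $QP=PQ$, so by Remark~\ref{RemLot} (or Proposition~\ref{Lothaire}(\ref{Lotcase2})) $f(\alpha)=PQ$ is non primitif, and the single letter $\alpha$ is the desired witness of length $1$. If $\gamma=\beta$, I would instead bring in $\delta$: the clean sub-case $\delta=\alpha$ forces $Y=P$ (the length-$|P|$ prefix of $f(\alpha)=PQ$), whence $f(\beta)=QP$ and $f(\alpha)f(\beta)=PQQP$, and $X$ must be simultaneously the length-$|Q|$ prefix of $PQ$ and the length-$|Q|$ suffix of $QP$; feeding $X=rs$, $Q=sr$, $P=r(sr)^m$ into the suffix condition collapses (when $m\geq 1$) to $rs=sr$, hence again a commutation and the non primitivité of $f(\alpha)$, while the endpoint $m=0$ and the remaining sub-cases $\delta=\beta$ are handled by exhibiting $f(c)$ (or $sr$, or $PQQP$) as a facteur interne of its own square and invoking Lemma~\ref{factint} together with Remark~\ref{rem110}, or by a direct Fine--Wilf argument (Proposition~\ref{Fine}) on the two overlapping occurrences of $f(\alpha)$ at offsets $0$ and $L-|X|$.

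The routine but genuinely laborious part — and the main obstacle — is the bookkeeping of the length comparisons ($|P|$ versus $|Q|$) and of the parity of the exponent $m$: each leaf of the case tree reduces to a short équation sur les mots that, through Proposition~\ref{Lothaire} and Lemma~\ref{factint}, yields a commutation $uv=vu$ and therefore the non primitivité of one of $f(\alpha)$, $f(\beta)$ or $f(\alpha)f(\beta)$. In every leaf the witness is either a letter (length $1$) or the word $\alpha\beta$ (length $2$, primitif puisque $\alpha\neq\beta$), which is exactly the sharpened conclusion; care must be taken only to verify that the split on $(c,\gamma,\delta)\in\{\alpha,\beta\}^3$ together with the length sub-cases is exhaustive.
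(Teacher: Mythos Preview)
Your overall plan coincides with the paper's: normalise by the mirror symmetry, split on the values of $(\gamma,\delta)$, and reduce each leaf to a commutation. Your normalisation $c=\alpha$ is the mirror of the paper's $c=\beta$, so the case tree is the same up to relabelling. Your treatment of $\gamma=\alpha$ (giving $X=Q$ and $PQ=QP$) is correct and corresponds to the paper's quick case $\delta=b$; your hand-waved case $\gamma=\beta,\ \delta=\beta$ is also genuinely easy --- from $|X|+|Y|=L$ one gets $f(\beta)=YX$, hence $f(\alpha)\cdot YX=X\cdot f(\alpha)\cdot Y$, so $X$ and $f(\alpha)Y$ commute and $f(\alpha\beta)$ is not primitif (equivalently, $f(\alpha\beta)$ is a facteur interne of its square, as you suggest).

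The real gap is the ``endpoint $m=0$'' inside $\gamma=\beta,\ \delta=\alpha$. There one has $P=r$, $Q=sr$, $f(\alpha)=PQ$, $f(\beta)=QP$, and the only extra information is that $X=rs$ is the length-$|Q|$ suffix of $QP$. Neither Lemma~\ref{factint} nor a Fine--Wilf argument on the two occurrences of $f(\alpha)$ at distance $|X|$ settles this: those two occurrences give only the single period $|X|$ of $f(\alpha)$, and there is no second period available. What one actually needs is Lemma~\ref{LemEqua} (the mirror of the Lemma~\ref{LemEqua2} that the paper invokes in its case $|X|<|Y|$). Concretely, writing $Q=Q_1P=WQ_2$ with $|W|=|P|$ and $|Q_1|=|Q_2|=|Q|-|P|$, the prefix and suffix descriptions of $X$ give $PQ_1=Q_2P$ and $Q_1P=WQ_2$; Lemma~\ref{LemEqua} applied with $(y,y',z,z')=(Q_1,Q_2,P,W)$ yields $Q_1=Q_2$, hence $PQ_1=Q_1P$ and $f(\alpha)=PQ_1P$ is not primitif. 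This is exactly the step the paper isolates, and it is not a routine consequence of Proposition~\ref{Lothaire} together with Lemma~\ref{factint}; you should cite Lemma~\ref{LemEqua} (or reprove it) rather than appeal to Fine--Wilf or to a facteur-interne argument here. (Incidentally, the relevant dichotomy on $m$ is $m=0$ versus $m\geq 1$, not the parity of $m$.)
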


\begin{proof}

Nous allons montrer que l'un des mots $f(a)$, $f(b)$, $f(ab)$  ou $f(ba)$ n'est pas primitif.

Par symétrie, nous ne traitons que l'équation $f(\alpha)f(\beta)=Xf(b)Y$.

Remarquons que, si on avait pu avoir $\alpha=\beta=b$ alors $f(b)$ aurait été un facteur interne de $f(b)f(b)$.
D'après le lemme~\ref{factint}, $f(b)$ n'aurait pas été primitif.

Par image miroir, sans perte de généralité, on peut supposer $\alpha=a$ et $\beta=b$.

Si $\delta=b$ alors $f(b)$ est facteur interne de $f(b)f(b)$.
D'après le lemme~\ref{factint}, $f(b)$ n'est pas primitif.
On suppose donc $\delta=a$.

Si $\gamma=a$ alors $f(ab)$ est facteur interne de $f(ab)f(ab)$.
D'après le lemme~\ref{factint}, $f(ab)$ n'est pas primitif.
On suppose donc $\gamma=b$.

\neto Si $|X| = |Y|$, on a $2|X|=|XY|=|f(b)|$. Puisque $f(b)$ finit par $X$ et par $Y$, on obtient $X=Y$.
Et puisque $f(b)Y$ finit par $f(b)$, on a $f(b)=X^2$ : $f(b)$ n'est pas primitif.

\neto Si $|X| > |Y|$, puisque $f(a)$ commence par $Y$ et par $X$, il existe un mot non vide $X'$ 
tel que $X=YX'$. 
Puisque $|XY|=|f(a)|=|f(b)|$ et que $f(b)Y$ finit par $XY=YX'Y$ et par $f(b)$ et donc par $X$,
on obtient que $X'Y=YX'$.
D'après la propriété~\ref{Lotcase2} de la proposition~\ref{Lothaire}, cela implique que 
$X'$ et $Y$ sont des puissances du même mot. Cela signifie que $f(b)=YX'Y$ n'est pas primitif.

\neto Si $|X| < |Y|$, puisque $f(b)$ finit par $X$ et par $Y$, il existe un mot non vide $Y'$ tel que $Y=Y'X$. 
Puisque $|XY|=|f(a)|=|f(b)|$ et que $f(b)Y$ finit par $XY$ et par $f(b)$,
on obtient que $f(b)=XY'X$.

Il s'en suit que $f(a)=(XX)Y'$ et $f(a)$ commence par $Y=Y'X$. Il existe donc un mot $Z$
tel que $f(a)=Y'(XZ)$ avec $|Z|=|X|$.

En prenant, $y=XY'$, $y'=Y'X$, $z=X$ et $z'=Z$, d'après le lemme \ref{LemEqua2}, on obtient que
$XY'=Y'X$. D'après la propriété~\ref{Lotcase2} de la proposition~\ref{Lothaire}, cela implique que 
$X$ et $Y'$ sont des puissances du même mot. Cela signifie que $f(a)$ n'est pas primitif.
%
%
%
%
%
%
\end{proof}

\begin{proposition} {\rm{(Theorem 12 dans~\cite{Mit1997})}}

Soit $n\geq 2$ un entier. Il existe des morphismes binaires qui sont primitifs jusqu'à $n$ mais qui ne sont pas primitifs.

\end{proposition}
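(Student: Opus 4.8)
The plan is to exhibit, for each $n \ge 2$, an explicit binary morphism that fails primitivity only on long words. I would take $f_n : \{a,b\}^* \to \{a,b\}^*$ defined by $f_n(a)=a$ and $f_n(b)=b\,a^{2n}\,b$. Note that $f_n$ is non-effaçant and injective, since $\{a,\, b a^{2n} b\}$ is a prefix code.

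The easy half is that $f_n$ is not primitive. The word $b a^{2n}$ is primitive (it contains a single $b$) and has length $2n+1 > n$, yet
\[ f_n(b a^{2n}) = (b a^{2n} b)\,a^{2n} = b a^{2n} b a^{2n} = (b a^{2n})^2 \]
is not primitive. Hence $f_n$ ne préserve pas la primitivité.

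The substantial half is to prove that $f_n$ is $n$-primitif, i.e. that $f_n(w)$ is primitive for every primitive $w$ with $|w| \le n$. First I would record the block structure: writing $w = a^{j_0} b a^{j_1} b \cdots b a^{j_m}$ with $m$ occurrences of $b$, we have
\[ f_n(w) = a^{j_0}\,(b a^{2n} b)\, a^{j_1}\, (b a^{2n} b)\, a^{j_2} \cdots (b a^{2n} b)\, a^{j_m}, \]
so the gaps (maximal runs of $a$) separating consecutive occurrences of $b$ in $f_n(w)$ form the sequence $2n, j_1, 2n, j_2, \ldots, j_{m-1}, 2n$, flanked by $a^{j_0}$ and $a^{j_m}$. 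Since $|w| \le n$ forces every $j_i \le n-1 < 2n$, the value $2n$ is strictly the largest gap and occurs exactly as the internal gap of each block $b a^{2n} b$. Now suppose $f_n(w) = V^k$ with $k \ge 2$, and distinguish cases on $m$. If $m = 0$ then $w = a$ and $f_n(w)=a$ is primitive. If $m \ge 2$, then $f_n(w)$ has $2m$ occurrences of $b$, repeated with index-period $\mu = 2m/k$; comparing the gaps equal to $2n$ with the strictly smaller gaps $j_i$ shows $\mu$ cannot be odd, so the period boundary of $V$ never splits a block $b a^{2n} b$ but falls inside an inter-block run $a^{j_i}$. Consequently $V$ decomposes over the code $\{a,\, b a^{2n} b\}$, i.e. $V = f_n(u)$ for some $u$ with $w = u^k$, contradicting the primitivity of $w$. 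The remaining case $m = 1$, $w = a^i b a^j$, gives $f_n(w) = a^i b a^{2n} b a^j$ with exactly two $b$'s, so only $k = 2$ is possible; matching the two halves forces $i = j = n$ and $|w| = 2n+1 > n$, contradicting $|w| \le n$. In every case $f_n(w)$ is primitive.

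The main obstacle is the case $m \ge 2$: turning the informal claim "the cut falls inside an $a$-run, so $V = f_n(u)$" into a rigorous statement. I expect to formalise it through the gap sequence above, showing that a genuine $k$-th power forces the exponent sequence $(j_0, j_1, \ldots, j_m)$ to be periodic (so that $w$ is itself a power), using Proposition~\ref{Fine} together with Lemma~\ref{factint} to control the overlaps produced by the period. The boundary blocks $a^{j_0}$ and $a^{j_m}$ are the delicate point, since the period of $f_n(w)$ splits the flanking runs and one must check the two pieces recombine coherently. The single-$b$ computation is exactly the borderline phenomenon — realised for instance by $f_n(a^n b a^n) = (a^n b a^n)^2$ — that prevents lowering the threshold below $n+1$.
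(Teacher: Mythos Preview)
Your construction is correct and takes a genuinely different route from the paper's. The paper uses $f(a)=aba$, $f(b)=(baa)^{n-1}b$ and exhibits $f(a^{n}b)=\bigl((aba)^{n-1}ab\bigr)^{2}$; its $n$-primitivity argument rests on two ``synchronisation facts'' describing how a copy of $f(a)$ or of $f(b)$ can sit inside an image, which force any primitive power-witness to have length at least $n+1$. Your choice $f_n(a)=a$, $f_n(b)=ba^{2n}b$ is more transparent: since $f_n(a)$ is a single letter, the block structure of $f_n(w)$ is read off directly, and the heart of the matter is the periodicity of the gap sequence $(2n,j_1,2n,j_2,\dots,j_{m-1},2n)$ combined with the strict inequality $j_i<2n$ whenever $|w|\le n$. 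Two small points. In the case $m=1$ you only obtain $i+j=2n$, not $i=j=n$; the conclusion $|w|=2n+1>n$ is unchanged. In the case $m\ge 2$ you will not need Proposition~\ref{Fine} or Lemma~\ref{factint}: the parity observation on the gap sequence already shows that $\mu=2m/k$ is even, hence the boundary of $V$ lies between the closing $b$ of one block and the opening $b$ of the next, so $V\in\{a,\,ba^{2n}b\}^{*}=f_n(\{a,b\}^{*})$ and injectivity of $f_n$ finishes. As a bonus, your analysis actually shows that $f_n$ is $2n$-primitif (the shortest primitive witness has length $2n+1$), which is stronger than required.
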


\begin{proof}

On considère le morphisme $f$ de $\{a,b\}^*$ vers $\{a,b\}^*$ défini par
$f(a)=aba$ et $f(b)=(baa)^{n-1}b$.

On a $f(a^nb)=(aba)^{n-1}aba \, (baa)^{n-1}b=(aba)^{n-1}ab \mid (aba)^{n-1}ab$.

Remarquons dans un premier temps que $f(a)$ commence et finit par $a$ et que
$f(b)$ commence et finit par $b$.
Ce qui implique que $f$ est injective.

\textit{Fait 1 :} Si $f(b)=sf(a)p$ avec $s$ un suffixe de l'image d'un mot et $p$ le préfixe de l'image d'un mot,
alors, puisque $f(b)=ba(aba)^{n-2}ab$,  on a nécessairement $s=ba(aba)^{\alpha}$ et
$p=(aba)^{\beta}ab$ avec $\alpha + \beta =n-3$. Ce qui signifie que $s$ est un suffixe de $f(a^{\alpha+1})$
et $p$ est un préfixe de $f(a^{\beta+1})$.

\textit{Fait 2 :} On ne peut pas avoir $sf(b)=f(b)p$ avec $s$ un suffixe propre ($\neq f(b)$) de l'image d'un 
mot et $p$ le préfixe propre de l'image d'un mot.
En effet, l'équation $sf(b)=sb(aab)^{n-2}aab=b(aab)^{n-2}aab \, p$ implique qu'au moins deux 
facteurs $aab$ de chacun des mots de l'équation soient alignés.
Cela signifierait que $p$ commencerait par $(aab)^{\ell}aab$ pour un entier $\ell$ : c'est impossible.

\vsu
Si $f(w)=u^k$ pour un entier $k\geq 2$ avec $w$ primitif alors $w$ contient au moins une fois la lettre $b$.
Si l'une des deux occurences $u^{k-1}$ contient $f(b)$, d'après les faits 1 et 2, alors $|w| \geq 1 + n$.
Sinon, cela signifie que $|f(b)|>|u^{k-1}|$. Mais il existe bien un facteur $f(a)$ dans $u$. Toujours
d'après le fait 1, on obtient à nouveau  $|w| \geq 1 + n$.
\end{proof}

\begin{proposition} \label{Pro2prim} {\rm{(Theorem 10 dans~\cite{Mit1997})}}

Un morphisme uniforme binaire est primitif si et seulement s'il est 2-primitif.

\end{proposition}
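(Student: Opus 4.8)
Le sens direct est immédiat, puisqu'un morphisme primitif est en particulier $2$-primitif. Pour la réciproque, je partirais d'un morphisme uniforme binaire $2$-primitif, c'est-à-dire tel que $f(a)$, $f(b)$, $f(ab)$ et $f(ba)$ soient primitifs. Posons $A=f(a)$, $B=f(b)$ et $L=|A|=|B|$ ; comme $f(ab)=AB$ est primitif, on a $A\neq B$. Soit $w$ un mot primitif de longueur $n$. Si $n=1$, alors $f(w)\in\{A,B\}$ est primitif, et si $n\geq 2$ le mot $w$ contient les deux lettres, donc au moins une rupture, c'est-à-dire un indice $j$ tel que $w_j\neq w_{j+1}$. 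Je raisonnerais par l'absurde en supposant $f(w)=u^k$ avec $u$ primitif et $k\geq 2$, et je noterais $p=|u|=Ln/k$ la période de $f(w)$. Si $L$ divise $p$ (ce qui équivaut à $k$ divise $n$), la période est alignée sur les blocs $f(w_1),\dots,f(w_n)$ : le décalage par $p$ envoie chaque bloc sur un bloc, donc $f(w_i)=f(w_{i+n/k})$, puis $w_i=w_{i+n/k}$ puisque $A\neq B$ ; ainsi $w=(w_1\cdots w_{n/k})^k$, ce qui contredit sa primitivité.

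Le cœur de la preuve est le cas où $L$ ne divise pas $p$, que je traiterais à l'aide du lemme~\ref{LemEqua3}. Fixons une rupture $j$ et considérons la fenêtre $f(w_jw_{j+1})$, formée des deux blocs situés aux positions $(j-1)L+1$ à $(j+1)L$. Si l'on peut la translater de $\pm p$ en restant dans $f(w)$ (soit $(j-1)L\geq p$, soit $(j+1)L+p\leq Ln$), alors, comme $p$ n'est pas multiple de $L$, la fenêtre translatée commence strictement à l'intérieur d'un bloc (son décalage par rapport aux blocs est non nul) : elle se lit donc $Xf(c)Y$ avec $X$ un suffixe non vide d'un bloc, $f(c)$ un bloc entier ($c\in\{a,b\}$) et $Y$ un préfixe non vide d'un bloc. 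Par périodicité, $f(w_jw_{j+1})=Xf(c)Y$, et comme $w_j\neq w_{j+1}$ c'est précisément l'hypothèse du lemme~\ref{LemEqua3} : celui-ci fournit un mot primitif de longueur au plus $2$ d'image non primitive, ce qui contredit la $2$-primitivité de $f$.

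Il reste à examiner le cas où aucune rupture n'est translatable. La translatabilité n'échoue en $j$ que si $n-1-n/k<j<n/k+1$ ; comme $k\geq 2$ entraîne $n/k\leq n/2$, cette bande est incluse dans $(n/2-1,n/2+1)$ et ne contient qu'au plus deux entiers. Les seules formes possibles de $w$ sont alors $w=a^{j_0}b^{n-j_0}$ (une rupture) ou $w=a^cba^c$ avec $n=2c+1$ (deux ruptures), à échange des lettres près. Dans le premier cas, lorsque les deux exposants sont $\geq 2$, la proposition~\ref{PrimPuisXYZ} appliquée à $A^{j_0}B^{n-j_0}=u^k$ impose à $A$ et $B$ la même racine primitive, donc $A=B$ : contradiction ; les sous-cas où un exposant vaut $1$ n'apparaissent que pour $n\leq 3$ et se vérifient directement. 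Dans le second cas, je remarquerais d'abord que la rupture en $c+1$ est translatable dès que $k\geq 3$, ce qui force $k=2$ ; alors, en écrivant $B=B_1B_2$ avec $|B_1|=|B_2|=L/2$ et $u=A^cB_1=B_2A^c$, la propriété~\ref{Lotcase1} de la proposition~\ref{Lothaire} donne $B_1=sr$, $B_2=rs$ et $A^c=(rs)^mr$. Un décompte de longueurs montre que $r$ et $s$ ne peuvent être tous deux non vides, et si $r=\varepsilon$ (ou $s=\varepsilon$) alors $B=s^2$ (ou $r^2$) n'est pas primitif : ultime contradiction.

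La principale difficulté réside dans ce dernier paragraphe : les ruptures \emph{centrales} interdisent l'emploi direct du lemme~\ref{LemEqua3}, et il faut isoler les mots très particuliers $a^{j_0}b^{n-j_0}$ et $a^cba^c$ pour les traiter séparément, le cas palindromique $a^cba^c$ se réglant par l'argument de longueur précédent plutôt que par une translation.
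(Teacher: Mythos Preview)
Your proof is correct, and it shares with the paper the use of Lemma~\ref{LemEqua3} as the decisive tool, but the overall organisation is genuinely different. The paper works by contrapositive on a minimal counterexample and splits according to $|u|$ versus $L$ and then $k=2$ versus $k\geq 3$: for $|u|<L$ it exploits a repeated first letter; for $|u|>L$ and $k\geq 3$ it looks at the two cut points $i_2,i_3$ where $u$ and $u^2$ end and feeds the resulting overlaps into Lemma~\ref{LemEqua3}; for $k=2$ it does a direct hands-on analysis of the middle letter $w_{p+1}$. Your decomposition is instead by alignment ($L\mid p$ or not) and then by \emph{translatability of a rupture}: whenever a block boundary $w_jw_{j+1}$ with $w_j\neq w_{j+1}$ can be shifted by $\pm p$ inside $f(w)$, Lemma~\ref{LemEqua3} fires immediately. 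This is conceptually clean and handles most cases in one stroke, but the residual words $a^{j_0}b^{n-j_0}$ force you to invoke Proposition~\ref{PrimPuisXYZ} (the Lyndon--Sch\"utzenberger equation $x^my^n=z^q$), a nontrivial external result that the paper's proof avoids entirely. Conversely, the paper's $k=2$ case is a somewhat ad~hoc computation, whereas your palindromic case $a^cba^c$ is dispatched by a short length count after Lothaire. A minor point: your claim that ``les sous-cas o\`u un exposant vaut $1$ n'apparaissent que pour $n\leq 3$'' is correct (the non-translatability inequalities give $n<2+2/(k-1)$), but it would be worth spelling out the verification of $f(abb)$ and $f(aab)$ explicitly rather than leaving it as ``se v\'erifient directement'', since these do require a small argument (e.g.\ splitting $B=B_1B_2$ when $k=2$).
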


\begin{proof}

L'implication étant naturelle, on s'intéressera uniquement à la réciproque que l'on montre par contraposée.

Soit $L \geq 1$ un entier et soit $f$ un morphisme $L$-uniforme  de $\{a,b\}^*$ vers $B^*$.

Soit $w$  un mot primitif de longueur $n$.
On suppose que $f(w)$ n'est pas primitif \cad\ qu'il existe un mot non vide $u$ et un entier $k \geq 2$ tels que 
$f(w)=u^k$. Quitte à considérer sa racine primitive, on peut supposer sans perte de généralité que
$u$ est primitif.
On suppose aussi que la longueur de $w$ (et par conséquent de $f(w)$) est minimale. Si $n \leq 2$, cela termine la preuve.
On travaillera donc par l'absurde avec $n \geq 3$.

D'après la remarque~\ref{RemPrimInj}, si $f$ n'est pas injectif alors $f$ n'est pas $2$-primitif. Ce
qui termine à nouveau la preuve.

On a $|f(w)|=|w| \times L= k \times |u|$. Puisque $f$ est injectif, $|u|$ ne peut pas être un multiple de $L$ 
(\cad\ $k$ un diviseur de $|w|$) sinon $w$ ne serait pas primitif.

\textit{Cas 1 : $|u|<L$}

Soit $i \geq 2$ le plus petit indice tel que $\sub{w}{i}=\sub{w}{1}$. 
Si un tel indice n'existait pas, il suffit alors de considérer l'image miroir de $f(w)$.

Puisque $f(\sub{w}{i})$ est facteur de $u^k$, il existe un suffixe $u_1$ de $u$, 
un préfixe $u_2$ de $u$ et un entier $j \geq 0$ tels que  $f(\sub{w}{i})=u_1u^ju_2$.

Si $u_1=\varepsilon$ alors $f(\sub{w}{1.i-1})$ est une puissance de $u$ : contraire à l'hypothèse de la longueur minimale de $w$.

Si $u_1 \neq \varepsilon$ alors $u$ préfixe de $f(\sub{w}{1})$ est facteur interne de $uu$. Ce qui signifie que 
$u$ n'est pas primitif~: contraire à l'hypothèse sur $u$.

\textit{Cas 2 : $|u| > L$}

\quad \textit{Cas 2.1 : $k=2$}

Cela signifie que $n=2p+1$ est impair. 

Il existe deux mots non vides $x$ et $y$  tels que $u=f(\sub{w}{1..p})x= yf(\sub{w}{p+2.. n})$,
$f(w_{p+1})=xy$ et $|x|=|y|$.

On en déduit que $f(\sub{w}{1})$ commence par $y$
et $f(\sub{w}{n})$ finit par $x$.

Si $x=y$ (ce qui est le cas lorsque  $\sub{w}{1}= \sub{w}{p+1}$ ou lorsque  
$\sub{w}{n} = \sub{w}{p+1}$) alors 
$f(\sub{w}{p+1})$ n'est pas primitif~: fin de la preuve.
On a donc $x \neq y$, $\sub{w}{1}=\sub{w}{n} \neq \sub{w}{p+1}$ et  $f(\sub{w}{1})=f(\sub{w}{n})=yx$.

On a donc soit $\sub{w}{1}=\sub{w}{n}=a$ et $\sub{w}{p+1}=b$ soit $\sub{w}{1}=\sub{w}{n}=b$ 
et $\sub{w}{p+1}=a$.
Ces deux cas étant symétriques, on ne traite que le premier.

Soit $\ell_1$ le plus petit entier tel que $\sub{w}{1+\ell_1}=b$
et 
soit $\ell_2$ le plus petit entier tel que $\sub{w}{p+1+\ell_2}=a$.
De tels entiers existent et on a $\ell_1 \leq p$ et $\ell_2 \leq p$.

Le mot $uy$ commence par $f(a^{\ell_1})f(b)=(yx)^{\ell_1}xy$ et par $y \,f(b)^{\ell_2-1}f(a)=(yx)^{\ell_2}yyx$.

Si $\ell_1 = \ell_2$, alors $x=y$ (et $f(b)$ n'est pas primitif)~: contraire au cas présent.

Si $\ell_1 \neq \ell_2$,  alors $xy=yx$ et $f(b)$ n'est pas primitif~: contraire aux hypothèses.

\quad \textit{Cas 2.1 : $k \geq 3$}

Soit $i_2$ le plus petit indice tel que $u$ soit préfixe de $f(\sub{w}{1..i_2})$
et soit $i_3$ le plus petit indice tel que $uu$ soit préfixe de $f(\sub{w}{1..i_3})$.

Il existe des mots $p_2$, $s_2$, $p_3$ et $s_3$ tels que $f(w_{i_2})=p_2s_2$
et $f(w_{i_3})=p_3s_3$. Par définition de $i_2$ et $i_3$, on ne peut pas avoir $p_2 = \varepsilon$
ou $p_3 = \varepsilon$. On exclut aussi les cas $s_2 = \varepsilon$
ou $s_3 =\varepsilon$ par la primitivité de $w$ ou par la minimalité de longueur de $w$.

Remarquons aussi qu'avoir $|p_2|=|p_3|$ (ou $|s_2|=|s_3|$) impliquerait que $|u|$ soit un multiple de $L$
et donc que $p_2 = \varepsilon$, c'est un cas que nous venons d'exclure.

Soit $X$ le mot tel que $Xp_2$ soit un suffixe de $u$ de longueur $L$
et soit $Y$ le mot tel que $s_2Y$ soit un préfixe de $u$ de longueur $L$. On a donc
$Xf(\sub{w}{i_2})Y=f(\sub{w}{1})f(\sub{w}{n})$.

Si $\sub{w}{1} \neq \sub{w}{n}$, d'après le lemme~\ref{LemEqua3}, il existe un mot primitif de longueur 
inférieure ou égale à deux dont l'image n'est pas primitive~: fin de la preuve.

Si $\sub{w}{1}=\sub{w}{n}=\sub{w}{i_2}$  alors $f(\sub{w}{i_2})$ est facteur interne de $f(\sub{w}{i_2})f(\sub{w}{i_2})$~: $f(\sub{w}{i_2})$ n'est pas primitif.

De même, si $\sub{w}{1}=\sub{w}{n}=\sub{w}{i_3}$, on obtient que $f(\sub{w}{i_3})$ est facteur interne de 
$f(\sub{w}{i_3})f(\sub{w}{i_3})$ et qu'il n'est donc pas primitif.

Il nous reste donc le cas $\sub{w}{1}=\sub{w}{n} \neq \sub{w}{i_2}=\sub{w}{i_3}$. 
Sans perte de généralité, on peut supposer les deux premiers égaux à $a$ et les deux suivants égaux à $b$.

Soit $\ell_1$ le plus grand entier tel que $\sub{w}{\ell_1}=a$.
Un tel entier existe et on a $1 \leq \ell_1 \leq i_2 -1$. 
On considère alors le facteur $f(\sub{w}{\ell_1})f(\sub{w}{\ell_1+1})=f(a)f(b)$ du premier $u$
et son occurence dans le deuxième $u$. 
D'après le lemme~\ref{LemEqua3}, il existe un mot primitif de longueur inférieure ou égale
à deux dont l'image n'est pas primitive.
\end{proof}

\section{Morphismes primitifs et puissance $k \geq 2$}

\begin{lemma}  {\rm{(Proposition 5.4 dans~\cite{Hsi2003})}}

Soit $f$ un morphisme injectif de $A^*$ vers $B^*$ et soit 
$w$ un mot primitif de $A^+$.
On suppose que $f(w)=u^m$ avec $u \neq \varepsilon$ primitif et $m\geq 1$ un entier.
Pour tout mot $v\in A^+$, on a  $f(v)=u^\ell$ avec $\ell \geq 1$ un entier si et seulement si $v$ est une puissance de $w$.

\end{lemma}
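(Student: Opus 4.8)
The plan is to prove the two implications separately, the reverse one being immediate. If $v$ is a power of $w$, say $v = w^t$ with $t \geq 1$, then $f(v) = f(w)^t = (u^m)^t = u^{mt}$, so $f(v)$ is a positive power of $u$ with $\ell = mt \geq 1$. Nothing more is needed here.

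For the forward implication, assume $f(v) = u^\ell$ with $\ell \geq 1$. The key idea is to \emph{equalize} the two exponents of $u$ produced by $w$ and by $v$, by raising each word to the other's exponent. I compute $f(w^\ell) = f(w)^\ell = (u^m)^\ell = u^{m\ell}$ on one side, and $f(v^m) = f(v)^m = (u^\ell)^m = u^{m\ell}$ on the other. Hence $f(w^\ell) = f(v^m)$, and since $f$ is injective and both $w^\ell$ and $v^m$ are nonempty, this yields the word equation $w^\ell = v^m$.

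It then remains to read off the conclusion from $w^\ell = v^m$. As $\ell \geq 1$ and $m \geq 1$, Corollary~\ref{CorFine2} applies and tells me that $w$ and $v$ are powers of a common word $t$, say $w = t^a$ and $v = t^b$ with $a,b \geq 1$. Now I use the hypothesis that $w$ is primitive: the equality $w = t^a$ with $w$ primitive forces $a = 1$ and $t = w$, so $v = t^b = w^b$ is a power of $w$, which is exactly what is claimed.

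The argument is essentially forced once the exponents are equalized, so I do not expect a genuine obstacle. The only points that require care are purely bookkeeping: checking that $\ell \geq 1$ and $m \geq 1$ so that $w^\ell = v^m$ is a legitimate equality of nonempty words, to which both injectivity of $f$ and Corollary~\ref{CorFine2} legitimately apply; and the final appeal to primitivity of $w$ to identify the common root $t$ with $w$ itself. Note that the primitivity of $u$, although part of the hypotheses, plays no role in this proof — the whole content of the statement is carried by the injectivity of $f$ together with the Fine--Wilf corollary.
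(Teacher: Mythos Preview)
Your proof is correct and follows exactly the same route as the paper: raise $v$ to the $m$-th power and $w$ to the $\ell$-th to obtain $f(v^m)=u^{m\ell}=f(w^\ell)$, use injectivity to get $v^m=w^\ell$, then invoke Corollary~\ref{CorFine2} and the primitivity of $w$. The paper omits the (trivial) reverse implication, and your remark that the primitivity of $u$ is never used is a valid side observation.
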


\begin{proof}

Puisque $f(v^m)=u^{ \ell m}=f(w^\ell)$ et que $f$ est injective, on en déduit que $v^m=w^\ell$.
D'après le corollaire~\ref{CorFine2}, on obtient que $v$ et $w$ sont puissances d'un même mot.
Ce mot ne peut être que $w$ lui-même puisque celui-ci est primitif.
\end{proof}

\vsd

La propriété suivante est enoncée pour les morphismes sans carré dans~\cite{Mit1996}. Mais elle est vraie pour tout entier $k \geq 2$.

\begin{lemma}\label{BifInj}

Soit $f$ un morphisme de $A^*$ vers $B^*$ et soit $k \geq 2$ un entier.

Si $f$ est sans-puissance $k$ alors $f$ est bifixe. C'est donc un morphisme injectif.

\end{lemma}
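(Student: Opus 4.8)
Le plan est de raisonner par contraposée : je montrerais que si $f$ n'est pas bifixe, alors $f$ n'est pas sans puissance $k$. Grâce à l'image miroir, il suffit de traiter la propriété \og préfixe \fg. En effet, si $f$ est sans puissance $k$, alors $\tilde{f}$ l'est aussi : un mot $w$ est sans puissance $k$ si et seulement si $\tilde{w}$ l'est (le miroir de $u^k$ est $\tilde{u}^k$), et l'on a $\tilde{f}(w)=\widetilde{f(\tilde{w})}$ ; de plus $f$ n'est pas suffixe si et seulement si $\tilde{f}$ n'est pas préfixe (car $f(a)$ est suffixe de $f(b)$ si et seulement si $\widetilde{f(a)}=\tilde{f}(a)$ est préfixe de $\tilde{f}(b)$). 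Enfin, comme il est noté dans l'article qu'un morphisme préfixe est injectif et non-effaçant, la conclusion \og injectif \fg\ découlera de \og bifixe \fg.

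Je commencerais par établir que $f$ est non-effaçant, étape qui doit précéder l'argument préfixe car ce dernier reposera sur $f(a)\neq\varepsilon$. Supposons au contraire $f(\alpha)=\varepsilon$ pour une lettre $\alpha$. Puisque $f\neq\epsilon$, il existe une lettre $\beta\neq\alpha$ avec $f(\beta)\neq\varepsilon$. Le mot $w=(\beta\alpha)^{k-1}\beta$ est de longueur $2k-1$ et de période $2$, donc il ne contient aucune puissance $k$ : une telle puissance de racine de longueur $1$ exigerait $k$ lettres égales consécutives, absentes ici, et une de racine de longueur au moins $2$ serait de longueur au moins $2k>2k-1$. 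Ainsi $w$ est sans puissance $k$, alors que $f(w)=(f(\beta)f(\alpha))^{k-1}f(\beta)=f(\beta)^{k}$ est une puissance $k$ non triviale : contradiction. Donc $f$ est non-effaçant.

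Il resterait à montrer que $f$ est préfixe. Supposons le contraire : il existe deux lettres distinctes $a$ et $b$ telles que $f(a)$ soit un préfixe de $f(b)$, disons $f(b)=f(a)\,t$. Comme $f$ est non-effaçant, $f(a)\neq\varepsilon$. Le mot $w=a^{k-1}b$ est sans puissance $k$ : sa plus grande puissance interne est $a^{k-1}$, et comme la lettre $b$ n'y figure qu'une fois, $w$ n'est puissance propre d'aucun mot plus court. Pourtant $f(w)=f(a)^{k-1}f(b)=f(a)^{k}t$ admet $f(a)^{k}$ comme préfixe, donc contient une puissance $k$ : contradiction. Par conséquent $f$ est préfixe, et par image miroir $f$ est suffixe, d'où $f$ bifixe et donc injectif. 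Le seul point réellement technique est la vérification élémentaire que les deux mots-tests $(\beta\alpha)^{k-1}\beta$ et $a^{k-1}b$ sont bien sans puissance $k$ pour tout $k\geq 2$, en particulier l'exclusion des puissances de racine de longueur $1$ ; tout le reste se ramène à des identités morphiques immédiates.
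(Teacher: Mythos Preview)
Your proof is correct and follows the same core idea as the paper's: both use the test word $a^{k-1}b$ (the paper writes $x^{k-1}y$) to produce the $k$-th power $f(a)^{k}$ in the image when $f(a)$ is a prefix of $f(b)$, and both reduce the suffix case to the prefix case. Your version is actually more careful than the paper's brief argument: you first establish separately that $f$ is non-effaçant via the test word $(\beta\alpha)^{k-1}\beta$, which is needed so that $f(a)^{k}$ truly counts as a puissance~$k$; the paper's proof tacitly relies on $f(x)\neq\varepsilon$ without spelling it out.
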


\begin{proof}

Par exemple, si $f$ n'était pas un morphisme préfixe, il existerait deux lettres différentes $x$ et $y$ tel que 
$f(x)$ serait préfixe de $f(y)$. 
Dans ce cas, l'image du mot $x^{k-1}y$ qui est sans puissance $k$ contiendrait $f(x)^{k}$.

De même, si $f$ n'était pas un morphisme suffixe.
\end{proof}

\begin{cor}

Un morphisme uniforme binaire sans puissance $k \geq 2$ est primitif.

\end{cor}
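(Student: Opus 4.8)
Le plan est de se ramener, grâce à la proposition~\ref{Pro2prim}, à ne vérifier que la $2$-primitivité : un morphisme uniforme binaire étant primitif si et seulement s'il est $2$-primitif, il suffira de montrer que l'image par $f$ de tout mot primitif de longueur inférieure ou égale à deux est primitive. Or les seuls mots primitifs de longueur $\leq 2$ sur $\{a,b\}$ sont $a$, $b$, $ab$ et $ba$ (les mots $aa$ et $bb$ n'étant pas primitifs). Comme $f$ est uniforme et non nul, il est non-effaçant (lemme~\ref{BifInj}), de sorte que $f(w) \neq \varepsilon$ pour tout mot non vide $w$.

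L'idée principale est d'``amplifier'' une éventuelle non-primitivité d'une image courte en une véritable puissance $k$ en appliquant $f$ à une puissance $(k-1)$ du mot testé. Je commencerais donc par observer que, pour chacun des quatre mots $w \in \{a,b,ab,ba\}$, le mot $w^{k-1}$ est sans puissance $k$ : en effet $a^{k-1}$ (resp. $b^{k-1}$) ne contient que la puissance $a^{k-1}$ (resp. $b^{k-1}$), d'exposant $k-1 < k$, tandis que $(ab)^{k-1}$ et $(ba)^{k-1}$ ne contiennent ni facteur $aa$ ni facteur $bb$, de sorte que leur plus grande puissance est elle aussi d'exposant $k-1$.

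Écrivons alors $f(w) = s^{\ell}$ où $s = \rho(f(w))$ est la racine primitive de $f(w)$ et $\ell \geq 1$. Puisque $f$ est sans puissance $k$ et que $w^{k-1}$ est sans puissance $k$, le mot $f(w^{k-1}) = f(w)^{k-1} = s^{\ell(k-1)}$ doit être sans puissance $k$. Si l'on avait $\ell \geq 2$, alors $\ell(k-1) \geq 2(k-1) \geq k$ (car $k \geq 2$), donc $s^{\ell(k-1)}$ admettrait $s^k$ comme facteur, c'est-à-dire une puissance $k$ : contradiction. On a donc $\ell = 1$ et $f(w)$ est primitif. Ceci valant pour les quatre mots $a$, $b$, $ab$ et $ba$, le morphisme $f$ est $2$-primitif, et la proposition~\ref{Pro2prim} permet de conclure qu'il est primitif.

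Le point délicat n'est pas calculatoire : c'est le choix du bon témoin. Tester $w$ lui-même ne suffit pas (une image de la forme $f(a) = t^{k-1}$ resterait sans puissance $k$ tout en étant non primitive), et c'est le passage à $w^{k-1}$ --- la plus grande puissance de $w$ qui demeure sans puissance $k$ --- qui transforme la non-primitivité de l'image en une authentique puissance $k$. La seule vérification à mener avec un minimum de soin est donc que $(ab)^{k-1}$ et $(ba)^{k-1}$ restent sans puissance $k$, ce qui tient uniquement à l'absence de facteurs $aa$ et $bb$.
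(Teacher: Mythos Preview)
Your proof is correct and follows essentially the same route as the paper: both reduce to $2$-primitivity via Proposition~\ref{Pro2prim}, then amplify a hypothetical factorisation $f(w)=u^m$ ($m\geq 2$, $w\in\{a,b,ab,ba\}$) into the $k$-th power $u^{m(k-1)}$ inside $f(w^{k-1})$, using that $w^{k-1}$ is sans puissance~$k$. The paper phrases it by contraposition while you argue directly, but the key idea and the key lemma are identical.
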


\begin{proof}

Par contraposition, on suppose qu'un morphisme $f$ uniforme défini sur $\{a,b\}$ n'est pas primitif.
D'après la proposition~\ref{Pro2prim}, on a $f(x)=u^m$  avec $x \in\{a,b,ab,ba\}$, $u$ un mot non vide 
et $m \geq 2$ un entier.

On a donc $f(x^{k-1})=u^{m(k-1)}$. Mais $x^{k-1} \in\{a^{k-1},b^{k-1},(ab),(ba)^{k-1}\}$
est un mot sans puissance $k$ et $m(k-1) \geq k$~: $f$ n'est pas sans puissance $k$.
\end{proof}

\vsd

Le lemme suivant est énoncé pour les morphismes sans carré dans~\cite{Mit1997} (Corollaire~7). Mais il est vrai (avec sensiblement la même démonstration) pour tout entier $k \geq 2$.

\begin{lemma}
 
Il existe des morphismes primitifs qui ne sont pas sans-puissance $k$ pour tout entier $k \geq 2$.

\end{lemma}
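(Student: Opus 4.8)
Le plan est d'exhiber, pour chaque entier $k \geq 2$, un morphisme explicite puis de vérifier séparément qu'il est primitif et qu'il n'est pas sans puissance $k$. Je choisirais le morphisme $f$ de $\{a,b\}^*$ dans $\{a,b\}^*$ défini par $f(a)=a^kb$ et $f(b)=b$. L'intérêt de ce candidat est que l'échec de la préservation de l'absence de puissance $k$ est immédiat : la lettre $a$, de longueur $1$, ne contient aucun facteur de longueur supérieure ou égale à $k$, donc aucune puissance $k$, alors que son image $f(a)=a^kb$ contient le facteur $a^k=(a)^k$, qui est une puissance $k$. L'essentiel du travail consisterait donc à établir que $f$ est effectivement primitif.

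Pour la primitivité, je m'appuierais sur la caractérisation établie plus haut selon laquelle un morphisme $h$ est primitif si et seulement si $h(A)$ est un code pur ; il suffirait donc de montrer que $X=\{a^kb,\,b\}$ est un code pur. Que $X$ soit un code est facile : les deux mots commencent par des lettres distinctes ($a^kb$ par $a$ et $b$ par $b$), donc $X$ est un code préfixe, donc un code.

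Le point véritablement délicat, et la principale difficulté, est la pureté de $X$. Je l'obtiendrais via une caractérisation combinatoire de $X^*$ : un mot non vide $w$ appartient à $X^*$ si et seulement si $w$ se termine par la lettre $b$ et si chacune de ses plages maximales de $a$ (les facteurs maximaux de la forme $a^j$) est de longueur exactement $k$. En effet, tout mot de $X$ se termine par $b$ et n'a de plage de $a$ que de longueur $k$ ; réciproquement, un mot possédant ces deux propriétés se décode de gauche à droite sans ambiguïté, chaque $b$ étant soit précédé immédiatement de $a^k$ (donnant le facteur $a^kb$), soit non précédé de $a$ (donnant le facteur $b$). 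Partant de là, soit $w=x^n$ avec $x=\rho(w)$ primitif et $n \geq 1$ : comme $w$ se termine par $b$, il en va de même de $x$ ; et comme $x$ se termine par $b$, aucune plage de $a$ ne chevauche la frontière entre deux copies consécutives de $x$ dans $x^n$, de sorte que les plages maximales de $a$ de $w$ sont exactement les plages, internes, des copies de $x$. Toutes étant de longueur $k$, celles de $x$ le sont aussi, et $x$ se termine par $b$ : ainsi $x \in X^*$, ce qui prouve la pureté et donc la primitivité de $f$.

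Je signalerais pour finir que la non-préservation de l'absence de puissance $k$ peut aussi se lire via le lemme~\ref{BifInj} : puisque $f(b)=b$ est un suffixe de $f(a)=a^kb$, le morphisme $f$ n'est pas suffixe, donc pas bifixe, donc pas sans puissance $k$ ; l'exhibition explicite du mot $a$ reste toutefois plus parlante. La construction valant pour tout entier $k \geq 2$, la preuve serait alors complète.
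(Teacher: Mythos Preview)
Your argument is correct, and it differs from the paper's in both the example chosen and the method of verifying primitivity.

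The paper works over a three-letter alphabet with $f(a)=ac^k$, $f(b)=bc^k$, $f(c)=abc^k$, and proves primitivity by a direct synchronisation argument: in any equality $f(w)=u^p$ with $p\ge 2$, every maximal $c$-run in $f(w)$ has length exactly $k$, while $u$ must begin with a non-$c$ letter and end in $c^k$; hence the boundaries between successive copies of $u$ fall on boundaries between images of letters, and injectivity of the (prefix) code forces $w=v^p$. Your two-letter example $f(a)=a^kb$, $f(b)=b$ is simpler, and you route primitivity through the pure-code characterisation already proved in the paper, via an explicit combinatorial description of $\{a^kb,b\}^*$. The two arguments are close cousins --- your $b$'s play the same synchronising role as the paper's $c^k$-runs --- but yours has the small bonus of living on a binary alphabet and of invoking a result the paper has just established. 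Your alternative justification via Lemma~\ref{BifInj} also applies to the paper's morphism, since there $f(b)=bc^k$ is a suffix of $f(c)=abc^k$.
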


\begin{proof}

Soit $k \geq 2$ un entier. On considère le morphisme $f$ de $\{a,b,c\}^*$ vers $\{a,b,c\}^*$ défini par
$f(a)=ac^k$, 
$f(b)=bc^k$ et
$f(c)=abc^k$.

Supposons qu'il existe un mot $w$
tel que $f(w)$ ne soit pas primitif.
On peut écrire $f(w)=u^p$ pour un mot $u$ non vide et un entier $p\geq 2$.

Par un critère de longueur, le mot $u$ finit nécessairement par $c^k$ (sinon $u$ ne contiendrait que des $c$ ce qui est absurde).

Cela signifie qu'il existe des entiers $0 \leq i_q \leq |w|$ tels que $i_0=0$ et 
$u=f(\sub{w}{i_{q-1}+1..i_q})$ pour tout $1 \leq q \leq p$.
Puisque $f$ est un morphisme préfixe, on a $f$ injectif. Cela implique que tous les $\sub{w}{i_{q-1}+1..i_q}$
sont égaux et donc que 
$w=(\sub{w}{1..i_1})^p$~:~$w$ n'est pas primitif.
\end{proof}

\vst

On définit les entiers ${\left( t_k \right)}_{k \geq 2}$ par $t_2=3$, $t_3=4$,
$t_k=\frac{k^2}{2}$  si $k \geq 4$ est pair et
$t_k=\frac{k \times (k - 1)}{2} +2$ si $k \geq 5$ est pair.
En particulier, cela signifie que, quand $k \geq 4$, on a
$t_k=k \lfloor \frac{k}{2} \rfloor + 2(k \bmod 2)$. 

\vsu

\begin{prop}
\label{pur2}
\cite{Wla2001}
Soit $k \geq 2$ un entier.
Si un morphisme binaire est sans puissance $k$ jusqu'à $t_k$ alors il est primitif.

\end{prop}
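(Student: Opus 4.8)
La stratégie est de raisonner par contraposée : on suppose $f$ binaire, sans puissance $k$ jusqu'à $t_k$, et \emph{non} primitif, et l'on fabrique un mot sans puissance $k$ de longueur au plus $t_k$ dont l'image contient une puissance $k$, ce qui contredit l'hypothèse. Remarquons d'abord que l'hypothèse dit exactement que l'image de tout mot sans puissance $k$ de longueur $\leq t_k$ est sans puissance $k$. Comme $t_k \geq k$, le mot $\alpha^{k-1}\beta$ (avec $\alpha \neq \beta$) est sans puissance $k$ et de longueur $k \leq t_k$ ; l'argument du lemme~\ref{BifInj} montre alors que $f$ est bifixe, donc injectif. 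Le but est donc de déduire de la non-primitivité l'existence du témoin court annoncé.

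On fixe un mot primitif $w$ de longueur minimale tel que $f(w)=u^p$, avec $u$ primitif et $p \geq 2$. Commençons par une réduction sur les mots courts : si l'un des mots $f(a)$, $f(b)$, $f(ab)$, $f(ba)$ n'était pas primitif, disons $f(x)=t^q$ avec $q \geq 2$ et $t$ primitif, alors $q \leq k-1$ (car $x$ est sans puissance $k$ de longueur $\leq 2 \leq t_k$, donc $f(x)$ est sans puissance $k$), et le mot $x^{\lceil k/q\rceil}$ serait sans puissance $k$ puisque $\lceil k/q\rceil \leq \lceil k/2\rceil \leq k-1$, tout en ayant pour image $t^{q\lceil k/q\rceil}$ qui contient $t^k$ ; comme sa longueur est au plus $2\lceil k/2\rceil \leq t_k$, on aurait déjà la contradiction cherchée. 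On peut donc supposer $f(a)$, $f(b)$, $f(ab)$, $f(ba)$ primitifs, et comme $aa$ et $bb$ ne sont pas primitifs, la minimalité de $w$ force $|w| \geq 3$.

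On analyse ensuite la structure de $f(w)=u^p$. Aucune frontière interne entre deux copies de $u$ ne coïncide avec une frontière interne entre deux blocs $f(\sub{w}{i})$ : sinon un préfixe propre non vide de $w$ aurait pour image une puissance non triviale de $u$ et serait, d'après le lemme rappelé plus haut de~\cite{Hsi2003}, une puissance de $w$, ce qui est impossible par minimalité. Comme $u^p$ est de période $|u|$, le bloc $f(\sub{w}{i})$ est le facteur de $u^{\omega}$ de longueur $|f(\sub{w}{i})|$ commençant à la phase $\phi_i = |f(\sub{w}{1..i-1})|$ réduite modulo $|u|$ ; le caractère bifixe et l'injectivité de $f$ entraînent que la lettre $\sub{w}{i}$ ne dépend que de $\phi_i$. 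Les phases suivent donc une récurrence déterministe modulo $|u|$, partent de $0$ et y reviennent après $|w|$ pas en effectuant exactement $p$ tours ; la primitivité de $w$ interdit toute répétition de phase, d'où $3 \leq |w| \leq |u|$.

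Le cœur de la preuve est l'extraction du témoin court. Autour de chacune des $p-1$ frontières internes de $u$, le bloc qu'elle coupe fournit une équation entre un suffixe et un préfixe de $f(a)$ ou de $f(b)$, exactement comme dans le cas~2 de la proposition~\ref{Pro2prim} et le lemme~\ref{LemEqua3}, mais sans l'hypothèse d'uniformité. En résolvant ces équations à l'aide de la proposition~\ref{Lothaire}, des lemmes~\ref{factint} et~\ref{Equa3dontdemi} et du théorème de Fine et Wilf (proposition~\ref{Fine}, corollaires~\ref{CorFine1} et~\ref{Kera}), on localise dans $w^{\omega}$ un court facteur sans puissance $k$ dont l'image porte une puissance $k$, le plus souvent logée à une jonction de blocs et non alignée sur la période $|u|$. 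On distinguera suivant que $|u|$ est inférieur ou supérieur à $\min(|f(a)|,|f(b)|)$ et suivant la parité de $p$. L'obstacle principal n'est pas l'existence d'un tel témoin mais son contrôle quantitatif : il faut montrer que sa longueur n'excède jamais $t_k \sim k^2/2$, ce qui exige d'examiner la configuration la plus défavorable des longueurs $|f(a)|$, $|f(b)|$, $|u|$ et $p$ — c'est elle qui impose la forme quadratique de $t_k$ — et de traiter à part les petites valeurs $k=2$ et $k=3$, pour lesquelles les bornes $t_2=3$ et $t_3=4$ sont spécifiques.
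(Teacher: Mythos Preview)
The paper does not give a self-contained proof of this proposition: it cites \cite{Wla2001} and indicates only that the argument is based on the Lentin--Sch\"utzenberger characterisation (Lemma~\ref{Lentin}), namely that a binary morphism is primitive if and only if $f(w)$ is primitive for every $w \in a^*b \cup ab^*$. That reduction is the decisive structural input: once one knows a non-primitive image can only occur on a word of the shape $a^nb$ or $ab^n$, bounding the smallest $n$ that produces a $k$-th power becomes a concrete one-parameter optimisation, and that is exactly where the value $t_k\sim k^2/2$ (and the special values $t_2=3$, $t_3=4$) comes from.

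Your proposal takes a genuinely different route and does \emph{not} use Lemma~\ref{Lentin}. The reductions you make (bifix, hence injectivity; the case $|w|\leq 2$; no repeated phase, hence $|w|\leq|u|$) are all sound, but the final paragraph is not a proof: it is a programme. You announce that the boundary equations can be ``resolved'' with Proposition~\ref{Lothaire}, Lemmas~\ref{factint} and~\ref{Equa3dontdemi} and Fine--Wilf, and that one then has to examine ``la configuration la plus d\'efavorable'' and treat $k=2,3$ separately, but you never carry out any of this. Nothing in your argument shows why the worst witness has length at most $t_k$ rather than, say, $k(k+1)/2$ (Leconte's older bound) or something larger; the quantitative control is simply asserted. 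Without the Lentin--Sch\"utzenberger reduction you have no a~priori restriction on the shape of the minimal $w$, and it is precisely that restriction that makes the tight bound $t_k$ computable in \cite{Wla2001}. As it stands, your text establishes the qualitative statement ``non-primitive $\Rightarrow$ some $k$-th power appears in the image of a short word'' only modulo an unproved case analysis, and the sharp bound $t_k$ not at all.
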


\begin{remark}

La borne $t_k$ est optimale et améliore celle donnée par Leconte dans sa thèse~\cite{Lec1985}
qui est $\frac{k \times(k+1)}{2}$.

\end{remark}

\begin{cor}

Un morphisme binaire sans puissance $k$ est primitif.

\end{cor}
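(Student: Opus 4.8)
The plan is to derive this statement directly from Proposition~\ref{pur2}, of which it is an immediate specialisation. The only thing that needs to be observed is that the hypothesis \emph{sans puissance $k$} is strictly stronger than the hypothesis \emph{sans puissance $k$ jusqu'à $t_k$} required by Proposition~\ref{pur2}, so no new combinatorics is involved.

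Concretely, I would argue as follows. Recall that, by definition, a morphism $f$ is \emph{sans puissance $k$} precisely when it is $n$-sans puissance $k$ for every integer $n \geq 1$. In particular, taking $n = t_k$, such an $f$ is sans puissance $k$ jusqu'à $t_k$: the images by $f$ of all $k$-power-free words of $\{a,b\}^+$ of length at most $t_k$ are again $k$-power-free. Since $f$ is binary, Proposition~\ref{pur2} applies verbatim and gives that $f$ is primitive, which is exactly the claim.

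I do not expect any genuine obstacle here, as all the combinatorial content has already been absorbed into Proposition~\ref{pur2}; the corollary is purely a matter of restricting a global hypothesis to a finite initial segment. It is worth emphasising where the real difficulty lay: the whole force of Proposition~\ref{pur2} is that primitivity of a binary morphism can be decided from the images of the \emph{finitely many} words of length $\leq t_k$, and the substantial work was the identification of the explicit (and optimal) threshold $t_k$. Once that finite-threshold statement is granted, power-freeness beyond $t_k$ is never needed, so assuming full $k$-power-freeness is amply sufficient to conclude.
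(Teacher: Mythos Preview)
Your argument is correct and is exactly the intended one: the paper states this as an immediate corollary of Proposition~\ref{pur2}, with no separate proof, precisely because a morphism that is sans puissance $k$ is in particular sans puissance $k$ jusqu'à $t_k$. There is nothing to add.
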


La preuve de la proposition~\ref{pur2} est basée sur un résultat de Lentin and Sch\"utzenberger :

\begin{lemma}{\rm \cite{LS1967}}
\label{Lentin}
Un morphisme $f$ sur $\{ a,b \}$ est primitif si et seulement si
$f(w)$ est primitif pour tous les mots $w \in a^*b \cup ab^*$.
\end{lemma}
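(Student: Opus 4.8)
The forward implication is immediate and I would dispose of it first: every word of $a^*b \cup ab^*$ is primitive (a word $a^nb$ has a single $b$ and a word $ab^n$ a single $a$, so it cannot equal $u^k$ with $k\ge 2$, which would repeat that letter; single letters are primitive too). Hence a primitive morphism sends all of them to primitive words. All the content is in the converse, which I would prove by contraposition: assuming some primitive word has a non-primitive image, I would exhibit such a word already inside $a^*b\cup ab^*$.

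Write $\alpha=f(a)$ and $\beta=f(b)$. The first step is to repackage the hypothesis: $\alpha$ and $\beta$ are primitive (images of $a$ and $b$), they are distinct (otherwise $f(ab)=\alpha^2$ is not primitive), and $\alpha^m\beta^n$ is primitive for all $m,n\ge 1$. Indeed the cases $m,n\ge 2$ are free from Corollary~\ref{CorLot4}, while the border cases $\alpha^n\beta=f(a^nb)$ and $\alpha\beta^n=f(ab^n)$ are precisely the standing hypotheses. Since two distinct primitive words cannot be powers of a common word, they do not commute (Proposition~\ref{Lothaire}, Remark~\ref{RemLot}), so $\alpha\beta\ne\beta\alpha$, $\{\alpha,\beta\}$ is a code, and $f$ is injective.

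The structural observation I would lean on next is that primitivity of $f(w)$ depends only on the conjugacy class of $w$: if $w=xy$ and $w'=yx$ then $f(w)=f(x)f(y)$ and $f(w')=f(y)f(x)$ are conjugate words, and primitivity is a conjugacy invariant (Corollaries~\ref{NbreConjPrim} and~\ref{Corprim112}). So it suffices to test one representative of each conjugacy class, and every primitive word using both letters is conjugate to one of the shape $w=a^{p_1}b^{q_1}\cdots a^{p_r}b^{q_r}$ with all exponents $\ge 1$. When $r=1$ the image $\alpha^{p_1}\beta^{q_1}$ is primitive by the previous step, and single-letter words are covered directly; thus no counterexample can have $r\le 1$, and the case $r\ge 2$ is the crux.

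For $r\ge 2$ I would take a primitive $w$ of minimal length with $f(w)=u^k$, $u$ primitive and $k\ge 2$. The decisive reduction is to show that the primitive root $u$ is itself an image, i.e. $u=f(z)$ for some $z$: granting this, $f(w)=f(z)^k=f(z^k)$ and injectivity force $w=z^k$, contradicting the primitivity of $w$. Everything therefore reduces to a synchronization statement, namely that the factorization of $f(w)$ into copies of $u$ must be compatible with its factorization into the blocks $f(a),f(b)$. This is the hard part, and it is exactly where the combinatorics of overlaps enters: I would rule out a period of $u^k$ that cuts across an $\alpha$-block or a $\beta$-block by confronting a shifted copy of $f(w)$ with itself and invoking the Fine--Wilf theorem (Proposition~\ref{Fine}) together with its common-factor form (Corollary~\ref{Kera}), using crucially that every two-block product $\alpha^m\beta^n$ is primitive to forbid such a misalignment. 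Making this alignment argument clean enough that the period of $u^k$ descends to a genuine period of the letter sequence $w$ is the step I expect to demand the most care.
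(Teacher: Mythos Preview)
The paper does not actually prove this lemma: it is stated with a citation to Lentin--Sch\"utzenberger \cite{LS1967} and then used as a black box in the proof of Proposition~\ref{pur2}. So there is no ``paper's own proof'' to compare against; I can only assess your proposal on its merits.

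Your forward direction and the initial repackaging of the converse are correct: the words of $a^*b\cup ab^*$ are primitive, the hypotheses force $\alpha=f(a)$ and $\beta=f(b)$ to be distinct primitive words, Corollary~\ref{CorLot4} together with the standing hypotheses makes every $\alpha^m\beta^n$ primitive, and the conjugacy reduction to words of the shape $a^{p_1}b^{q_1}\cdots a^{p_r}b^{q_r}$ is clean. The $r=1$ case is then legitimately finished.

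The gap is exactly where you locate it yourself. For $r\ge 2$ you reduce everything to the claim that the primitive root $u$ of $f(w)$ must already lie in $\{\alpha,\beta\}^*$ --- equivalently, that $\{\alpha,\beta\}^*$ is pure. But this \emph{is} the Lentin--Sch\"utzenberger theorem: it is not a corollary of it, and the passage ``rule out a period of $u^k$ that cuts across an $\alpha$-block or a $\beta$-block by confronting a shifted copy of $f(w)$ with itself and invoking Fine--Wilf'' is a hope rather than an argument. A cut of $u^k$ at position $|u|$ landing inside a block gives you a single non-trivial overlap, not the two commensurable periods on a long common factor that Fine--Wilf or Corollary~\ref{Kera} would need; and it is not at all clear how the primitivity of the two-block products $\alpha^m\beta^n$ alone forbids such a cut without a substantial case analysis on where the successive multiples of $|u|$ fall relative to the block boundaries. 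The classical proofs of this purity statement (in \cite{LS1967} or later expositions) do require that work, and your sketch does not supply it. As written, the proposal has the right scaffolding but stops before the actual content of the lemma.
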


\begin{lemma}\label{lemmepuissmax}

Soit $w$ un mot primitif et soit $j \geq 1$ un entier.

Soit $k_j$ le plus grand entier tel que $w^j$ contienne une puissance $k_j$.
On a $k_{j} \leq \max\{j,k_1,k_2,k_3\}$.

En particulier, il existe un entier $j_0 \geq 1$ tel que 
si $j \geq j_0 $ alors $k_{j} = j$.
\end{lemma}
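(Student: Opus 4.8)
The plan is to fix, for each $j$, a factor of $w^j$ of the form $u^{k_j}$ (with $u\neq\varepsilon$) that witnesses the maximal power, and to reduce at once to the case where $u$ is primitive: if $u=t^m$ with $t=\rho(u)$ and $m\geq 2$, then $t^{mk_j}$ would be a factor of $w^j$ of strictly larger power, contradicting the maximality of $k_j$. Two elementary remarks frame the statement. First, since $w^j$ contains $w^j$ itself as a factor and $w\neq\varepsilon$, one has $k_j\geq j$. Second, since $w$ is a factor of $w^2$ which is a factor of $w^3$, every power occurring in $w$ or in $w^2$ also occurs in $w^3$, so $k_1\leq k_2\leq k_3$ and hence $\max\{k_1,k_2,k_3\}=k_3$. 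It therefore suffices to prove $k_j\leq\max\{j,k_3\}$.

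The key observation is that $u^{k_j}$ is simultaneously a factor of a power of $u$ and of a power of $w$, i.e. a common factor of length $k_j|u|$. I would then split according to whether this length reaches the Keränen threshold $|u|+|w|-\gcd(|u|,|w|)$ of Corollaire~\ref{Kera}. If $k_j|u|\geq |u|+|w|-\gcd(|u|,|w|)$, then Corollaire~\ref{Kera} applies and, as $u$ and $w$ are both primitive, it forces $u$ and $w$ to be conjugate; in particular $|u|=|w|$. But then a pure length count, $k_j|u|=|u^{k_j}|\leq |w^j|=j|u|$, gives $k_j\leq j$ immediately.

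If on the contrary $k_j|u|<|u|+|w|-\gcd(|u|,|w|)\leq |u|+|w|$, I distinguish the sizes of $u$ and $w$. When $|u|>|w|$ this yields $k_j|u|<2|u|$, hence $k_j=1\leq j$. When $|u|\leq|w|$ it yields $|u^{k_j}|<2|w|$, so the occurrence of $u^{k_j}$ inside $w^j$ meets at most three consecutive copies of $w$ and is therefore a factor of $w^3$ (this holds for every $j$, including $j\leq 2$, since $w^j$ itself embeds in $w^3$); by definition of $k_3$ this gives $k_j\leq k_3$. Combining the cases yields $k_j\leq\max\{j,k_3\}=\max\{j,k_1,k_2,k_3\}$, as claimed. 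For the final clause, using $k_j\geq j$ and taking $j_0=\max\{k_1,k_2,k_3\}$, for every $j\geq j_0$ the right-hand side equals $j$, so $k_j=j$.

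I expect the main subtlety to be the length collapse in the conjugate case: the crucial point is that conjugacy of two primitive words forces equal length, after which the bound $k_j\leq j$ is purely arithmetic and no delicate synchronization of occurrences is needed. The only other point requiring care is the block-counting in the short non-conjugate case, namely that $|u^{k_j}|<2|w|$ indeed confines $u^{k_j}$ to three consecutive blocks of $w^j$, hence to $w^3$; everything else reduces to the correct application of Corollaire~\ref{Kera} and the reduction to a primitive root.
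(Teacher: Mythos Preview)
Your argument is correct and reaches the conclusion by a different route than the paper. The paper proceeds by induction on $j\geq 4$: assuming the bound at $j-1$, it shows that if $k_j>k_{j-1}$ then the witnessing factor $v^{k_j}$ must overrun both ends of $w^j$, which forces $|v^{k_j}|>(j-2)|w|\geq 2|w|$; Corollaire~\ref{Kera} then makes $v$ a conjugate of $w$, and a direct inspection gives $k_j\in\{j-1,j\}$. You instead avoid induction entirely by splitting once on whether $|u^{k_j}|$ reaches the Ker\"anen threshold: above it, conjugacy plus the one-line length count $k_j|u|\leq j|w|=j|u|$ yields $k_j\leq j$; below it, $|u^{k_j}|<2|w|$ forces $u^{k_j}$ into three consecutive blocks of $w$, hence $k_j\leq k_3$. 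Your proof is a bit more economical and makes the role of $k_3$ transparent; the paper's inductive version, on the other hand, yields the slightly sharper byproduct that for $j\geq 4$ one always has $k_j\in\{k_{j-1},\,j-1,\,j\}$. Note also that your subcase $|u|>|w|$ in Case~2 is in fact vacuous (it would force $k_j=1$, while $k_j\geq j$), but this does not affect the validity of the argument.
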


\begin{remark}

Autrement dit, il existe un entier $j_0 \geq 1$ tel que 
si $j \geq j_0 $ alors $w^j$ est sans puissance $j+1$.

\end{remark}

\begin{proof}

Le résultat est trivial si $j=1$, $j=2$ ou $j=3$.

%
%
%
%
%
%
%
%
On va montrer la propriété par récurrence pour $q \geq 4$.

On suppose que $k_{q-1} \leq \max\{q-1,k_1,k_2,k_3\}$ pour un entier $q \geq 4$.


Si $k_{q}=k_{q-1}$, on obtient immédiatement que $k_{q} \leq \max\{q-1,k_1,k_2,k_3\} \leq \max\{q,k_1,k_2,k_3\}$.

Si $k_{q}>k_{q-1}$,
soit $v$ un mot non vide tel que $w^q=pv^{k_q}s$. Par définition de $k_q$, le mot $v$ est primitif. 
On a $|p|<|w|$ car sinon $v^{k_q}$ serait facteur de $w^{q-1}$; ce qui est en contradiction avec
l'hypothèse $k_{q}>k_{q-1}$. De même, on a nécessairement $|s|<|w|$.

Cela signifie que $|v^{k_q}|>|w^{q-2}| \geq 2|w|$.
De plus, on a $|v^{k_q}| \geq |v^{k_{q-1}}| + |v | \geq 2|v|$.
Le mot $v^{k_q}$ est donc un facteur commun d'une puissance de $v$ et d'une puissance de $w$
de longueur supérieure ou égale à $|v|+|w|$.

D'après le corollaire~\ref{Kera}, il existe deux mots $t_1$ et $t_2$ tels que $v$ soit une puissance de $t_1t_2$
et $w$ soit une puissance de $t_2t_1$. Puisque $v$ et $w$ sont primitifs, on en déduit que $v=t_1t_2$
et $w=t_2t_1$.
Si $t_1 \neq \varepsilon$, on obtient $k_q=q-1$ et si $t_1 =\varepsilon$, on obtient $k_q=q$.
\Cad\  $k_{q} \leq \max\{q,k_1,k_2,k_3\}$
\end{proof}

\begin{remark}

Le fait que $k_2$ ne peut être majoré par $\{2, k_1\}$ est mis en évidence en utilisant par exemple 
$w=ca(bc)^nb$ avec $n \geq 2$. Et
le fait que $k_3$ ne peut être majoré par $\{3, k_1, k_2\}$ est mis en évidence en utilisant par exemple 
$w=bcab(abcab)^nabca$ avec $n \geq 2$.

\end{remark}

\begin{proposition}\label{PuisskPrim}

Un morphisme sans puissance $k$ avec $k \geq 5$ est primitif.

\end{proposition}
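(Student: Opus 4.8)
The plan is to argue by contradiction, exploiting two features of $k$-power-free morphisms when $k \geq 5$: first, that such a morphism preserves the absence of $m$-th powers for \emph{every} $m \geq k$ (Proposition~\ref{puissancek}, iterated), and second, that a sufficiently high power of a primitive word is almost power-free in the sense of Lemma~\ref{lemmepuissmax}. Suppose then that $f$ is $k$-power-free but not primitive. By Lemma~\ref{BifInj} the morphism $f$ is bifix, hence in particular non-erasing. Since $f$ is not primitive, there is a primitive word $w$ with $f(w)$ not primitive; as $f$ is non-erasing and $w \neq \varepsilon$, we have $f(w) \neq \varepsilon$, so we may write $f(w) = u^p$ for some non-empty $u$ and some integer $p \geq 2$.

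First I would upgrade the hypothesis: applying Proposition~\ref{puissancek} repeatedly, starting from $k \geq 5$, shows that $f$ is $m$-power-free for every integer $m \geq k$. It therefore suffices to exhibit a single word that is $m$-power-free for some $m \geq k$ but whose image contains an $m$-th power, since that would contradict the $m$-power-freeness of $f$.

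Next I would feed a high power of $w$ into the morphism. By the ``in particular'' clause of Lemma~\ref{lemmepuissmax}, there is an integer $j_0 \geq 1$ such that $w^j$ contains no power higher than $j$ whenever $j \geq j_0$; equivalently, $w^j$ is $(j+1)$-power-free for all such $j$. Fix $j = \max\{j_0, k-1\}$ and set $v = w^j$ and $m = j+1 \geq k$. On the one hand, $v$ is $m$-power-free by the choice of $j$. On the other hand, $f(v) = f(w)^j = u^{pj}$, and since $p \geq 2$ we have $pj \geq 2j \geq j+1 = m$, so $u^{m}$ is a prefix of $f(v)$; as $u \neq \varepsilon$, the word $f(v)$ contains a genuine $m$-th power and is thus \emph{not} $m$-power-free. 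But $m \geq k$, so $f$ is $m$-power-free, which forces the $m$-power-free word $v$ to have an $m$-power-free image, contradicting the previous sentence. Hence no such $w$ exists and $f$ is primitive.

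I expect the only genuinely delicate point to be the presence of the exponents $k_1, k_2, k_3$ in Lemma~\ref{lemmepuissmax}: a priori these depend on $w$ and could be large, so one cannot simply assert that a modest power such as $w^{k-1}$ is $k$-power-free. The way around this is exactly to use the asymptotic consequence $k_j = j$ for $j \geq j_0$ and to let the exponent $m = j+1$ grow together with $j$, which costs nothing thanks to the unbounded power-freeness handed to us by Proposition~\ref{puissancek}. Everything else is elementary bookkeeping on lengths and exponents, made routine by the inequalities above.
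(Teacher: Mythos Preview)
Your proof is correct and follows essentially the same approach as the paper's: both arguments take a primitive $w$ with $f(w)=u^p$, invoke Lemma~\ref{lemmepuissmax} to find a large $j$ for which $w^j$ is $(j+1)$-power-free, observe that $f(w^j)=u^{pj}$ nonetheless contains a $(j+1)$-th power, and close the loop via Proposition~\ref{puissancek}. The only cosmetic difference is that you iterate Proposition~\ref{puissancek} \emph{forward} (from $k$ up to $m=j+1$) before reaching the contradiction, whereas the paper iterates its contrapositive \emph{backward} (from the failure at level $j+1$ down to level $k$); these are logically equivalent, and your extra care in noting that $f$ is non-erasing (so $u\neq\varepsilon$) is a detail the paper leaves implicit.
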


\begin{proof}

Par contraposée, on suppose qu'un morphisme $f : A^* \rightarrow B^*$  n'est pas primitif.

Soit $w$ un mot primitif tel que $f(w)=u^n$ avec $n\geq 2$.

D'après la propriété~\ref{lemmepuissmax}, il existe un entier $j_0$ tel que pour tout $k \geq j_0$ la puissance
maximale dans $w^k$ soit inférieur ou égale à $k$.
Autrement dit, le mot $w^k$ est donc sans puissance $k+1$

Mais $f(w^k)=u^{n \times k}$ avec $n \times k \geq k+1$. Cela signifie que $f$
n'est pas sans puissance $k+1$ pour tout $k \geq \max\{5;j_0\}$.
D'après la propriété~\ref{puissancek}, $f$ n'est donc pas sans puissance $k$ pour tout entier $5 \leq k \leq \max\{5;j_0\}$.
\end{proof}

\begin{proposition}

Un morphisme uniforme sans puissance $k$ avec $k \geq 3$ est primitif.

\end{proposition}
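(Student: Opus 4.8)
L'idée directrice est que cette proposition s'obtient en reprenant presque à l'identique la démonstration de la proposition~\ref{PuisskPrim}, à ceci près que l'hypothèse d'uniformité permet de remplacer la proposition~\ref{puissancek} (valable seulement pour $k \geq 5$) par la proposition~\ref{puissancekunif} (valable dès $k \geq 3$). C'est précisément ce gain sur l'indice de départ qui fait descendre la conclusion de $k \geq 5$ à $k \geq 3$.

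Je raisonnerais par contraposée. Supposant qu'un morphisme uniforme $f : A^* \rightarrow B^*$ n'est pas primitif, je fixerais un mot primitif $w$ tel que $f(w)$ ne soit pas primitif et j'écrirais $f(w)=u^n$ avec $u$ primitif (quitte à prendre la racine primitive) et $n \geq 2$. Le but est alors de vérifier que, pour chaque entier $k \geq 3$ fixé, $f$ n'est pas sans puissance $k$.

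Ensuite, j'invoquerais le lemme~\ref{lemmepuissmax} pour disposer d'un entier $j_0 \geq 1$ tel que $w^j$ soit sans puissance $j+1$ pour tout $j \geq j_0$. Étant donné $k \geq 3$, je poserais $K=\max\{k,j_0\}$, de sorte que $w^K$ soit sans puissance $K+1$. Comme $f(w^K)=u^{nK}$ avec $nK \geq 2K \geq K+1$, l'image de ce mot sans puissance $K+1$ contient une puissance $\geq K+1$ : $f$ n'est donc pas sans puissance $K+1$. J'appliquerais enfin la contraposée de la proposition~\ref{puissancekunif} de manière répétée, en descendant l'indice de $K+1$ jusqu'à $k$ ; chaque étape utilise l'implication pour un indice $m$ compris entre $k$ et $K$, donc toujours $\geq 3$, et fournit au terme de la descente que $f$ n'est pas sans puissance $k$. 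Ceci valant pour tout $k \geq 3$, la contraposée est établie.

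Le seul point à contrôler --- et c'est exactement là que l'uniformité intervient --- est que la chaîne d'implications ne quitte jamais le domaine de validité de la proposition~\ref{puissancekunif} : le plus petit indice rencontré étant $m=k \geq 3$, on peut descendre jusqu'à $3$, alors que la preuve de la proposition~\ref{PuisskPrim} butait sur le seuil $5$ imposé par la proposition~\ref{puissancek}. Il n'y a donc pas d'obstacle combinatoire nouveau : toute la substance technique est déjà contenue dans le lemme~\ref{lemmepuissmax} et dans la proposition~\ref{puissancekunif}, et la difficulté se réduit à orchestrer correctement la descente sur l'indice.
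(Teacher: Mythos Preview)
Your proposal is correct and follows exactly the approach of the paper: the paper's own proof consists of a single sentence stating that the argument is identical to that of Proposition~\ref{PuisskPrim} but with Proposition~\ref{puissancekunif} in place of Proposition~\ref{puissancek}, and you have faithfully reconstructed precisely this substitution, including the contraposition, the use of Lemma~\ref{lemmepuissmax} to obtain $j_0$, and the descent on the power index via the contrapositive of Proposition~\ref{puissancekunif}.
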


\begin{proof}
La preuve est la même que celle de la proposition~\ref{PuisskPrim} mais en utilisant la propriété~\ref{puissancekunif}.
\end{proof}

\section{Morphismes sans carré}

\begin{lemma}\label{Lemme51}
{\rm {(Lemme 4.3 dans~\cite{Hsi2003})}}

Soit $f$ un morphisme de $A^*$ vers $B^*$ et soient $a$ et $b$ deux lettres de $A$.
On suppose qu'il existe deux mots $X$ et $Y$ non simultanément vides de $B^*$ tels que $f(a)=Xf(b)Y$.

Si $X$ est un suffixe non vide de $f(a)$ ou si $Y$ est un préfixe non vide de $f(a)$ alors $f$ n'est pas 
3-sans carré.

\end{lemma}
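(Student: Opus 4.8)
The plan is to exhibit, in each case, a square-free word of length at most three whose image under $f$ contains a nonempty square; by the definition of a $3$-sans carré morphism this shows that $f$ is not $3$-sans carré. First I would dispose of the symmetry. The hypotheses are symmetric under the mirror operation: from $f(a)=Xf(b)Y$ we get $\tilde f(a)=\tilde Y\,\tilde f(b)\,\tilde X$, which turns ``$X$ is a suffix of $f(a)$'' into ``$\tilde X$ is a prefix of $\tilde f(a)$'' and thereby exchanges the two alternatives of the hypothesis, while $w$ is square-free of length $\le 3$ iff $\tilde w$ is and $\tilde f(\tilde w)=\widetilde{f(w)}$. Hence it suffices to treat the case where $X$ is a nonempty suffix of $f(a)$, the other case following by applying this one to $\tilde f$. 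Note also that $a\ne b$: otherwise $f(a)=Xf(a)Y$ would force $|X|+|Y|=0$, contradicting that $X$ and $Y$ are not both empty; consequently $ab$, $ba$ and $aba$ are all square-free.

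Next I would handle the degenerate case $f(b)=\varepsilon$ separately: then $f(aba)=f(a)f(b)f(a)=f(a)^2$ is a nonempty square (recall $f(a)\ne\varepsilon$ since $X\ne\varepsilon$), so $aba$ already works. Assume from now on $f(b)\ne\varepsilon$. The key observation is that both $X$ and $Y$ are suffixes of $f(a)$ --- $X$ by hypothesis and $Y$ because $f(a)=Xf(b)Y$ --- so one of them is a suffix of the other, and this comparison is exactly the relation linking $X$ and $Y$ that I will exploit.

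If $Y$ is a suffix of $X$, write $X=ZY$; then $f(a)=ZYf(b)Y$ and $f(ab)=f(a)f(b)=Z\,Yf(b)\,Yf(b)$, so $f(ab)$ ends with the square $(Yf(b))^2$, which is nonempty because $f(b)\ne\varepsilon$. If instead $X$ is a proper suffix of $Y$, write $Y=Y'X$ with $Y'\ne\varepsilon$; then $f(a)=Xf(b)Y'X$ and $f(aba)=f(a)f(b)f(a)=Xf(b)Y'\,Xf(b)Xf(b)\,Y'X$ contains the factor $(Xf(b))^2$, again a nonempty square since $X\ne\varepsilon$. In both cases the preimage ($ab$, respectively $aba$) is a square-free word of length at most three whose image is not square-free, which is what we want.

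The only real subtlety --- the step I would check most carefully --- is the bookkeeping of the degenerate lengths ($Y=\varepsilon$, or $X=Y$, or $f(b)=\varepsilon$): the dichotomy ``$X$ and $Y$ are comparable suffixes'' must be arranged so that every possibility lands in one of the two displayed constructions, and one must confirm that the indicated blocks really are consecutive in $f(ab)$ and $f(aba)$, which is a one-line length verification. Everything else is immediate from the identity $f(a)=Xf(b)Y$; in particular no appeal to Fine--Wilf or to the equation-solving cases of Proposition~\ref{Lothaire} is needed, since the squares are produced explicitly.
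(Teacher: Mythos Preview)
Your proof is correct and follows the same overall strategy as the paper's: exhibit a square in $f(aba)$ (or $f(ab)$). The paper, however, is more direct and avoids your suffix-comparison dichotomy entirely. Once $a\ne b$ and $X$ is a nonempty suffix of $f(a)$, simply write $f(a)=WX$; then
\[
f(aba)=f(a)\,f(b)\,f(a)=WX\cdot f(b)\cdot Xf(b)Y
\]
already contains the square $(Xf(b))^2$, nonempty because $X\ne\varepsilon$. This single line covers all your subcases at once, including $f(b)=\varepsilon$ and every relation between $|X|$ and $|Y|$; symmetrically, if $Y$ is a nonempty prefix of $f(a)$ then $f(aba)$ contains $(f(b)Y)^2$. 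Your route through ``$X$ and $Y$ are comparable suffixes of $f(a)$'' is a detour---though it does yield the small bonus that when $|Y|\le|X|$ the word $ab$ of length two already witnesses the failure.
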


\begin{proof}

Si $b=a$ alors $f(a)$ est facteur interne de $(f(a))^2$. D'après le lemme~\ref{factint} et la 
remarque~\ref{rem110}, $f(a)$ n'est pas primitif \cad\ $f(a)$ contient un carré.


Si $b \neq a$ et si $X$ est un suffixe non vide de $f(a)$, alors $f(aba)$ contient $Xf(b)Xf(b)$.

Si $b \neq a$ et si $Y$ est un préfixe non vide de $f(a)$ alors $f(aba)$ contient $f(b)Yf(b)Y$.
\end{proof}

\begin{lemma}\label{LemSynch}  {\rm{(Proposition 5.3 dans~\cite{Hsi2003})}}

Soit $f$ un morphisme injectif de $A^*$ vers $B^*$ et soit 
$w$ un mot de $A^+$ de longueur $n \geq2$.
On suppose que $f(w)=u^m$ avec $u \neq \varepsilon$ primitif et $m\geq 2$.

S'il existe deux entiers $1 \leq i_1 < i_2 \leq n$ et un mot $v \in B^*$ tels que
$f(\sub{w}{1..i_1})=u^{\ell_1}v$ et $f(\sub{w}{1..i_2})=u^{\ell_2}v$ pour des entiers $0 \leq \ell_1 < \ell_2 \leq m$
alors $w$ n'est pas primitif.

\end{lemma}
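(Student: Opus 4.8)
The plan is to argue by a pumping (desubstitution) argument: I would replace the factor $z=\sub{w}{i_1+1..i_2}$ that sits between the two prefixes by an arbitrary power $z^{j}$, show that for one well-chosen exponent the image equals that of a power of $w$, and then use injectivity of $f$ to turn this into a word equation that forces $w$ to be conjugate to a power of the \emph{short} factor $z$; a length comparison on primitive roots then forbids $w$ from being primitive.

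First I would fix the notation $p_1=\sub{w}{1..i_1}$, $p_2=\sub{w}{1..i_2}$, $z=\sub{w}{i_1+1..i_2}$ and $y=\sub{w}{i_2+1..n}$, so that $w=p_1zy$ and $p_2=p_1z$. Here $z\neq\varepsilon$ since $i_1<i_2$, and the decisive inequality $|z|=i_2-i_1\leq n-1<|w|$ holds. Writing $d=\ell_2-\ell_1\geq 1$, the identity $f(p_2)=f(p_1)f(z)$ reads $u^{\ell_2}v=u^{\ell_1}vf(z)$, and cancelling the common prefix $u^{\ell_1}$ yields the basic relation $u^{d}v=vf(z)$. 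From $f(w)=f(p_2)f(y)=u^{m}$ I also read off $vf(y)=u^{m-\ell_2}$.

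Next I would iterate the basic relation. A one-line induction from $u^{d}v=vf(z)$ gives $vf(z)^{j}=u^{dj}v$ for every $j\geq 0$, hence $f(p_1z^{j}y)=u^{\ell_1}\,vf(z)^{j}\,f(y)=u^{\ell_1+dj}\,vf(y)=u^{m+d(j-1)}$. Choosing $j=m+1$ makes this $u^{m(d+1)}=(u^{m})^{d+1}=f(w^{d+1})$, so that $f(p_1z^{m+1}y)=f(w^{d+1})$. Since $f$ is injective this is a genuine word equation: $w^{d+1}=p_1z^{m+1}y$.

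Finally I would exploit this equation. Writing $w^{d+1}=w^{d}\cdot w=w^{d}p_1zy$ and comparing with $p_1z^{m+1}y=p_1z\cdot z^{m}y$, I cancel the common suffix $y$ and rearrange to get $w^{d}(p_1z)=(p_1z)z^{m}$. By item~\ref{Lotcase1} of Proposition~\ref{Lothaire} this exhibits $w^{d}$ and $z^{m}$ as conjugate words; by Proposition~\ref{propconjracprim} their primitive roots are conjugate, so in particular $|\rho(w)|=|\rho(w^{d})|=|\rho(z^{m})|=|\rho(z)|\leq|z|<|w|$. Were $w$ primitive we would have $|\rho(w)|=|w|$, contradicting $|\rho(w)|<|w|$; therefore $w$ is not primitive. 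I expect the only delicate point to be the exponent bookkeeping — checking that $j=m+1$ makes the two images literally the same power $u^{m(d+1)}$ and that each cancellation is a legitimate prefix/suffix cancellation — while the concluding conjugacy and length step is the heart of the matter. It is worth noting that this argument uses neither the primitivity of $u$ nor $m\geq 2$ (only $m\geq 1$); what is essential is injectivity of $f$ together with the strict inequality $|z|<|w|$.
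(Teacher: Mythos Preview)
Your argument is correct; the underlying idea --- use $u^{d}v=vf(z)$ to shift the $u$-exponent of an image and then invoke injectivity --- is exactly the paper's. The paper, however, runs it in the opposite direction: it \emph{removes} $z$ (your formula at $j=0$) to form $w_r=p_1y$, obtains $f(w_r)=u^{m-d}$, and matches $f(w_r^{m})=f(w^{m-d})$; injectivity then gives $w_r^{m}=w^{m-d}$, and Corollary~\ref{CorFine2} forces $w$ and $w_r$ to share a primitive root, so $|w_r|<|w|$ ends it in one line. This bypasses your detour through the conjugacy equation $w^{d}(p_1z)=(p_1z)z^{m}$ and Proposition~\ref{propconjracprim}. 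Your ``insertion'' variant ($j=m+1$) buys a minor robustness: it never needs $m-d\geq 1$, a condition the paper uses tacitly when applying Corollary~\ref{CorFine2} (it does hold, since $|w_r|\geq i_1\geq 1$ together with injectivity rules out $f(w_r)=\varepsilon$, but this is not spelled out).
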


\begin{proof}

Soit $w_r=\sub{w}{1..i_1} \sub{w}{i_2+1..n} \neq \varepsilon$.
On a $f(w_r)=u^{m-(\ell_2-\ell_1)}$
avec $|w_r|<|w|$. De plus,
$f((w_r)^m)=f(w^{m-(\ell_2-\ell_1)})$.
Puisque $f$ est injective, on en déduit que $(w_r)^m=w^{m-(\ell_2-\ell_1)}$.

D'après le corollaire~\ref{CorFine2}, on obtient que $w_r$ et $w$ sont des puissances d'un même mot.
Puisque $|w_r|<|w|$, on en déduit que $w$ n'est pas primitif. 
\end{proof}

\begin{proposition} {\rm \cite{Hsi2003,Mit1996,Mit1997}}

Un morphisme sans carré est primitif.

\end{proposition}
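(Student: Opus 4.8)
The plan is to prove the contrapositive: assume a morphism $f$ is *not* primitive, and show that $f$ is not square-free (i.e.\ not sans carré). Concretely, if $f$ is not primitive then there exists a primitive word $w$ with $f(w)=u^m$ for some primitive $u\neq\varepsilon$ and some $m\geq 2$; among all such witnesses I would choose $w$ of minimal length $n=|w|$. The goal is to exhibit a square-free word whose image under $f$ contains a square, thereby contradicting square-freeness. A square-free morphism is in particular sans carré, hence the contrapositive suffices.

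\textbf{Reduction to injectivity and the main dichotomy.} By Lemma~\ref{BifInj}, a morphism that is sans carré is automatically bifixe and injective, so I may assume $f$ is injective (if it is not, it is certainly not sans carré and we are done). With $f$ injective, the minimality of $w$ and Lemma~\ref{LemSynch} become the key technical tools: Lemma~\ref{LemSynch} tells us that we cannot have two prefixes $f(\sub{w}{1..i_1})$ and $f(\sub{w}{1..i_2})$ agreeing ``modulo a full power of $u$'' (i.e.\ of the form $u^{\ell_1}v$ and $u^{\ell_2}v$), for otherwise $w$ would fail to be primitive. This rigidity forces the boundaries between the blocks $u$ and the boundaries between the letter-images $f(\sub{w}{i})$ to interact in a controlled way. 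The first step is to track, as $i$ runs over $1,\dots,n$, the position of the $i$-th letter-image boundary relative to the period $|u|$; minimality plus Lemma~\ref{LemSynch} prevent these ``offsets'' from repeating in a way that would collapse $w$.

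\textbf{Producing a square via Lemma~\ref{Lemme51}.} Once the offsets are pinned down, some letter $a$ of $w$ must have its image $f(a)$ straddling a block boundary of $u^m$ in such a way that $f(a)$ contains, or is contained in, a configuration matching the hypothesis of Lemma~\ref{Lemme51}. Precisely, I would look for two letters $a,b$ (possibly equal) and words $X,Y$ with $f(a)=Xf(b)Y$ where $X$ is a nonempty suffix of $f(a)$ or $Y$ is a nonempty prefix of $f(a)$ — this is exactly the situation Lemma~\ref{Lemme51} rules out for a $3$-sans carré morphism, since then $f(aba)$ (or $f(a)^2$ when $b=a$) contains a square. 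The periodic structure $f(w)=u^m$ with $m\geq 2$ guarantees that the pattern $u$ recurs, and the minimality argument guarantees it recurs in a genuinely overlapping (internal-factor) fashion; matching this recurrence against the letter-image decomposition yields the required $X$, $Y$, $a$, $b$.

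\textbf{Expected main obstacle.} The hard part will be the bookkeeping that guarantees the overlap is \emph{proper} — i.e.\ that $X$ is a genuinely nonempty suffix (or $Y$ a nonempty prefix) of $f(a)$, rather than a degenerate alignment where the block boundaries of $u^m$ coincide exactly with letter-image boundaries. That degenerate case is precisely where Lemma~\ref{LemSynch} does the work: if every $u$-boundary fell on a letter boundary, then $u$ would itself be the image $f(z)$ of a proper factor $z$ of $w$, forcing $w=z^m$ by injectivity and contradicting the primitivity of $w$. Isolating this case and showing that its complement always produces a valid Lemma~\ref{Lemme51} configuration is the crux; everything else is combinatorial routine built on the lemmas already established.
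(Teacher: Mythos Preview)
Your plan is the paper's plan: contrapositive, minimal primitive witness $w$ with $f(w)=u^m$, injectivity from Lemma~\ref{BifInj}, Lemma~\ref{LemSynch} to rule out aligned $u$-blocks, and Lemma~\ref{Lemme51} as the main square-producer. So at the level of strategy you are on target.

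Where your sketch is too optimistic is the sentence ``its complement always produces a valid Lemma~\ref{Lemme51} configuration.'' In the paper's actual case analysis this is not true, and two further devices are needed that you do not mention. First, the proof splits on whether $|f(\sub{w}{1})|>|u|$ (resp.\ $|f(\sub{w}{n})|>|u|$) or both are smaller; in the former case the contradiction does \emph{not} come from Lemma~\ref{Lemme51} but from showing that the border $V_1$ of $f(\sub{w}{1})$ forces $V_1V_2=V_2V_1$ and hence $u$ non-primitive (via Proposition~\ref{Lothaire}(\ref{Lotcase2})). Second, in the remaining case one repeatedly uses square-freeness of $f$ on \emph{two-letter} words: if a factor $f(\sub{w}{i})f(\sub{w}{j})$ contains a square $V^2$, then $\sub{w}{i}\sub{w}{j}$ cannot be square-free, hence $\sub{w}{i}=\sub{w}{j}$. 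These forced letter-equalities are what eventually put you in position to apply Lemma~\ref{Lemme51} (or to reach $V_1V_2=V_2V_1$ again). Without this ``equal-letters'' trick you will not be able to manufacture the hypothesis $f(a)=Xf(b)Y$ with $X$ a suffix (or $Y$ a prefix) of $f(a)$ from the raw periodicity alone.

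So: right skeleton, but the flesh is a genuine three-case analysis (on the size of $|f(\sub{w}{1})|$, $|f(\sub{w}{n})|$ versus $|u|$) in which Lemma~\ref{Lemme51}, commutation arguments on $u$, and square-freeness on length-$2$ words all play a role. Your proposal would benefit from stating that case split and the two-letter trick explicitly before calling the rest ``combinatorial routine.''
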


\begin{proof}

Soit $f$ un morphisme sans carré de $A^*$ vers $B^*$.
D'après le lemme~\ref{BifInj}, on a donc $f$ injectif.

Soit $w$ un mot de longueur $n$ tel que $f(w)=u^m$ pour un mot primitif $u$ et un entier $m \geq 2$.

Par contradiction, on suppose que $w$ est primitif. Et on suppose de plus que la longueur de $w$ est minimale. De part le lemme~\ref{LemSynch},
cela signifie par exemple qu'il n'existe pas d'entiers $i \leq j$ tels que 
$|f(\sub{w}{i..j})|$ soit un multiple de $|u|$.

Remarquons aussi que, puisque $f$ est sans carré, 
pour tout lettre $\sub{w}{i}$ de $w$, on a $|f(\sub{w}{i})|<2|u|$ car sinon, $f(\sub{w}{i})$ contiendrait le carré
d'un conjugué de $u$. 

%
%
%

\textit{Cas 1 :} $|u^{j}| < |f(\sub{w}{1})|$  pour un entier $j \geq 1$.

Dans ce cas, comme signalé juste avant, $u^{j}$ étant un préfixe de $f(\sub{w}{1})$ et, 
puisque  ce dernier est sans carré, on a $j=1$.

Il existe un mot $V_1$ de $B^+$ tel que $f(\sub{w}{1})=uV_1$ et $V_1f(\sub{w}{2..n})=u^{m-1}$.

Puisque $V_1$ est un préfixe de $u^{m-1}$ et puisque $|f(\sub{w}{1})|<2|u|$, on a $|V_1| < |u|$ \cad\ 
$V_1$ préfixe de $u$.
Cela implique que $V_1$ est un bord de $f(\sub{w}{1})$.

Si $|f(\sub{w}{n})|<|u|$, il existe un mot $V'$ tel que $f(\sub{w}{1})=uV_1=V'f(\sub{w}{n})V_1$
avec $V_1$ préfixe de $f(\sub{w}{1})$.
D'après le lemme~\ref{Lemme51}, $f$ ne serait pas sans carré~: une
contradiction avec les hypothèses.
On a donc $|f(\sub{w}{n})|>|u|$. De plus, puisque $f(\sub{w}{n}\sub{w}{1})$ contient $uu$ et puisque
$f$ est sans carré, on en déduit que $\sub{w}{n}=\sub{w}{1}$.

Soit $V_2$ le mot tel que $u=V_1V_2$. On obtient que $f(\sub{w}{1})=uV_1=V_1V_2V_1$ 
(on a donc en particulier que $V_2$ est non vide).
Mais $f(\sub{w}{1})=f(\sub{w}{n})$ finit par $u=V_1V_2$. D'après la remarque~\ref{RemLot},
cela signifie que $u$ n'est pas primitif~: ce qui est contraire aux hypothèses.


\vsu
\textit{Cas 2 : } $|u^{j}| < |f(\sub{w}{n})|$  pour un entier $j \geq 1$.

Ce cas se traite exactement de la même façon que le cas 1.

\vsu
\textit{Cas 3 : } $|u| > |f(\sub{w}{1})|$ et $|u| > |f(\sub{w}{n})|$.

Soit $i$ le plus petit entier tel que $|f(\sub{w}{1..i})|>|u|$.
Il existe un suffixe non vide $V_1$ de $u$ et un préfixe non vide $V_2$ de $u^{m-1}$ tels que 
$u=f(\sub{w}{1..i})V_1$, $f(\sub{w}{i})=V_1V_2$ et $u^{m-1}=V_2f(\sub{w}{i+1..n})$.
%
%
%
%

\textit{Cas 3.1 : } $|V_2|>|f(\sub{w}{1})|$.

Soit $\ell$ le plus petit entier tel que $|f(\sub{w}{1..\ell})|>|V_2|$.
Il existe un préfixe $V_2'$ de $f(\sub{w}{\ell})$ qui est suffixe de $V_2$.
Comme $f(\sub{w}{i})f(\sub{w}{\ell})$ contient $(V_2')^2$, et comme $f$ est sans carré, on a nécessairement 
$\sub{w}{\ell}=\sub{w}{i}$.
On a donc $f(\sub{w}{i})=V_1V_2=V_1f(\sub{w}{1..\ell-2})f(\sub{w}{\ell-1})V_2'$ avec 
$V_2'$ préfixe de $f(\sub{w}{i})$.
D'après le lemme~\ref{Lemme51}, $f$ ne serait pas sans carré~: une
contradiction avec les hypothèses.

\textit{Cas 3.2 : } $|V_2| \leq |f(\sub{w}{1})|$.

Puisque $f(\sub{w}{i})f(\sub{w}{1})$ contient $(V_2)^2$ cela implique
que $\sub{w}{1}=\sub{w}{i}$ et donc que $V_2$ est un suffixe de $f(\sub{w}{i})$.

Si $|V_1| > |f(\sub{w}{n})|$, alors il existe un mot non vide $V_1'$ tel que $V_1=V_1'f(\sub{w}{n})$.
On obtient $f(\sub{w}{i})=V_1'f(\sub{w}{n})V_2$. D'après le lemme~\ref{Lemme51}, $f$ ne serait pas sans 
carré~: une contradiction avec les hypothèses.

Si $|V_1| \leq  |f(\sub{w}{n})|$, puisque $f(\sub{w}{n})f(\sub{w}{i})$ contient $(V_1)^2$, et comme $f$ est sans carré, 
on a nécessairement $\sub{w}{n}=\sub{w}{i}$.
Ce qui implique que $V_2V_1=f(\sub{w}{i})=V_1V_2$ et que $f(\sub{w}{i})$ contient un carré~: une contradiction 
avec les hypothèses.
\end{proof}

\begin{cor}
Un morphisme uniforme sans puissance $k (\geq 2)$ est primitif.
\end{cor}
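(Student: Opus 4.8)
Le plan est très court~: il s'agit d'un simple découpage en deux cas selon la valeur de $k$, chacun étant couvert par un résultat démontré précédemment. Comme on suppose $k \geq 2$, les deux cas $k=2$ et $k \geq 3$ recouvrent toutes les valeurs possibles, de sorte qu'il suffit d'invoquer pour chacun l'énoncé idoine et d'en vérifier l'applicabilité.

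Dans le cas $k \geq 3$, il n'y a à proprement parler rien à faire~: la conclusion est exactement celle de la proposition qui précède immédiatement, affirmant qu'un morphisme uniforme sans puissance $k$ (avec $k \geq 3$) est primitif. On applique donc directement ce résultat, lui-même déduit par l'argument de la proposition~\ref{PuisskPrim} en remplaçant la proposition~\ref{puissancek} par la proposition~\ref{puissancekunif}.

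Dans le cas $k=2$, l'hypothèse ``sans puissance $2$'' n'est autre que ``sans carré''. Puisqu'un morphisme uniforme est a fortiori un morphisme, on peut appliquer telle quelle la proposition ``Un morphisme sans carré est primitif'' établie juste avant (à l'aide des lemmes~\ref{Lemme51} et~\ref{LemSynch}), ce qui fournit immédiatement la primitivité. On remarquera d'ailleurs que l'uniformité est ici superflue~: elle n'est indispensable que pour $k=3$ et $k=4$, valeurs pour lesquelles on ne dispose pas (encore) d'un analogue dans le cas non uniforme, la proposition~\ref{PuisskPrim} ne couvrant que $k \geq 5$.

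Je ne vois donc aucun obstacle technique réel~: tout l'effort a été fourni en amont, dans la proposition sur les morphismes sans carré et dans celle sur les morphismes uniformes sans puissance $k \geq 3$. La seule précaution à prendre est de s'assurer que le recouvrement des cas est exhaustif, c'est-à-dire qu'aucune valeur de $k \geq 2$ n'échappe à l'un des deux énoncés invoqués~; c'est manifestement le cas puisque $\{2\} \cup \{k : k \geq 3\} = \{k : k \geq 2\}$.
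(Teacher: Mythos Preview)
Your proposal is correct and matches the paper's intention: the corollary is stated without proof in the paper, precisely because it follows immediately by combining the proposition on square-free morphisms (for $k=2$) with the proposition on uniform morphisms sans puissance $k$ avec $k \geq 3$, exactly as you describe. Your remark that uniformity is superfluous for $k=2$ is also pertinent.
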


\bibliography{bf.bib}

\begin{thebibliography}{10}

\bibitem{HN2004}
Tero~Harju et~Dirk~Nowotka.
\newblock The equation$x^{i}=y^{j}z^{k}$ in a free semigroup.
\newblock {\em Semigroup Forum}, 68(3):488--490, 2004.

\bibitem{DH2006}
Pál~Dömösi et~Géza~Horváth.
\newblock Alternative proof of the lyndon–schützenberger theorem.
\newblock {\em Theor. Comput. Sci.}, 366(3):194--198, 2006.

\bibitem{Hsi2003}
HK~Hsiao, YT~Yeh, and SS~Yu.
\newblock Square-free-preserving and primitive-preserving homomorphisms.
\newblock {\em Acta Mathematica Hungarica}, 101(1):113--130, 2003.

\bibitem{Ker1986}
V.~Ker\"anen.
\newblock On the $k$-freeness of morphisms on free monoids.
\newblock Annales Academiae Scientarium Fennicae 61, Series A, 1986.

\bibitem{Lec1985}
M.~Leconte.
\newblock {\em Codes sans r\'{e}p\'{e}tition}.
\newblock PhD thesis, LITP Universit\'{e} Paris 6, october 1985.

\bibitem{LS1967}
A.~Lentin and M.P. Schützenberger.
\newblock A combinatorial problem in the theory of free monoids.
\newblock {\em Proc. University of North Carolina}, 01 1967.

\bibitem{Lot1983}
M.~Lothaire.
\newblock {\em Combinatorics on words}, volume~17 of {\em Encyclopedia of
  Mathematics}.
\newblock Addison-Wesley, 1983.
\newblock Reprinted in 1997 by Cambridge University Press in the Cambridge
  Mathematical Library, Cambridge, UK, 1997.

\bibitem{Lot2002}
M.~Lothaire.
\newblock {\em Algebraic Combinatorics on words}, volume~90 of {\em
  Encyclopedia of Mathematics}.
\newblock Cambridge University Press, Cambridge, UK, 2002.

\bibitem{Lyn1962}
R.~C. Lyndon and M.~P. Sch{\"u}tzenberger.
\newblock {The equation $a^M=b^Nc^P$ in a free group.}
\newblock {\em Michigan Mathematical Journal}, 9(4):289 -- 298, 1962.

\bibitem{Mit1996}
Victor Mitrana.
\newblock On morphisms preserving primitive words.
\newblock Technical Report number 69, Turku Centre for Computer Science,
  Faculty of Mathematics, University of Bucharest, 1996.

\bibitem{Mit1997}
Victor Mitrana.
\newblock Primitive morphisms.
\newblock {\em Information Processing Letters}, 64(6):277--281, 1997.

\bibitem{Sha2008}
Jeffrey Shallit.
\newblock {\em A Second Course in Formal Languages and Automata Theory}.
\newblock 01 2008.

\bibitem{Wla2001}
Francis Wlazinski.
\newblock A test-set for $k$-power-free binary morphisms.
\newblock {\em TIA}, 35:437--452, 2001.

\bibitem{Wla2015a}
Francis Wlazinski.
\newblock Reduction in non-$(k+1)$-power-free morphisms.
\newblock {\em RAIRO Theor. Inform. Appl.}, Volume 50, Number 1, January-March
  2016, Special issue dedicated to the 15th "Journ\'ees montoises
  d'informatique th\'eorique":3--20, 2016.

\bibitem{Wla2017}
Francis Wlazinski.
\newblock A uniform cube-free morphism is $k$-power-free for all integers $k
  \geq 4$.
\newblock {\em RAIRO-Theor. Inf. Appl.}, 51(4):205--216, 2017.

\bibitem{Wla23:hal}
Francis Wlazinski.
\newblock {A $k$-power-free morphism is a $(k + 1)$-power-free morphism for any
  integer $k \geq 5$}.
\newblock Working paper, September 2023.

\end{thebibliography}

\end{document}